\newcommand{\mcH}{\mathcal{H}}
\newcommand{\mcG}{\mathcal{G}}
\newcommand{\mcC}{\mathcal{C}}
\titlespacing{\section}{0pt}{0.3em}{0.2em} % left margin, space before, space after
\titlespacing{\subsection}{0pt}{0.3em}{0.15em} % left margin, space before, space after
\titlespacing{\subsubsection}{0pt}{0.3em}{0.5em} % left margin, space before, space after (horizontal)
\definecolor{ForestGreen}{rgb}{0.0333,0.4451,0.0333}
\definecolor{DarkRed}{rgb}{0.65,0,0}
\definecolor{Red}{rgb}{1,0,0}
\newcommand{\eps}{\ensuremath{\epsilon}}
\DeclareMathOperator*{\argmin}{arg\,min}
\newcommand{\poly}{\ensuremath{\mathsf{poly}}}
\newcommand{\polylog}{{\ensuremath\mathsf{polylog}}}
\newcommand{\R}{\ensuremath{\mathbb{R}}}
\newcommand{\opt}{\mathsf{OPT}}
\newcommand{\N}{\mathcal{N}}
\newcommand\dist[2]{D(#1,#2)}
\newcommand{\best}[1]{\underline{\textbf{#1}}}
\newcommand{\cent}[1]{\text{cent}(#1)}
\newcommand{\ANN}{\ensuremath{\textsc{NonAdaptiveAnn}}\xspace}
\newcommand{\ANNInsert}{\textsc{insert}}
\newcommand{\ANNDelete}{\textsc{delete}}
\newcommand{\ANNQuery}{\textsc{query}}
\newtheorem{theorem}{Theorem}
\newtheorem{observation}{Observation}
\newtheorem{lemma}{Lemma}
\newtheorem{definition}{Definition}
\newcounter{mnote}[section]
\title{
Efficient Centroid-Linkage Clustering
}
\author{%
MohammadHossein Bateni\\%
Google Research\\%
New York, USA%
\and%
Laxman Dhulipala\\%
University of Maryland\\%
College Park, USA%
\and%
Willem Fletcher\\%
Brown University\\
Providence, USA%
\and%
Kishen N Gowda\\%
University of Maryland\\
College Park, USA%
\and%
D Ellis Hershkowitz\\%
Brown University\\
Providence, USA%
\and%
Rajesh Jayaram\\%
Google Research\\
New York, USA%
\and%
\hspace{11mm}Jakub Łącki\\%
\hspace{11mm}Google Research\\
\hspace{11mm}New York, USA%
% jlacki@google.com%
}
\date{}
\begin{document}
\maketitle

\begin{abstract}
%Centroid-Linkage Hierarchical Agglomerative Clustering (HAC) is a well-studied clustering method. The algorithm starts by putting each input point in a separate cluster.
%Then, it repeatedly finds the closest pair of cluster centroids and merges the corresponding clusters.

We give an efficient algorithm for Centroid-Linkage Hierarchical Agglomerative Clustering (HAC), which computes a $c$-approximate clustering in roughly $n^{1+O(1/c^2)}$ time.
We obtain our result by combining a new Centroid-Linkage HAC algorithm with a novel fully dynamic data structure for nearest neighbor search which works under adaptive updates.

We also evaluate our algorithm empirically.
By leveraging a state-of-the-art nearest-neighbor search library, we obtain a fast and accurate Centroid-Linkage HAC algorithm. Compared to an existing state-of-the-art exact baseline, our implementation maintains the clustering quality while delivering up to a $36\times$ speedup due to performing fewer distance comparisons.
\end{abstract}

% \thispagestyle{empty}
% \newpage
% \setcounter{page}{1}

\section{Introduction}\label{sec:intro}
Hierarchical Agglomerative Clustering (HAC) is a widely-used clustering method, which is available in many standard data science libraries such as SciPy~\cite{virtanen2020scipy}, scikit-learn~\cite{pedregosa2011scikit}, fastcluster~\cite{mullner2013fastcluster},  Julia~\cite{Clusteringjl}, R~\cite{hclust}, MATLAB~\cite{mathworks_linkage}, Mathematica~\cite{mathematica} and many more \cite{murtagh2012algorithms,murtagh2017algorithms, shearer1985alglib}. 
HAC takes as input a collection of $n$ points in $\mathbb{R}^d$. 
Initially, it puts each point in a separate cluster, and then proceeds in up to $n-1$ steps.
In each step, it merges the two ``closest'' clusters by replacing them with their union.

The formal notion of closeness is given by a \emph{linkage function}; choosing different linkage functions gives different variants of HAC.
In this paper, we study \emph{Centroid-Linkage HAC} wherein the distance between two clusters is simply the distance between their centroids.
This method is available in most of the aforementioned libraries.
Other common choices include \emph{single-linkage} (the distance between two clusters is the \emph{minimum} distance between a point in one cluster and a point in the other cluster) or \emph{average-linkage} (the distance between two clusters is the \emph{average} pairwise distance between them).

HAC's applicability has been hindered by its limited scalability.
Specifically, running HAC with any popular linkage function requires essentially quadratic time under standard complexity-theory assumptions.
This is because (using most linkage functions) the first step of HAC is equivalent to finding the closest pair among the input points.
In high-dimensional Euclidean spaces, this problem was shown to (conditionally) require $\Omega(n^{2-\alpha})$ time for any constant $\alpha > 0$~\cite{karthik2020closest}.
In contrast, solving HAC in essentially $\Theta(n^2)$ time is easy for many popular linkage functions (including centroid, single, average, complete, Ward), as it suffices to store all-pairs distances between the clusters in a priority queue and update them after merging each pair of clusters.

In this paper, we show how to bypass this hardness for Centroid-Linkage HAC by allowing for approximation.
Namely, we give a subquadratic time algorithm which, instead of requiring that the two closest clusters be merged in each step, allows for any two clusters to be merged if their distance is within a factor of $c\geq 1$ of that of the two closest clusters.
%Under this assumption, we give a Centroid-Linkage HAC algorithm that runs in . 

\paragraph{Contribution 1: Meta-Algorithm for Centroid-Linkage.}~Our first contribution is a simple meta-algorithm for approximate Centroid-Linkage HAC.
The algorithm assumes access to a dynamic data structure for approximate nearest-neighbor search (ANNS).
A dynamic ANNS data structure maintains a collection of points $S \subset \mathbb{R}^d$ subject to insertions and deletions and given any query point $u \in S$ returns an approximate nearest neighbor $v \in S$. 
Formally, for a $c$-approximate NNS data structure, $v$ satisfies $\dist{u}{v} \leq c \cdot \min_{x \in S \setminus \{u\}} \dist{u}{x}$.\footnote{We note that the notion of ANNS that we use is, in fact, slightly more general than this.}
Here, $\dist{\cdot}{\cdot}$ denotes the Euclidean distance.
By using different ANNS data structures we can obtain different Centroid-Linkage HAC algorithms (which is why we call our method a \emph{meta-}algorithm). We make use of an ANNS data structure where the set of points $S$ is the centroids of the current HAC clusters. Roughly, our algorithm runs in time $\tilde{O}(n)$ times the time it takes an ANNS data structure to update or query.

While finding distances is the core part of running HAC, we note that access to an ANNS data structure for the centroids does not immediately solve HAC.
Specifically, an ANNS data structure can efficiently find an (approximate) nearest neighbor of a single point, but running HAC requires us to find an (approximately) closest pair among all \emph{clusters}.
Furthermore, HAC must use the data structure in a dynamic setting, and so an update to the set of points $S$ (caused by two clusters merging) may result in many points changing their nearest neighbors.

Our meta-algorithm requires that the dynamic ANNS data structure works under \emph{adaptive updates}.\footnote{Other papers sometimes refer to this property as working against an \emph{adaptive adversary}.}
Namely, it has to be capable of handling updates, which are dependent on the prior answers that it returned (as opposed to an ANNS data structure which only handles ``oblivious'' updates).
%\laxman{Maybe we should say that this is a particular challenge for state-of-the-art ANNS data structures which crucially use randomization. The first sentence of the next para sort of says this, but we can be more direct.}

To illustrate this, consider a randomized ANNS data structure $\mathcal{D}$.
Clearly, a query to $\mathcal{D}$ often has multiple correct answers, as $\mathcal{D}$ can return \emph{any} near neighbor within the promised approximation bound.
As a result, an answer to a query issued to $\mathcal{D}$ is dependent on the internal randomness of $\mathcal{D}$.
Let us assume that $\mathcal{D}$ is used within a centroid-linkage HAC algorithm $\mathcal{A}$ and upon some query returns a result $u$.
Then, algorithm $\mathcal{A}$ uses $u$ to decide which two clusters to merge, and the centroid $p$ of the newly constructed cluster is inserted into $\mathcal{D}$.
Since $p$ depends on $u$, which in turn is a function of the internal randomness of $\mathcal{D}$, we have that the point that is inserted into $\mathcal{D}$ is dependent on the internal randomness of the data structure.
Hence, it is not sufficient for $\mathcal{A}$ to return a correct answer for each \emph{fixed} query with high probability (meaning at least $1-1/\poly(n)$).
Instead, $\mathcal{A}$ must be able to handle queries and updates dependent on its internal randomness.
We note that a similar issue is prevalent in many cases when a randomized data structure is used as a building block of an algorithm.
As a result, the subtle notion of adaptive updates is a major area of study~\cite{wajc2020rounding, beimel2022dynamic, nanongkai2017dynamic, karczmarz2019reliable, kapron2013dynamic}.

\paragraph{Contribution 2: Dynamic ANNS Robust to Adaptive Updates.}~Our second contribution is a dynamic ANNS data structure for high-dimensional Euclidean spaces which, to the best of our knowledge, is the first one to work under \emph{adaptive updates}.

To obtain our ANNS data structure, we show a black-box reduction from an arbitrary randomized dynamic ANNS data structure to one that works against adaptive updates (see \cref{thm:dynAnnReduction}).
The reduction increases the query and update times by only a $O(\log n)$ factor.
We apply this reduction to a (previously known) dynamic ANNS data structure which is based on locality-sensitive hashing~\cite{andoni2009nearest,har2012approximate} and requires non-adaptive updates.

By combining our dynamic ANNS data structure for adaptive updates with our meta-algorithm, we obtain an $O(c)$-approximate Centroid-Linkage HAC algorithm which runs in time roughly $n^{1+1/c^2}$.

Furthermore, our data structure for dynamic ANNS will likely have applications beyond our HAC algorithm. Specifically, ANNS is commonly used as a subroutine to speed up algorithms for geometric problems. Oftentimes, and similarly to HAC, formal correctness of these algorithms require an ANNS data structure that works for adaptive updates but prior work often overlooks this issue. Our new data structure can be used to fix the analysis of such works. For instance, an earlier work on subquadratic HAC algorithms for average- and Ward-linkage~\cite{abboudhac} overlooked this issue and our new data structure fixes the analysis of this work.\footnote{We contacted the authors of~\cite{abboudhac} and they confirmed this gap in their analysis.} Similarly, a key bottleneck in the dynamic $k$-centers algorithm in \cite{bateni2023optimal} was that it had to run nearest neighbor search over the entire dataset (instead of just the $k$-centers). By replacing the ANNS data structure used therein with our ANNS algorithm, we believe that that the $n^\eps$ running times from that paper could be improved to $k^\eps$. %\enote{Edit this down further; make not redundant with Kuba para; also hedge this more so looks like less of a claim}% by only requiring correctness of the algorithm against the current $k$ centers.
% \raj{added something here, happy to revise if too long -- PTAL}
% \enote{I slightly edited it down}

\paragraph{Contribution 3: Empirical Evaluation of Centroid-Linkage HAC.}~Finally, we use our Centroid-Linkage meta algorithm to obtain an efficient HAC implementation.
We use a state-of-the-art static library for ANNS that we extend to support efficient dynamic updates when merging centroids.
%\kuba{Laxman: discuss how we extend Vamana to support dynamic updates}

We empirically evaluate our algorithm and find that our approximate algorithm achieves strong fidelity with respect to the exact centroid algorithm, obtaining ARI and NMI scores that are within 7\% of that of the exact algorithm, while achieving up to 36$\times$ speedup over the exact implementation, even when both implementations are run sequentially.

%\kuba{Replace x and b above with actual numbers}\kishen{added}

We note that the ANNS we use in the empirical evaluation of our meta-algorithm is different from our new dynamic ANNS (Contribution 2).
This is because modern practical methods for ANNS (e.g., based on graph-based indices~\cite{subramanya2019diskann,ANNScaling}) have far surpassed the efficiency of data structures for which rigorous theoretical bounds are known.
Closing this gap is a major open problem.

\subsection{Related Work}
Improving the efficiency of HAC has been an actively researched problem for over 40 years~\cite{benzecri, murtagh2012algorithms, 
murtagh1983survey, murtagh2017algorithms}.
A major challenge common to multiple linkage functions has been to improve the running time beyond $\Theta(n^2)$, which is the time it takes to compute all-pairs distances between the input points.
For the case of low-dimensional Euclidean spaces, a running time of $o(n^2)$ is possible for the case of squared Euclidean distance and Ward linkage, since the ANNS problem becomes much easier in this case~\cite{yu2021parchain}.
However, breaking the $\Theta(n^2)$ barrier is (conditionally) impossible in the case of high-dimensional Euclidean spaces (without an exponential dependence on the dimension)~\cite{karthik2020closest}.
%\laxman{TODO for Laxman: citation?}

To bypass this hardness result, a recent line of work focused on obtaining \emph{approximate} HAC algorithms~\cite{lattanzi2020framework}.
Most importantly, Abboud, Cohen-Addad and Houdrouge~\cite{abboudhac} showed an approximate HAC algorithm which runs in subquadratic time for Ward and average linkage.
These algorithms rely on the fact in the case of these linkage functions the minimum distance between two clusters can only increase as the algorithm progresses, which is not the case for centroid linkage; e.g.\ consider 3 equi-distant points as in \Cref{fig:nonMon}.
\begin{figure}[h]
    % \centering
    % \begin{subfigure}[b]{0.19\textwidth}
    %     \centering
    %     \includegraphics[width=\textwidth,trim=0mm 0mm 220mm 0mm, clip]{./}
    %     \caption{Input points.}\label{sfig:nonMon1}
    % \end{subfigure} \hfill
    \begin{subfigure}[b]{0.24\textwidth}
        \centering
        \includegraphics[width=\textwidth,trim=0mm 0mm 220mm 0mm, clip]{./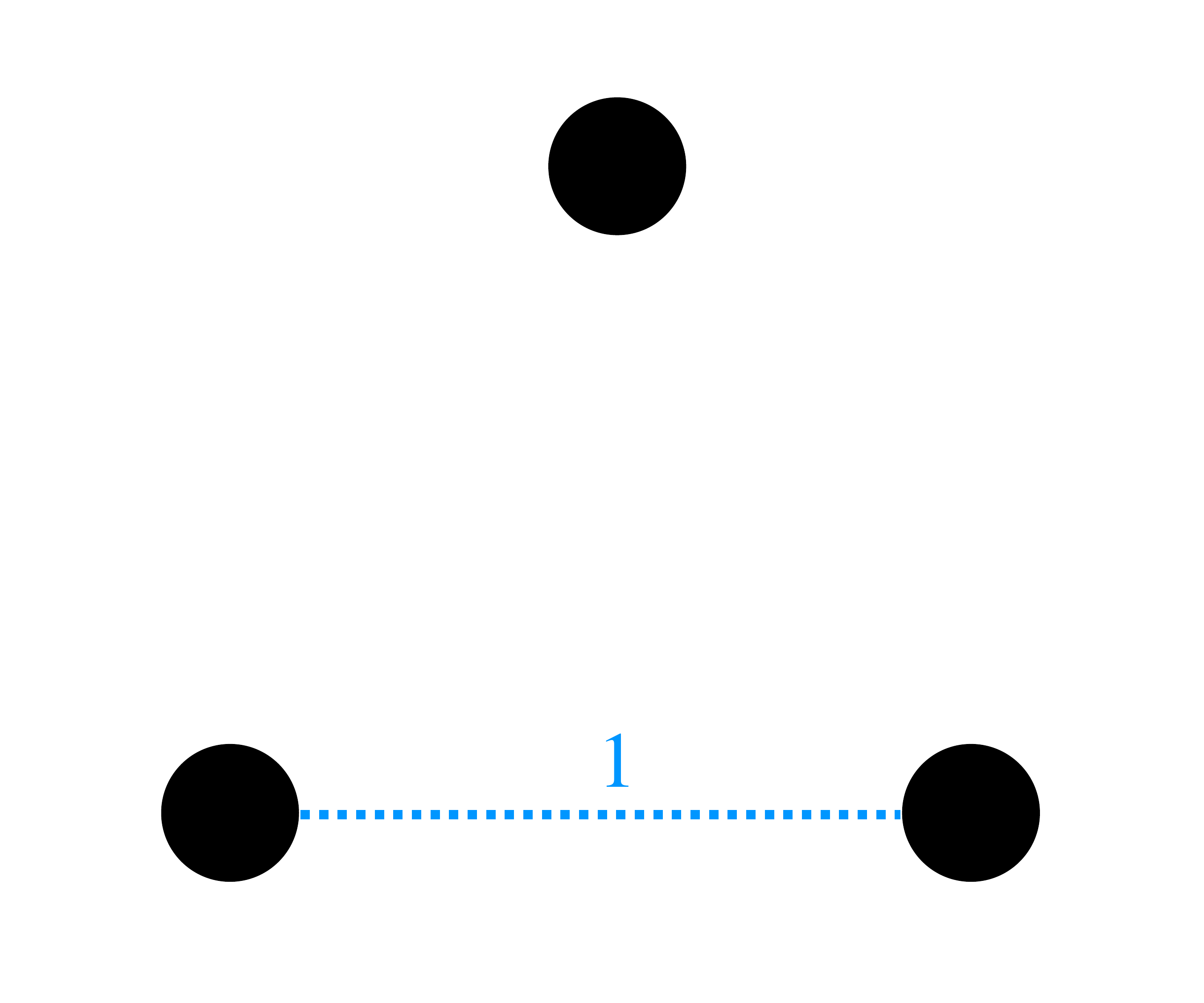}
        \caption{First merge.}\label{sfig:nonMon2}
    \end{subfigure}  \hfill
    \begin{subfigure}[b]{0.24\textwidth}
        \centering
        \includegraphics[width=\textwidth,trim=0mm 0mm 220mm 0mm, clip]{./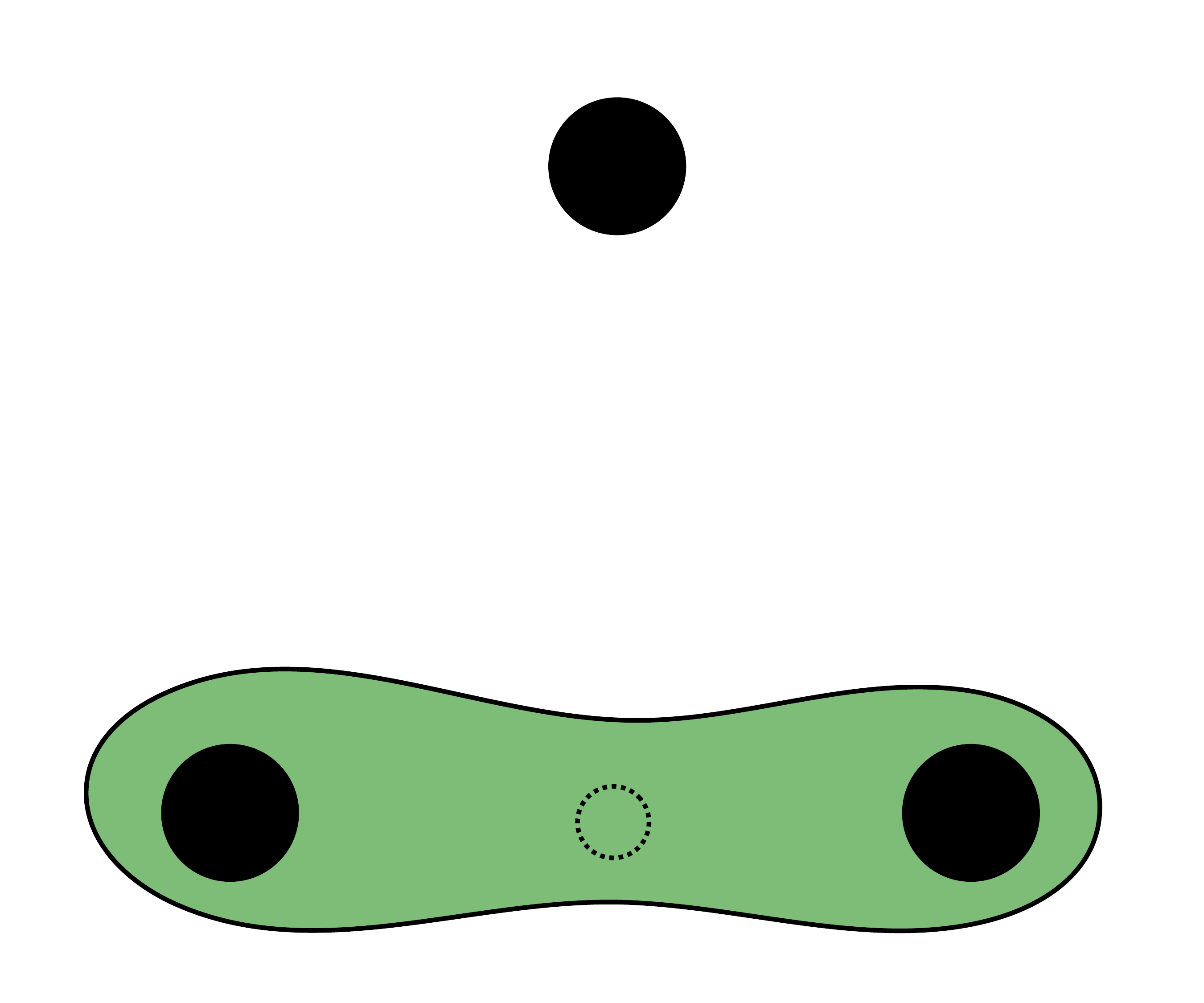}
        \caption{Result of merge.}\label{sfig:nonMon3}
    \end{subfigure}   \hfill
    \begin{subfigure}[b]{0.24\textwidth}
        \centering
        \includegraphics[width=\textwidth,trim=0mm 0mm 220mm 0mm, clip]{./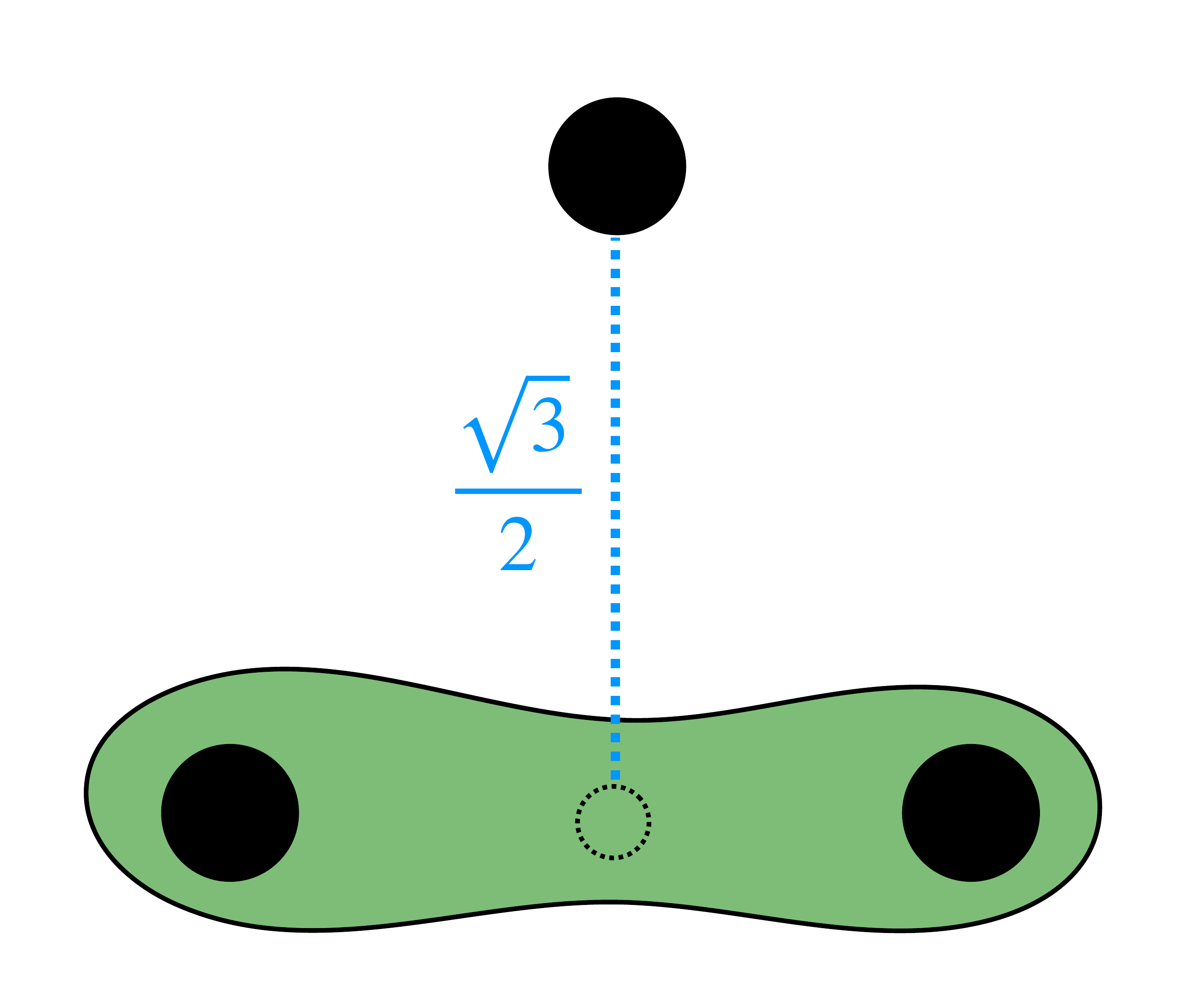}
        \caption{Second merge.}\label{sfig:nonMon4}
    \end{subfigure}
    \begin{subfigure}[b]{0.24\textwidth}
        \centering
        \includegraphics[width=\textwidth,trim=0mm 0mm 220mm 0mm, clip]{./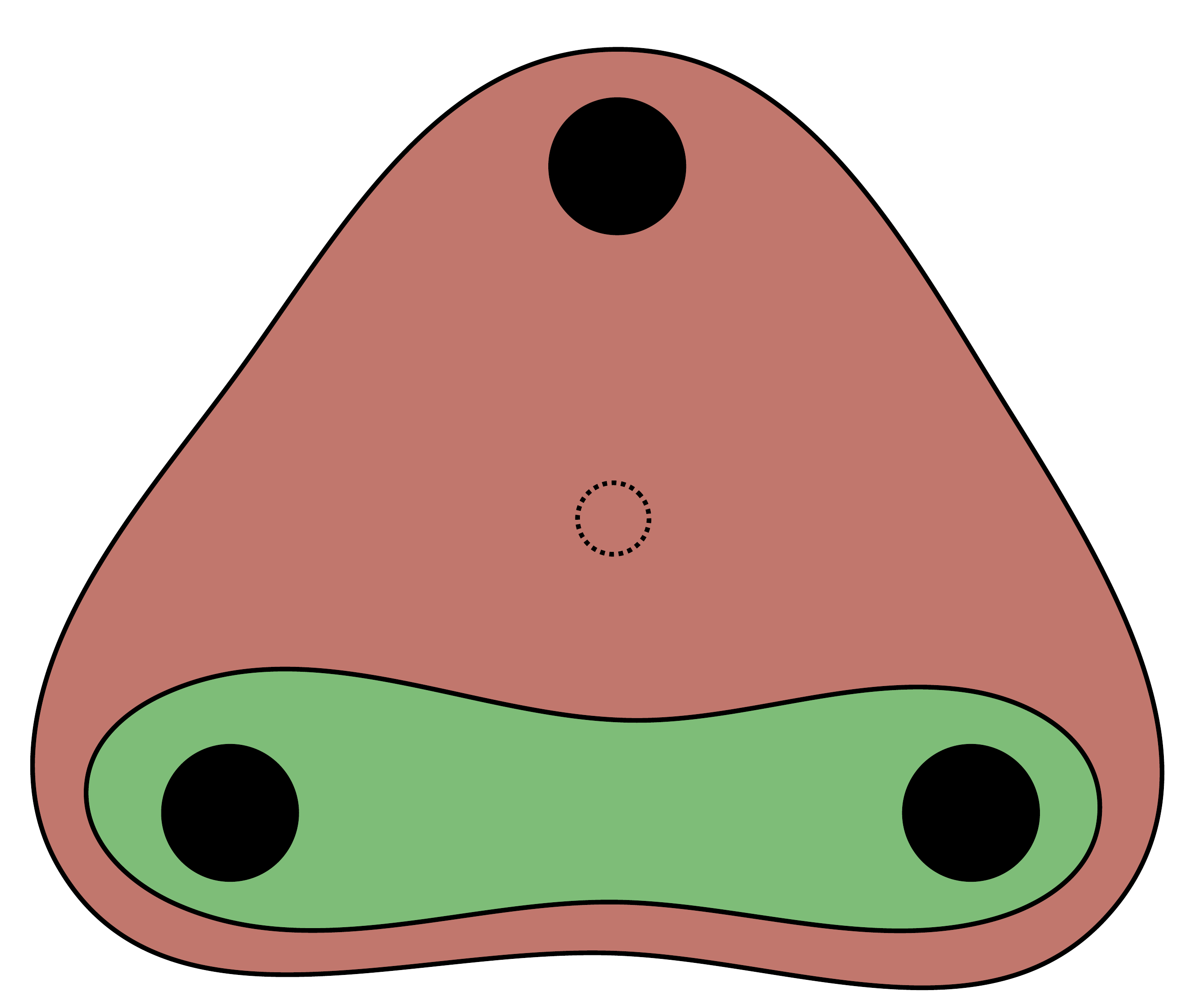}
        \caption{Result of merge.}\label{sfig:nonMon5}
    \end{subfigure}
    \caption{3 points in $\mathbb{R}^2$ initially all at distance $1$ showing Centroid-Linkage HAC merge distances can decrease. \ref{sfig:nonMon2} / \ref{sfig:nonMon3} and \ref{sfig:nonMon4} / \ref{sfig:nonMon5} give the the first and second merges with distances $1$ and $\sqrt{3}/2 < 1$ respectively.} \label{fig:nonMon}
\end{figure}
We also note that, as mentioned earlier, our dynamic ANNS data structure fixes a gap in the analysis of both algorithms in the paper. In fact, not only can the minimum distances shrink when performing Centroid-Linkage HAC, but when doing $O(1)$-approximate Centroid-Linkage HAC distances can get arbitrarily small; see \Cref{fig:apxMerge}. This presents an additional issue for dynamic ANNS data structures which typically assume lower bounds on minimum distances.

\begin{figure}
    \centering
    \begin{subfigure}[b]{0.24\textwidth}
        \centering
    %    \includegraphics[width=\textwidth,trim=0mm 0mm 0mm 180mm, clip]{./}
    %     \caption{Input points.}\label{sfig:apxMerge1}
    % \end{subfigure} \hfill
    % \begin{subfigure}[b]{0.19\textwidth}
    %     \centering
        \includegraphics[width=\textwidth,trim=0mm 0mm 0mm 180mm, clip]{./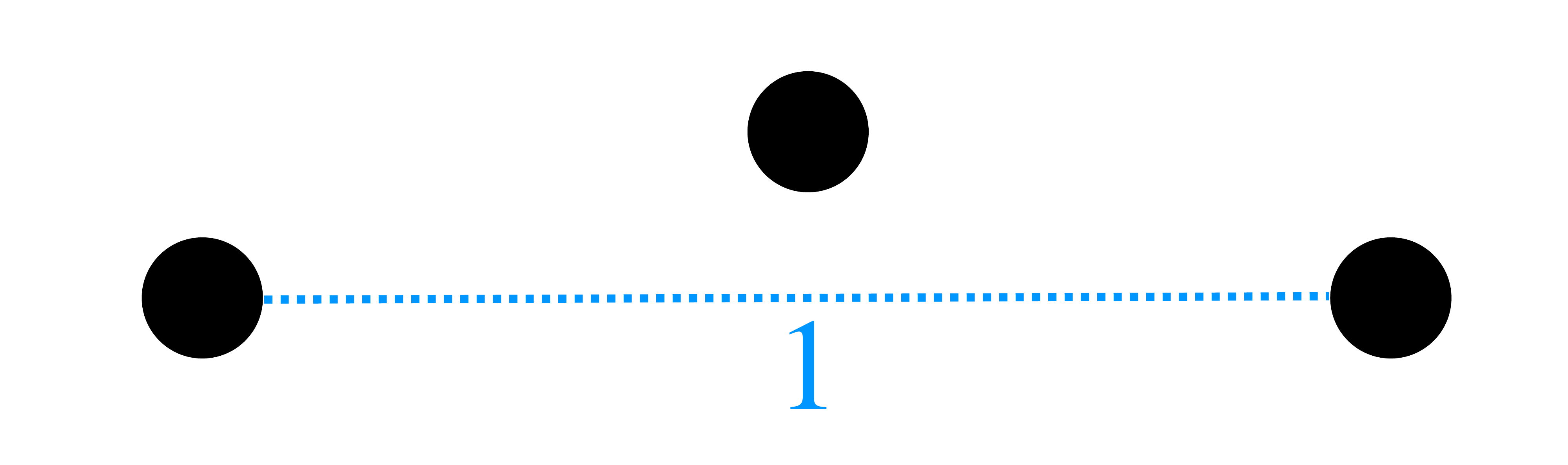}
        \caption{First merge.}\label{sfig:apxMerge2}
    \end{subfigure}  \hfill
    \begin{subfigure}[b]{0.24\textwidth}
        \centering
        \includegraphics[width=\textwidth,trim=0mm 0mm 0mm 180mm, clip]{./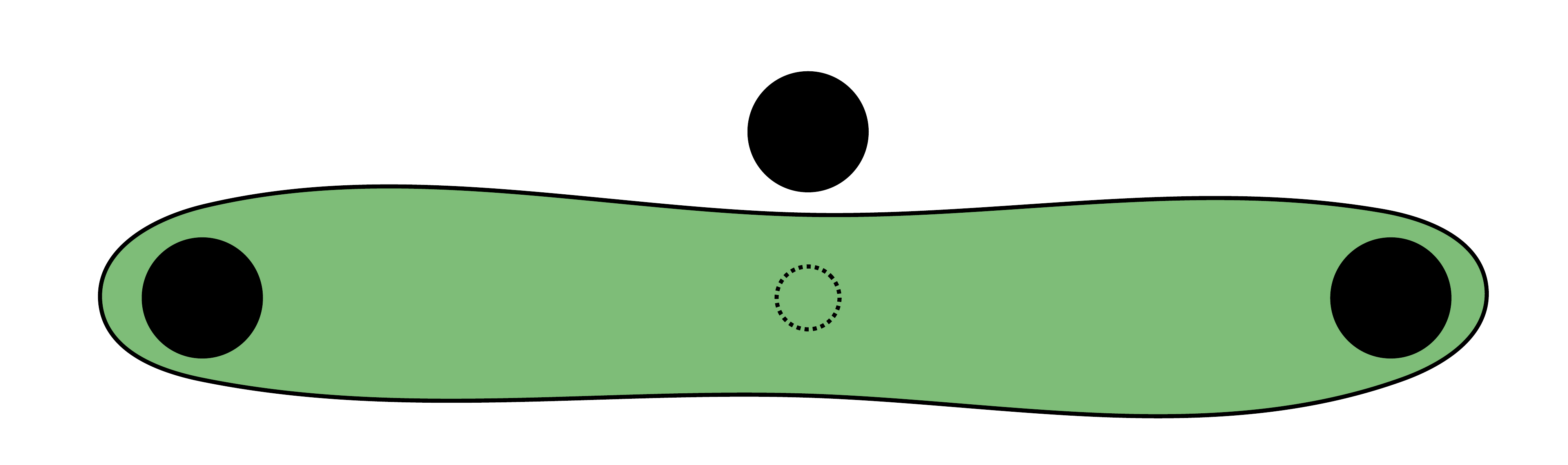}
        \caption{Result of merge.}\label{sfig:apxMerge3}
    \end{subfigure}   \hfill
    \begin{subfigure}[b]{0.24\textwidth}
        \centering
        \includegraphics[width=\textwidth,trim=0mm 0mm 0mm 180mm, clip]{./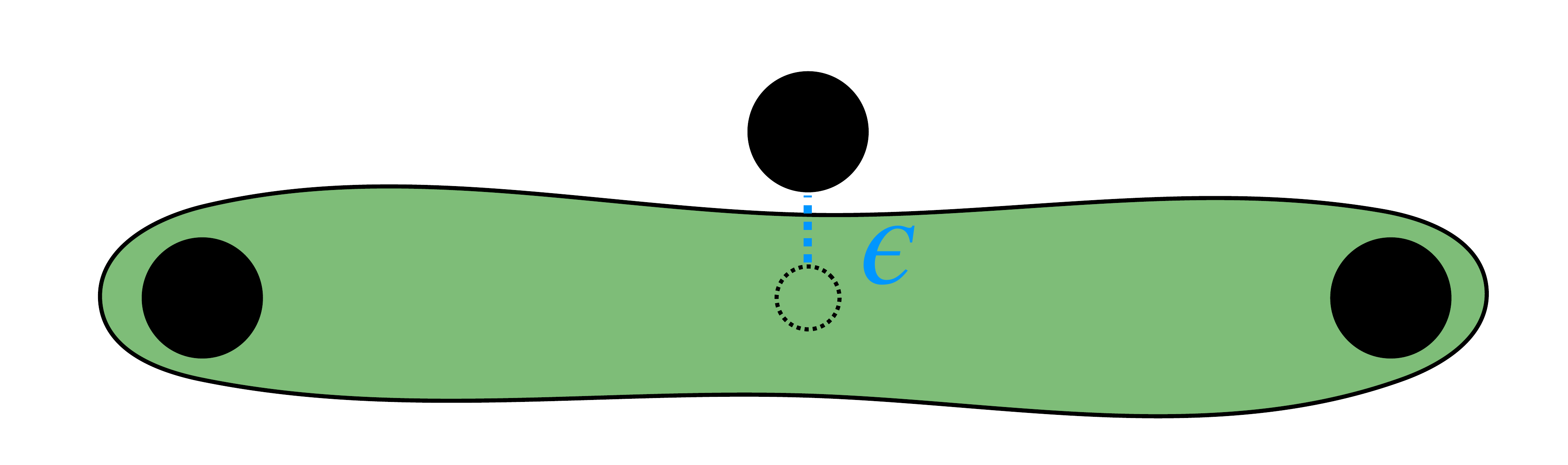}
        \caption{Second merge.}\label{sfig:apxMerge4}
    \end{subfigure}
    \begin{subfigure}[b]{0.24\textwidth}
        \centering
        \includegraphics[width=\textwidth,trim=0mm 0mm 0mm 180mm, clip]{./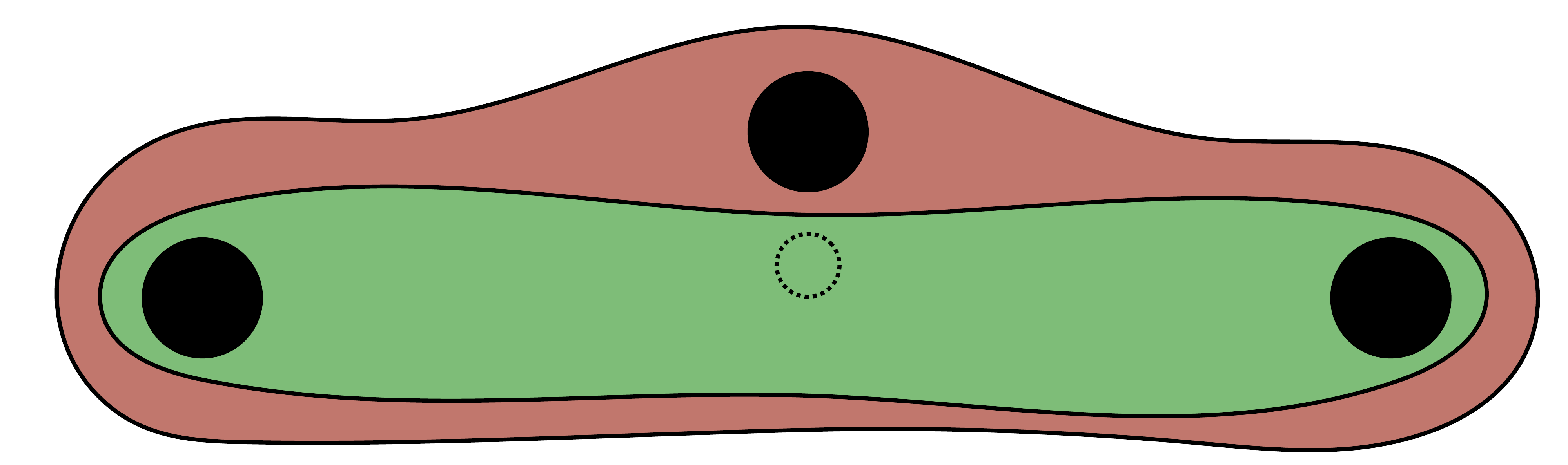}
        \caption{Result of merge.}\label{sfig:apxMerge5}
    \end{subfigure}
    \caption{3 points in $\mathbb{R}^2$ (two of which are initially at distance $1$) showing that $O(1)$-approximate Centroid-Linkage HAC can arbitrarily reduce merge distances. \ref{sfig:apxMerge2} / \ref{sfig:apxMerge3} and \ref{sfig:apxMerge4} / \ref{sfig:apxMerge5} give the the first and second merges with distances $1$ and $\epsilon \ll 1$ respectively; centroids are dashed circles.} \label{fig:apxMerge}
\end{figure}

Compared to the algorithm of Abboud, Cohen-Addad and Houdrouge, our centroid-linkage HAC algorithm introduces two new ideas.
First, we show how to handle the case when the minimum distance between two clusters decreases as a result of two cluster merging.
Second, we introduce an optimization that allows us not to consider each cluster at each of the (logarithmically many) distance scales.
This optimization improves the running time bound  by a logarithmic factor when each merge ``makes stale'' a small number of stored pairs and in practice 
% results in a TODOx speedup. 
the number of such stale merges occur only $60\%$ of the time (on average) compared to the number of actual merges performed.
% \laxman{Mention empirically not much stale}\kishen{added, PTAL}
% \kishen{at most $0.6\times n$}
% \enote{Make sure this is stated in terms of staleness not dimension}
%\wnote{Would it make more sense to say that \textit{in practice} we don't have to consider the logarithmically many distance scales? Because in the worst case we do have to pay it? It seems to me like in the other algorithm that in practice you would almost certainly have to pay for most centroids at most levels but for our algorithm we probably don't. It seems not helpful to say we can replace the log with an exponential in $d$, even if $d$ is somewhat low. Especially since we mention two paragraphs above an exponential dependency on $d$ in a way that makes it sound undesirable. Would it make more sense to say something (in HAC section) like: If on average each merge removes A nearest neighbors then we can replace the the log with A in the runtime? It seems to me like the value A would probably be a small constant in practice. }

Another line of work considered a different variant of HAC, where the input is a weighted similarity graph~\cite{dhulipala2021hierarchical, dhulipala2022hierarchical, dhulipala2023terahac, bateni2024s}.
The edge weights specify similarities between input points, and a lack of edge corresponds to a similarity value of $0$.
By assuming that the graph is sparse, this representation allows bypassing the hardness of finding distances, which leads to near-linear-time approximate algorithms for average-linkage HAC~\cite{dhulipala2021hierarchical}, as well as to efficient parallel algorithms~\cite{dhulipala2022hierarchical, bateni2024s}.
%\kuba{Updated} \enote{I can take a look}

\iffalse
\begin{itemize}
    \item It's same as average linkage for $l_2$-normalized balls (Google API outputs normalized)
    \item Bashing of graph-based approach; real clusters have power-law distribution so $k$-nn graph misses a lot of relevant distances; this is better at identifying the large clusters; e.g. clustering all images on internet most things are near-duplicates but obvious that some super huge clusters; LD: could we do this for open image dataset and build the $k$-NN graph? Clearly adversarially true if every point is duplicated $k$ times; doesn't need to be adversarial; e.g. internet; maybe not "graph based sucks"; want to identify cases where we're better
    \item Re selling ANN thing: we don't do in practice because it's not necessary for non-adversarial; makes sense because updates are not adversarial
\end{itemize}
\fi

\section{Preliminaries}

In this section, we review our conventions and preliminaries. Throughout this paper, we will work in $d$-dimensional Euclidean space $\R^d$. We will use $\dist{\cdot}{\cdot}$ for the Euclidean metric in $\R^d$; that is, for $x, y \in \mathbb{R}^d$, the distance between $x$ and $y$ is $\dist{x}{y} := \sqrt{\sum_i (x[i]-y[i])^2 }$ where $x[i]$ and $y[i]$are the $i$th coordinates of $x$ and $y$ respectively. Likewise, we let $\dist{x}{Y} := \min_{y \in Y} \dist{x}{y}$ for $Y \subseteq \mathbb{R}^d$.

\subsection{Formal Description of Centroid-Linkage HAC}

% \enote{Discuss both centroid version and repeated merging version}

% \enote{Fix a convention about distances} \wnote{I made a macro $\dist{\cdot}{\cdot}$}
We begin by formally defining Centroid-Linkage HAC in a cluster-centric way. For a set of points $X \subset \R^d$, we define $\cent{X}$ to be the centroid of $X$ as below.

\begin{definition}[Centroid] \label{def:centroid}
    Given a set of points $X \subset \R^d$, the centroid of $X$ is the point $x \in \R^d$ whose $i$th coordinate is 
    \begin{align*}
        x[i] = \sum_{y \in X} \frac{y[i]}{|X|}.
    \end{align*}
\end{definition}

Centroid HAC uses distance between centroids as a linkage function. That is, for clusters $X$ and $Y$ we have the linkage function $\mathcal{L}$ defined as
\begin{align*}
    \mathcal{L}(X,Y) = \dist{\cent{X}}{\cent{Y}}.
\end{align*}
$c$-approximate Centroid-Linkage HAC is defined as follows. We are given input points $P \subset \R^d$. First, initialize a set of active clusters $\mcC = \{ \{p\} \}_{p \in P}$.  Then, until $|\mcC| = 1$ we do the following: let $\hat{X}, \hat{Y} \in \mcC$ be a pair of active clusters satisfying
\begin{align*}
    \mathcal{L}(\hat{X}, \hat{Y}) \leq c \cdot \min_{X, Y \in \mcC} \mathcal{L}(X, Y)
\end{align*}
where the $\min$ is taken over distinct pairs; merge $\hat{X}$ and $\hat{Y}$ by removing them from $\mcC$ and adding $\hat{X} \cup \hat{Y}$ to $\mcC$. (Non-approximate) Centroid-Linkage HAC is just the above with $c = 1$.

While the above is a cluster-centric way of describing centroid-linkage HAC, it will be more useful to use an equivalent centroid-centric definition. Specifically, we can equivalently define $c$-approximate Centroid-Linkage HAC as repeatedly ``merging centroids''. Initialize the set of all centroids $C = P$ and a weight $w_u = 1$ for each $u \in C$; these weights will encode the cluster sizes of each centroid. Then, until $|C| = 1$ we do the following: let $\hat{x}, \hat{y} \in C$ be a pair of centroids satisfying
\begin{align*}
    \dist{\hat{x}}{\hat{y}} \leq c \cdot \min_{x, y \in C}\dist{x}{y}
\end{align*}
where the $\min$ is taken over distinct pairs; merge $\hat{x}$ and $\hat{y}$ into their weighted midpoints by removing them from $C$, adding $z = w_{\hat{x}} \hat{x} + w_{\hat{y}} \hat{y}$ to $C$ and setting $w_{z} = w_{\hat{x}} + w_{\hat{y}}$. %Also, note that (exact) Centroid-Linkage HAC is just the above with $c = 1$. %It is easy to verify that this and the cluster-centric definition in \Cref{sec:centroidClusterDefinition} are equivalent.

\subsection{Dynamic Nearest-Neighbor Search}

We define a dynamic ANNS data structure. Typically, dynamic ANNS requires a lower bound on distances; we cannot guarantee this for centroid HAC. Thus, we make use of $\beta$ additive error below.
\begin{definition}[Dynamic Approximate Nearest-Neighbor Search (ANNS)]
    An $(\alpha, \beta)$-approximate dynamic nearest-neighbor search data structure $\N$ maintains a dynamically updated set $S \subseteq \mathbb{R}^d$ (initially $S = \emptyset$) and supports the following operations.
    \begin{enumerate}%[noitemsep, topsep=0pt]
        \item \textbf{Insert}, $\N.\ANNInsert(u)$. given $u \not \in S$ update $S \gets S \cup \{u\}$.
        \item \textbf{Delete}, $\N.\ANNDelete(u)$: given $u \in S$ update $S \gets S \setminus \{u\}$;
        \item \textbf{Query},\label{eq:query} $\N.\ANNQuery(u, \bar{u})$: given $u, \bar{u} \in \mathbb{R}^d$  return $v \in S \setminus \{\bar{u}\}$ s.t. $$\dist{u}{v} \leq \alpha\cdot \dist{u}{S \setminus \{\bar{u}\}} + \beta.$$
    \end{enumerate}
\end{definition}
\noindent If $\N$ is a dynamic $(\alpha, 0)$-approximate NNS data structure then we will simply call it $\alpha$-approximate.
%We say that a dynamic ANNS data structure has amortized insertion time $T$ if, over the course of any $k$ insertions, the total time it takes for all $k$ operations is at most $k \cdot T$.
For a set of points $S$, we will use the notation $\ANN(S)$ to denote the result of starting with an empty dynamic ANNS data structure and inserting each point in $S$ (in an arbitrary order). %We will refer to the above set $S$ undergoing insertions and deletions as a dynamically updated set.

We say $\N$ \emph{succeeds for a set of points $S \subseteq \mathbb{R}^d$ and a query point $u$} if after starting with an empty set and any sequence of insertions and deletions of points in $S$ we have that the query $\N.\ANNQuery(u, \bar{u})$ is correct for any $\bar{u} $ (i.e., the returned point satisfies the stated condition in \ref{eq:query}). If the data structure is randomized then we say that it \emph{succeeds for adaptive updates} if with high probability (over the randomness of the data structure) it succeeds for all possible subsets of points and queries. Queries have \emph{true distances} at most $\Delta$ if the (true) distance from $S$ to any query is always at most $\Delta$.

%Prior work showed how to construct dynamic ANNS data structures using LSH at every relevant distance scale which for a fixed set of points $S$ and query point $u$ succeeded with high probability. In other words, these data structures succeed against a so-called \emph{oblivious adversaries}. For completeness and to phrase this result in a way conducive to our results, we restate this result below. We note that this argument is essentially the same argument as that given by Theorem 3.4 of e.g.\ \cite{har2012approximate}.

The starting point for our dynamic ANNS data structure is the following data structure requiring \emph{non-adaptive} updates which alone does not suffice for centroid HAC as earlier described. See \Cref{sec:nonAdANN} for a proof.

\begin{restatable}[Dynamic ANNS for Oblivious Updates, \cite{andoni2009nearest,har2012approximate}]{theorem}{thmobliviousann}\label{thm:nonAdaptiveAnn}
Suppose we are given $\gamma > 1$ and $c,\beta, \Delta$  and $n$ where $\log(\Delta / \beta), \gamma \leq \poly(n)$ and a dynamically updated set $S$ with at most $n$ insertions. Then, if all queries have true distance at most $\Delta$, we can compute a randomized $(O(c), \beta)$-approximate NNS data structure with update, deletion and query times of $n^{1/c^2 + o(1)}  \cdot \log(\Delta / \beta)  \cdot d \cdot \gamma$, which for a fixed set of points and query point succeeds except with probability at most $\exp(-\gamma)$.
\end{restatable}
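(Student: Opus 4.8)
The plan is to assemble the classical locality-sensitive hashing (LSH) machinery of~\cite{andoni2009nearest,har2012approximate} and wrap it in a discretization layer that copes with the additive slack $\beta$ and the a-priori distance bound $\Delta$. I would route everything through the \emph{$(c,r)$-approximate near-neighbor} primitive: a structure maintaining $S$ under insertions/deletions that, on query $(u,\bar u)$, either returns some $v \in S \setminus \{\bar u\}$ with $\dist{u}{v} \le c r$ or returns $\bot$, and is guaranteed never to return $\bot$ when $\dist{u}{S \setminus \{\bar u\}} \le r$ (all guarantees w.h.p.\ for a fixed update sequence and query). Build one such structure for each scale $r_i = (\beta/c)\cdot 2^i$, $i = 0,1,\dots,\lceil \log_2(c\Delta/\beta)\rceil$, so that there are $O(\log(\Delta/\beta))$ of them (absorbing $\log c$ and using $\gamma,\log(\Delta/\beta)\le\poly(n)$). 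Insertions/deletions are forwarded to every scale; a query $(u,\bar u)$ queries the scales in increasing order of $r_i$ and returns the answer $v$ from the smallest scale that is not $\bot$. If $v$ comes from scale $r_i$ with $i\ge 1$, then since scale $r_{i-1}$ returned $\bot$ we have $\dist{u}{S\setminus\{\bar u\}}\ge r_{i-1}=r_i/2$, hence $\dist{u}{v}\le c r_i = 2c\cdot r_{i-1}\le 2c\cdot\dist{u}{S\setminus\{\bar u\}}$; if $v$ comes from scale $r_0$ then $\dist{u}{v}\le c r_0=\beta$; and some scale is non-empty because the true query distance is at most $\Delta\le r_{\max}$. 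Either way the returned $v$ satisfies $\dist{u}{v}\le 2c\cdot\dist{u}{S\setminus\{\bar u\}}+\beta$, which is the claimed $(O(c),\beta)$ guarantee.

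For a single scale $r$, I would use the Andoni--Indyk $\ell_2$ LSH family, whose exponent satisfies $\rho := \ln(1/p_1)/\ln(1/p_2)\le 1/c^2 + o(1)$, where $p_1$ lower bounds the collision probability at distance $\le r$ and $p_2$ upper bounds it at distance $> cr$. Concatenating $k=\Theta(\log_{1/p_2} n)$ independent hash functions into $g=(h_1,\dots,h_k)$ makes the expected number of points at distance $> cr$ colliding with $u$ under a single $g$ at most $n\cdot p_2^{k}=O(1)$, while a point at distance $\le r$ collides with probability $\ge p_1^k = n^{-\rho-o(1)}$. Maintain $L=\Theta(\gamma\cdot n^{\rho})$ independent hash tables $g_1,\dots,g_L$ stored as dictionaries keyed by bucket; insertion/deletion of a point just (re)hashes it into all $L$ tables; a query computes $g_1(u),\dots,g_L(u)$, scans the corresponding buckets (skipping $\bar u$), stops after inspecting $3L$ points in total, and returns the first inspected point within distance $cr$ of $u$, else $\bot$. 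Standard calculations give: if a near point exists, the probability it collides in no table is at most $(1-n^{-\rho-o(1)})^{L}\le\exp(-\Omega(\gamma))$; and by Markov the total number of collisions with points at distance $> cr$ exceeds $3L$ with probability at most $1/3$, which I would boost to $\exp(-\Omega(\gamma))$ by keeping $\Theta(\gamma)$ independent copies of the whole table-set and querying all of them. A union bound over the $O(\log(\Delta/\beta))$ scales, together with the $\poly(n)$ bounds on $\gamma$ and $\log(\Delta/\beta)$ (absorbed into the exponent and the $o(1)$ after rescaling $\gamma$ by a constant), yields overall failure probability $\exp(-\gamma)$ for any fixed update sequence and query.

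For the running time: each of the $O(\log(\Delta/\beta))$ scales has $L=n^{\rho+o(1)}\gamma$ tables, each hash evaluation costs $O(d)$ with $k=n^{o(1)}$ of them per table, and a query additionally scans $O(L)$ candidate points at cost $O(d)$ each; hence insert, delete, and query all cost $n^{\rho+o(1)}\cdot\log(\Delta/\beta)\cdot d\cdot\gamma = n^{1/c^2+o(1)}\cdot\log(\Delta/\beta)\cdot d\cdot\gamma$, as claimed.

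I expect the only genuine subtlety — beyond bookkeeping — to be the interaction of the additive error $\beta$ with the distance range: classical dynamic ANNS statements assume a polynomially bounded aspect ratio, whereas here centroid merges can drive distances arbitrarily small, so the construction must be pinned to the explicit window $[\beta/c,\ c\Delta]$ and the $\beta$ slack must be exactly what absorbs everything below the smallest scale $r_0$. I would also verify that the two mild generalizations in the query definition — that $u$ need not lie in $S$, and that a single point $\bar u$ is excluded — leave the LSH analysis untouched, which they do: excluding $\bar u$ only removes one element from at most $L$ buckets, and the collision bounds never used $u\in S$.
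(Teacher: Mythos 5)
Your construction is correct and shares the paper's overall skeleton (LSH at $O(\log(\Delta/\beta))$ distance scales, concatenation parameter $k$, repetition parameter $L$, and $\gamma$-fold amplification of a constant-probability structure), but the per-scale mechanism and correctness argument are genuinely different. You route through the classical $(c,r)$-near-neighbor primitive: the query scans buckets, \emph{verifies distances} against the threshold $cr$, truncates after $O(L)$ inspected points, and relies on a Markov bound on the expected number of far collisions (boosted by independent copies). The paper instead builds one monolithic structure whose query returns any point colliding at the smallest scale \emph{without} checking its distance; correctness there rests on a global ``no incorrect collisions'' event, which forces $K$ to be inflated by $\log L + \log\log(\Delta/\beta)$ terms so that each far point collides with probability $n^{-\Omega(1)}$ per table, union-bounded over all points, tables, and smaller scales. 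Your version buys modularity and a correctness argument that is local to each scale (distance verification means a spurious collision costs only scan time, never correctness); the paper's version buys a query that needs no candidate distance computations during bucket inspection, at the price of a more delicate union bound. One piece of bookkeeping to tighten in your writeup: since the theorem only assumes $\log(\Delta/\beta) \le \poly(n)$, the union bound over scales is not free when the per-scale failure is a constant or even $\exp(-\Omega(\gamma))$ for small $\gamma$; you should either make the per-scale failure $O(1/\log(\Delta/\beta))$ (costing an extra $\log\log(\Delta/\beta) = O(\log n)$ factor in $L$ or $k$, absorbed into $n^{o(1)}$, which is exactly why the paper's $K$ carries a $\log\log(\Delta/\beta)$ term) or first drive the whole multi-scale structure to constant failure probability and only then apply the $\Theta(\gamma)$-fold amplification. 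Also note your derived guarantee is $\dist{u}{v} \le 2c\cdot\dist{u}{S\setminus\{\bar u\}}+\beta$ rather than additive $\beta$ only below the bottom scale, but this is exactly the $(O(c),\beta)$ form claimed, so no issue there.
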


% The above data structure .
% First, it does not work under adaptive updates. Second, it is not clear for how to choose $\beta$ above since in the case of centroid-linkage the minimum distance between two clusters can decrease over time and in extreme cases become arbitrarily small; see \Cref{fig:apxMerge}.
% Thus, even a small additive error may lead to a huge multiplicative error.
\section{Dynamic ANNS with Adaptive Updates}\label{sec:ann}
The main theorem we show in this section is how to construct a dynamic ANNS data structure that succeeds for adaptive updates using one which succeeds for oblivious updates.
\begin{restatable}[Reduction of Dynamic ANNS from Oblivious to Adaptive]{theorem}{fullyDynamicAnnReduction}\label{thm:dynAnnReduction}
Suppose we are given $c$, $\beta$, $\Delta$, $n$, $s \in \mathbb{R}^d$ and dynamically updated set $S \subset \mathbb{R}^d$ with at most $n$ insertions, such that all inserted points and query points lie in $B_s(\Delta)$.
Moreover, assume that we can compute a dynamic $(c, \beta)$-approximate NNS data structure with query, deletion and insertion times $T_Q$, $T_D$ and $T_I$ which succeeds for $S$ and a fixed query except with probability at most $n^{-O(d \log d \log(\Delta / \beta))}$. 

Then, we can compute a randomized dynamic $(c, c\beta)$-approximate NNS data structure that succeeds for adaptive updates with query, deletion and amortized insertion times $O(\log n) \cdot T_Q$, $O(T_D)$ and $O(\log n) \cdot T_I$.
\end{restatable}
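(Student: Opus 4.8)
The plan is to build the adaptive structure out of $O(\log n)$ independent copies of the hypothesized oblivious structure via the logarithmic (binary-counter) rebuilding method, preceded by a discretization step that shrinks the universe of possible points to a bounded net. The conceptual crux — and where the super-polynomially small failure probability of the oblivious structure is spent — is that after discretization we can afford to union bound over \emph{all} net points as potential queries, so each sub-structure is simultaneously correct on every query it could ever receive. This makes the query-adaptivity of the adversary irrelevant, and update-adaptivity is neutralized because each sub-structure is built on a point set that is frozen (hence independent of that sub-structure's randomness) before the sub-structure is created.

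\textbf{Step 1 (discretization).} Since all inserted and query points lie in $B_s(\Delta)$, fix a net $\mathcal X$ of $B_s(\Delta)$ at resolution $\Theta(\beta/c)$, so $|\mathcal X| = (\Delta/\beta)^{O(d)} =: N$ and $\log N = O(d\log(\Delta/\beta))$. Round every inserted point and every query point to its nearest net point, keeping points distinct even when they coincide after rounding. A routine triangle-inequality calculation shows that an answer which is $(c,\beta)$-correct for the rounded instance is $(c,c\beta)$-correct for the original, so it suffices to solve the problem when $S\subseteq\mathcal X$ and all queries lie in $\mathcal X$.

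\textbf{Step 2 (the data structure).} Maintain sub-structures $\N_0,\dots,\N_{\lceil\log n\rceil}$, where $\N_i$ is an independent instance of the oblivious structure holding a block of at most $2^i$ inserted points. Insertions enter $\N_0$; whenever two blocks of equal size exist, they are discarded and a fresh $\N_i$ with new randomness is built on their union (the standard binary counter on the number of insertions), so each point is rebuilt $O(\log n)$ times and insertions cost $O(\log n)\cdot T_I$ amortized. A deletion looks up, via an auxiliary dictionary, the block containing the point and forwards a single delete to that block's structure, costing $O(T_D)$; deleted points are simply omitted when that block is later rebuilt. A query $(u,\bar u)$ is forwarded (with $u$ rounded) to all $O(\log n)$ sub-structures and we return the returned candidate closest to $u$, costing $O(\log n)\cdot T_Q$. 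The global nearest neighbor of $u$ in $S\setminus\{\bar u\}$ lies in some block, whose structure returns a point within $c\cdot\dist{u}{S\setminus\{\bar u\}}+\beta$; the candidate we output is at least as close, so (before accounting for rounding) the answer is $(c,\beta)$-correct, which with Step 1 gives the claimed $(c,c\beta)$ guarantee — provided every sub-structure answers correctly.

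\textbf{Step 3 (correctness under adaptive updates, the main obstacle).} A direct union bound over the operation sequences an adaptive adversary could realize is hopeless, since there are exponentially many in $n$. Instead: over the whole execution, level $i$ is rebuilt at most $O(n/2^i)$ times, so only $O(n)$ sub-structure incarnations ever exist, and each incarnation $\mathcal D$ is created on a set $S_{\mathcal D}\subseteq\mathcal X$ of at most $n$ points that were all inserted \emph{before} $\mathcal D$ was created; hence $S_{\mathcal D}$ is determined independently of the fresh randomness of $\mathcal D$, and after creation the set of points ever fed to $\mathcal D$ stays equal to $S_{\mathcal D}$ (deletions only remove members of $S_{\mathcal D}$). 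By hypothesis, for each fixed incarnation $\mathcal D$ and each fixed $u\in\mathcal X$, the probability that $\mathcal D$ fails to succeed for $(S_{\mathcal D},u)$ is at most $n^{-O(d\log d\log(\Delta/\beta))}$. Union-bounding over the $O(n)$ incarnations and all $N$ net points, the probability that any sub-structure ever errs on any query it could receive is at most $O(n)\cdot 2^{O(d\log(\Delta/\beta))}\cdot n^{-O(d\log d\log(\Delta/\beta))}$, which is $n^{-O(1)}$ because the $\log d$ factor in the exponent dominates the $2^{O(d\log(\Delta/\beta))}=n^{O(d\log(\Delta/\beta))/\log n}$ factor. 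On this high-probability event every sub-structure is correct for \emph{every} possible query, so no matter which queries and updates the adaptive adversary chooses — even though its choices depend on past answers, hence on the structure's randomness — it can never trigger an error. Combining with Steps 1 and 2 proves the theorem. The delicate points to get right are maintaining the ``frozen input'' invariant of Step 3 through interleaved deletions and rebuilds, and the net-resolution/additive-error bookkeeping of Step 1 closing up to exactly $c\beta$.
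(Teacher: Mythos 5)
Your proposal is correct and follows essentially the same route as the paper: a covering net of the query space, union-bounded against the oblivious structure's tiny failure probability to neutralize query adaptivity, plus merge-and-reduce rebuilding with fresh randomness on blocks frozen before that randomness is drawn to neutralize update adaptivity, with the same amortization yielding $O(\log n)\cdot T_I$ insertions. The one (unnecessary) deviation is that you also round the \emph{inserted} points — the paper rounds only queries, since each block's contents are already independent of its structure's randomness; rounding the data adds avoidable complications (collisions violating the $u \notin S$ insert precondition, and having to return a preimage of the rounded answer) without buying anything.
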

We note that our result is slightly stronger than the above: the insertions we make into the oblivious ANNS that we use only occur upon their instantiation, not dynamically.

As a corollary of the above reduction and known constructions for dynamic ANNS that work for oblivious updates---namely, \Cref{thm:nonAdaptiveAnn} with $\gamma = \Theta(d \log d \cdot \log(\Delta / \beta) \cdot \log n)$ and using parameter $\beta' = \beta / c$ where $\beta'$ is the additive distortion parameter for \Cref{thm:nonAdaptiveAnn}---we obtain the following.
\begin{theorem}[Dynamic ANNS for Adaptive Updates] \label{thm:dynAnn} Suppose we are given $c$, $\beta$, $\Delta$, $n$, $s \in \mathbb{R}^d$ where $\log(\Delta / \beta), d, c \leq \poly(n)$ and dynamically updated set $S \subset \mathbb{R}^d$ with at most $n$ insertions, such that all inserted and query points lie in $B_s(\Delta)$.

Then, we can compute a randomized dynamic $(O(c), \beta)$-approximate NNS data structure that succeeds for adaptive updates with query, deletion and amortized insertion time $n^{1/c^2 + o(1)} \cdot \log(\Delta / \beta) \cdot d^2.$
\end{theorem}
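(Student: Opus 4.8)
The proof is a parameter-tuning argument: I instantiate the oblivious data structure of \Cref{thm:nonAdaptiveAnn} with its per-instance failure probability pushed low enough to meet the hypothesis of the reduction \Cref{thm:dynAnnReduction}, then apply that reduction and collect the overheads. First I would dispose of the degenerate regime $\Delta/\beta < 2$: there the additive slack $\beta$ is a constant fraction of the diameter of $B_s(\Delta)$, so a data structure that returns an arbitrary point of $S$ answers every query correctly and there is nothing to prove. So assume $\Delta/\beta \ge 2$. Since all inserted and all query points lie in $B_s(\Delta)$, the true distance from any query to $S$ never exceeds $2\Delta$, so \Cref{thm:nonAdaptiveAnn} applies with distance bound $2\Delta$.

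The key step is the choice of the confidence parameter $\gamma$. Let $K = O(1)$ be the constant hidden in the $O(c)$ of \Cref{thm:nonAdaptiveAnn}, set $\beta' := \beta/(Kc)$, and invoke \Cref{thm:nonAdaptiveAnn} with approximation parameter $c$, additive parameter $\beta'$, distance bound $2\Delta$, and $\gamma := \Theta\big(d\log d\cdot\log(2\Delta/\beta')\cdot\log n\big)$. Its preconditions hold since $d, c, \log(\Delta/\beta)\le\poly(n)$ give both $\log(2\Delta/\beta') = \log(\Delta/\beta)+O(\log c)+O(1)\le\poly(n)$ and $\gamma\le\poly(n)$. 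For this $\gamma$ the per-instance failure probability $\exp(-\gamma)$ is at most $n^{-\Omega(d\log d\log(2\Delta/\beta'))}$, which is exactly the bound $n^{-O(d\log d\log(\Delta/\beta'))}$ that \Cref{thm:dynAnnReduction} requires of its input structure (whose additive parameter is our $\beta'$). The resulting structure is $(Kc,\beta')$-approximate with update, deletion and query times all equal to $T := n^{1/c^2+o(1)}\cdot\log(2\Delta/\beta')\cdot d\cdot\gamma$.

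Finally I would apply \Cref{thm:dynAnnReduction} to this oblivious structure, obtaining a randomized dynamic NNS data structure that succeeds for adaptive updates, is $(Kc, Kc\cdot\beta') = (O(c),\beta)$-approximate, and has query time $O(\log n)\cdot T$, deletion time $O(T)$, and amortized insertion time $O(\log n)\cdot T$. The only real work is the bookkeeping that $O(\log n)\cdot T = n^{1/c^2+o(1)}\cdot\log(\Delta/\beta)\cdot d^2$: expanding $d\cdot\gamma$ gives $d^2$ times a product of $\log d$, $\log n$, and a second $\log(2\Delta/\beta')$ factor, and $\log(2\Delta/\beta') = \log(\Delta/\beta)+O(\log c)+O(1)$. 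Since $d, c\le\poly(n)$ we have $\log d,\log n,\log c = n^{o(1)}$, and since $\Delta/\beta \ge 2$ gives $\log(\Delta/\beta)\ge 1$ we may absorb the surviving extra $\log(2\Delta/\beta')$ and $\log n$ factors into $n^{o(1)}\cdot\log(\Delta/\beta)$; thus exactly one factor of $\log(\Delta/\beta)$ and a factor $d^2$ (one $d$ from the base cost of \Cref{thm:nonAdaptiveAnn}, one from $\gamma$) remain, giving the claimed time bound. The argument has no conceptual obstacle — its substance lies entirely in matching the reduction's failure-probability requirement with a $\gamma$ whose induced overhead still fits the $n^{o(1)}\cdot\log(\Delta/\beta)\cdot d^2$ budget.
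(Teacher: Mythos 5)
Your proposal is correct and follows essentially the same route as the paper, which obtains the theorem as a corollary of \Cref{thm:nonAdaptiveAnn} instantiated with $\gamma = \Theta(d\log d\cdot\log(\Delta/\beta)\cdot\log n)$ and additive parameter $\beta' = \beta/c$, followed by \Cref{thm:dynAnnReduction}; your extra care with the degenerate case $\Delta/\beta < 2$, the $2\Delta$ true-distance bound, and the choice $\beta' = \beta/(Kc)$ only tightens the bookkeeping. The one point worth noting is that both your derivation and the paper's produce a second $\log(\Delta/\beta)$ factor (coming from $\gamma$) that can only be absorbed into $n^{o(1)}$ when $\log(\Delta/\beta) = n^{o(1)}$, which the stated hypothesis $\log(\Delta/\beta)\le\poly(n)$ does not by itself guarantee --- but this imprecision is inherited from the theorem statement itself, not a gap introduced by your argument.
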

\noindent A previous work \cite{har2012approximate} also provided algorithms that work against an ``adaptive updates'' but only a set of adaptive updates made against a fixed set of query points. Also, note that if we have a lower bound of $\delta$ on the true distance of any query then lowering $c$ by a constant and setting $\beta = \Theta(\delta)$ for a suitably small hidden constant gives an $O(c)$-approximate NNS with similar guarantees.

\subsection{Algorithm Description}
Our algorithm for dynamic ANNS for adaptive updates uses two ingredients. First, we make use of the following ``covering nets'' to fix our queries to a small set against which we can union bound.
\begin{restatable}{lemma}{covnet}\label{thm:covNet}
    Given $s \in \mathbb{R}^d$ and $\beta, \Delta > 0$, there exists a \emph{covering net} $Q \subseteq \mathbb{R}^d$ such that
    \begin{enumerate}%[noitemsep,topsep=0pt,parsep=0pt]
        \item \textbf{Small Size}: $|Q| \leq (\Delta/\beta)^{O(d \log d)}$
        \item \textbf{Queries}: given $u \in B_s(\Delta)$, one can compute $u' \in Q$ in time $O(d)$ such that $\dist{u}{u'}\leq \beta$.
    \end{enumerate}
\end{restatable}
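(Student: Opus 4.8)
The plan is to take $Q$ to be an axis-aligned grid net of the bounding cube of $B_s(\Delta)$, with the spacing chosen so that each grid cell has diameter at most $\beta$. Concretely, set $\ell := \beta/\sqrt d$ and (shifting so that $s$ is a grid point) define
\[
Q := \bigl\{\, s + \ell\, k \;:\; k \in \mathbb{Z}^d,\ \|\ell k\|_\infty \le \Delta + \beta \,\bigr\},
\]
i.e., all such grid points in a slightly enlarged cube around $s$; the extra $\beta$ of slack is only there to absorb rounding at the boundary. We may assume $\Delta \ge \beta$: if $\Delta < \beta$ then $Q = \{s\}$ already works, and throughout one should read $\Delta/\beta$ as being at least a suitably large constant, since a radius-$O(\beta)$ ball trivially admits a $\beta$-net of size $2^{O(d)}$, which is within the claimed bound.

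Next I would check the \emph{Queries} property. Given $u \in B_s(\Delta)$, compute $u'$ coordinate-by-coordinate by rounding to the grid: $u'[i] := s[i] + \ell \cdot \mathrm{round}\bigl((u[i]-s[i])/\ell\bigr)$. This is $O(1)$ arithmetic per coordinate, hence $O(d)$ time total. Each coordinate moves by at most $\ell/2$, so $\dist{u}{u'} \le \sqrt d \cdot \ell/2 = \beta/2 \le \beta$. And $u' \in Q$: since $\|u-s\|_\infty \le \|u-s\|_2 \le \Delta$, rounding increases $\|u'-s\|_\infty$ by at most $\ell/2 \le \beta$, so the defining constraint $\|u'-s\|_\infty \le \Delta+\beta$ holds. (If one instead wants $Q \subseteq B_s(\Delta)$, one can additionally clamp $u'$ to the closest in-cube grid point, changing distances by $O(\ell)$ and still taking $O(d)$ time.)

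For the \emph{Small Size} bound, the enlarged cube has side $2(\Delta+\beta)$, so it contains at most $\bigl(2(\Delta+\beta)/\ell + 1\bigr)^d$ grid points; plugging in $\ell = \beta/\sqrt d$ and $\Delta \ge \beta$ gives $|Q| \le \bigl(O(\sqrt d)\cdot \Delta/\beta\bigr)^d = (\Delta/\beta)^{O(d)}\cdot d^{O(d)}$. The factor $d^{O(d)}$ — which is exactly the $(\sqrt d)^{d}$ packing loss of an axis-aligned grid — equals $2^{O(d\log d)}$, and since $\Delta/\beta \ge 2$ we have $2^{O(d\log d)} \le (\Delta/\beta)^{O(d\log d)}$; absorbing the $(\Delta/\beta)^{O(d)}$ term then gives $|Q| \le (\Delta/\beta)^{O(d\log d)}$ as claimed. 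This is precisely why the exponent in the statement carries the extra $\log d$.

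The construction is routine, and the only points needing care are (i) ensuring the coordinate-rounded point genuinely lies in $Q$ rather than merely near it (the boundary-slack / clamping fix above), and (ii) folding the honest grid-net size $(\sqrt d\,\Delta/\beta)^d$ into the stated $(\Delta/\beta)^{O(d\log d)}$ form, which is where the mild normalization $\Delta \gtrsim \beta$ enters. Neither is a genuine obstacle; using a near-optimal covering lattice in place of a grid would remove the $\log d$ but would no longer give an $O(d)$-time lookup, so the grid is the right choice here.
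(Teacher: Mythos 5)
Your proof is correct and takes essentially the same approach as the paper: an axis-aligned grid of spacing $\beta/\sqrt d$ centered at $s$, coordinate-wise rounding for the $O(d)$-time query, and the $(\sqrt d\,\Delta/\beta)^d = (\Delta/\beta)^{O(d\log d)}$ size count. If anything you are slightly more careful than the paper about the boundary slack needed so the rounded point actually lands in $Q$, and about absorbing the $d^{O(d)}$ factor when $\Delta/\beta$ is small.
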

\begin{proof}
The basic idea is simply to take an evenly spaced hypergrid centered at $s$ of radius $\Delta$. More formally, we let
    \begin{align*}
        Q := \left\{s + \frac{\beta}{\sqrt{d}} \cdot y : y \in \mathbb{Z}^d \cap \left(\left[-\Delta \frac{\sqrt{d}}{\beta},\Delta \frac{\sqrt{d}}{\beta} \right]\right)^d \right \}
    \end{align*}
    where above $\left(\left[-\Delta \frac{\sqrt{d}}{\eps},\Delta \frac{\sqrt{d}}{\beta} \right]\right)^d$ is the $d$-way Cartesian product. That is, $y$ is a $d$-dimensional vector whose coordinates are integers between $-\Delta \frac{\sqrt{d}}{\beta}$ and $\Delta \frac{\sqrt{d}}{\beta}$. Thus, $Q$ consists of all points which offset each coordinate of $s$ by a multiple of $\frac{\beta}{\sqrt{d}}$ up to total offset distance $\Delta$ in each coordinate.

    The number of points in $Q$ is trivially the number of offsets $y$ which is, in turn, 
    \begin{align*}
        \left(2\frac{\Delta \cdot \sqrt{d}}{\beta}\right)^d = \left(\frac{\Delta}{\beta}\right)^{O(d \log d)}
    \end{align*}
    as desired.

    Next we analyze how to query. Given a point $u$, we let
    \begin{align*}
        u_i' := s_i + \frac{\beta}{\sqrt{d}} \cdot \left\lfloor \frac{\sqrt{d}}{\beta} \left(u_i - s_i \right) \right \rfloor
    \end{align*}
    be $u_i$ rounded down to the nearest multiple of $\frac{\beta}{\sqrt{d}}$ after offsetting by $s_i$.
    Observe that $u' = (u_1', u_2', \ldots u_d') \in Q$ by our assumption that $u \in B_s(\Delta)$ and that, furthermore, $u'$ can be computed from $u$ in time $O(d)$. Lastly, observe that
    \begin{align*}
        |u_i - u_i'| \leq \frac{\beta}{\sqrt{d}}
    \end{align*}
    and so we have that the distance between $u$ and $u'$ is
    \begin{align*}
        \dist{u}{u'} =  \sqrt{\sum_i (u_i - u'_i)^2}
         \leq \sqrt{\sum_i \left(\frac{\beta}{\sqrt{d}}\right)^2}
        = \beta.\qquad \qedhere
    \end{align*}
\end{proof}

Second, we make use of a ``merge-and-reduce'' approach. A similar approach was taken by \cite{de2023dynamic} for exact $k$-nearest-neighbor queries in the plane. Namely, for each $i \in [O(\log n)]$ we maintain a set $S_i$ of size at most $2^i$ (where all $S_i$ partition all inserted points) and a dynamic ANNS data structure $\N_i$ for $S_i$ which only works for oblivious updates. Other than the size constraint, the partition is arbitrary. Informally, we perform deletions, insertions and queries as follows.

\paragraph{Deletion:} To delete a point we simply delete it from its corresponding $\N_i$. 
\paragraph{Insertion:} To insert a point, we insert it into $S_0$ and for each $i$ we move all points from $S_{i}$ to $S_{i+1}$ if $S_i$ contains more than $2^i$ points; we update $\N_i$ accordingly each time we move points; namely, we recompute $\N_i$ and $\N_{i+1}$ from scratch on $S_i$ and $S_{i+1}$ respectively. See \Cref{fig:ANNInsert}.
\paragraph{Query:} Lastly, to query a point $u$ we first map this point to a point in our covering net $u'$ (as specified by \Cref{thm:covNet}), query $u'$ in each of our $\N_i$ and then return the best output (i.e., the point output by an $\N_i$ that is closest to $u'$).

\begin{figure}
    \centering
    \begin{subfigure}[b]{0.24\textwidth}
        \centering
        \includegraphics[width=\textwidth,trim=0mm 0mm 0mm 0mm, clip]{./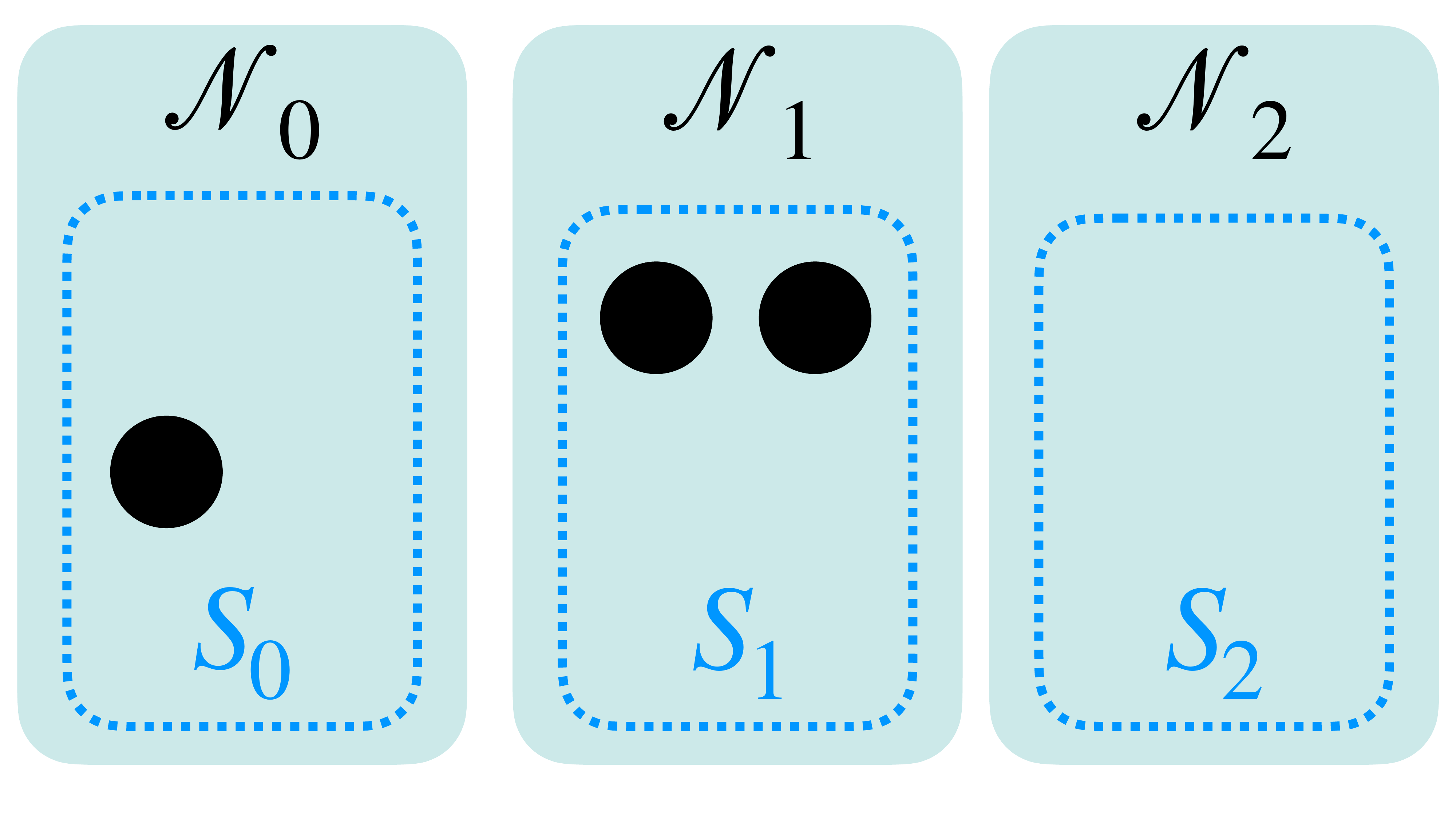}
        \caption{Initial state.}\label{sfig:ANNInsert1}
    \end{subfigure} \hfill
    \begin{subfigure}[b]{0.24\textwidth}
        \centering
        \includegraphics[width=\textwidth,trim=0mm 0mm 0mm 0mm, clip]{./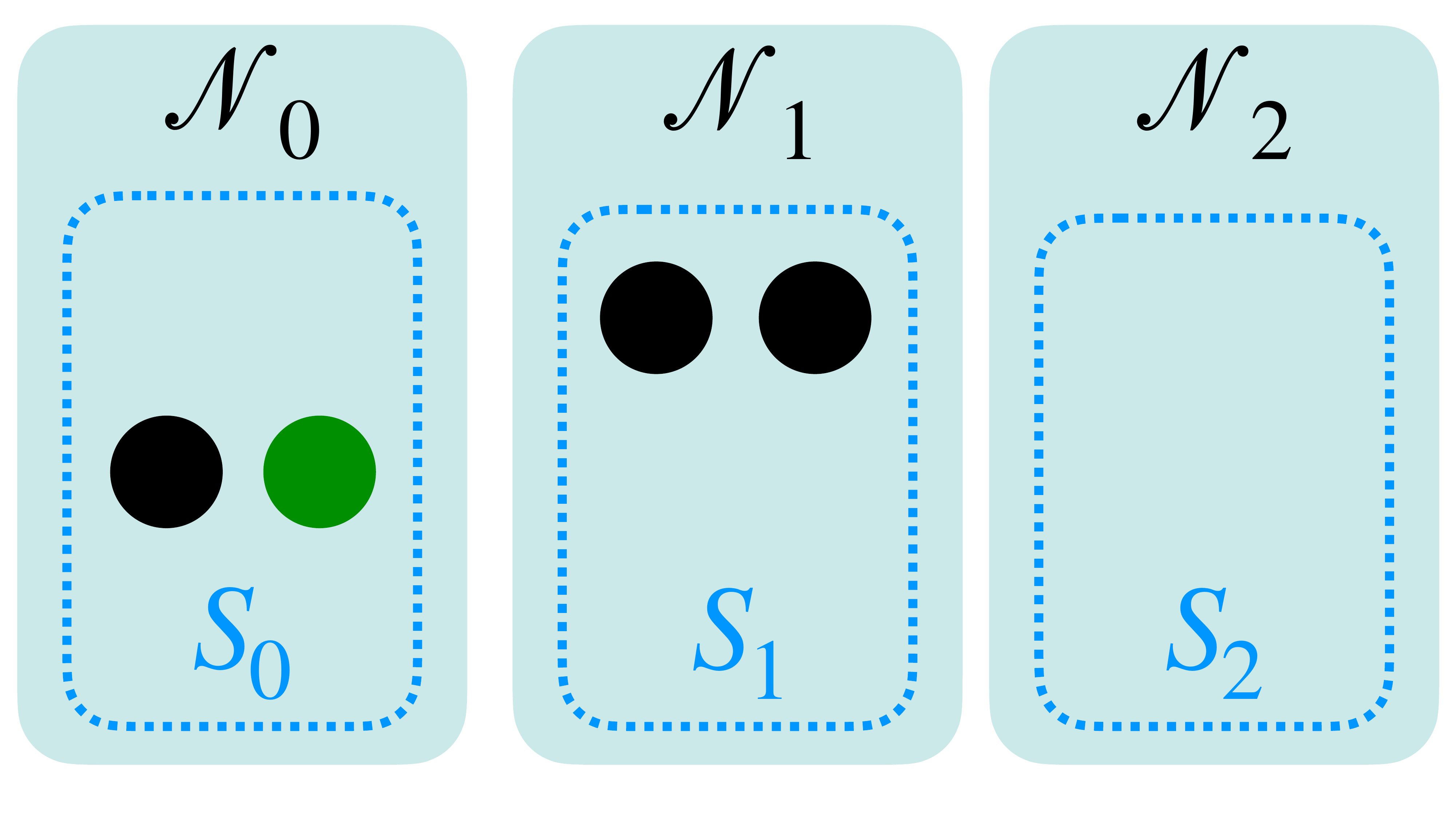}
        \caption{Insert point.}\label{sfig:ANNInsert2}
    \end{subfigure}  \hfill
    \begin{subfigure}[b]{0.24\textwidth}
        \centering
        \includegraphics[width=\textwidth,trim=0mm 0mm 0mm 0mm, clip]{./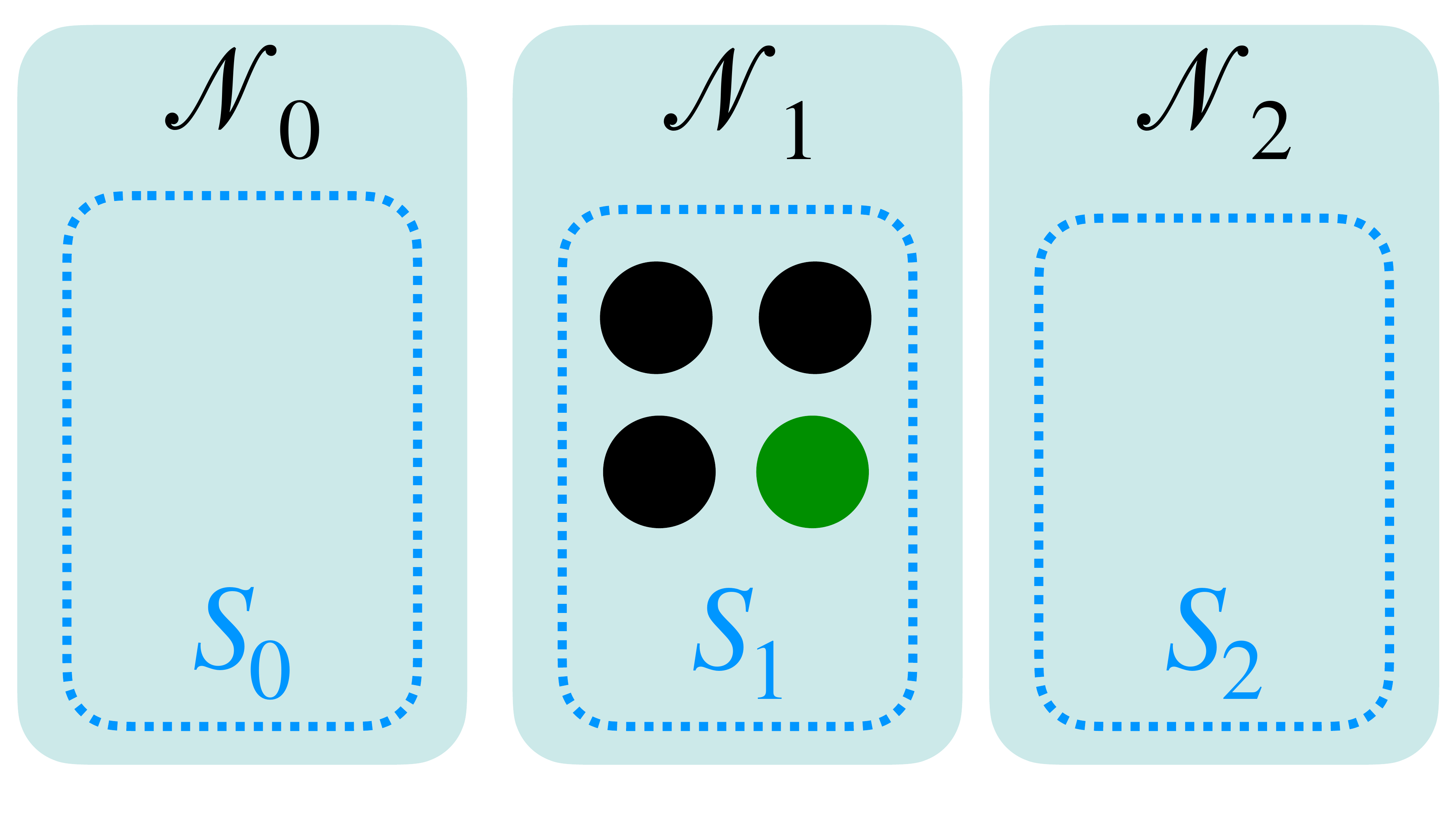}
        \caption{$S_1 \gets S_0 \cup S_1$.}\label{sfig:ANNInsert3}
    \end{subfigure}   \hfill
        \begin{subfigure}[b]{0.24\textwidth}
        \centering
        \includegraphics[width=\textwidth,trim=0mm 0mm 0mm 0mm, clip]{./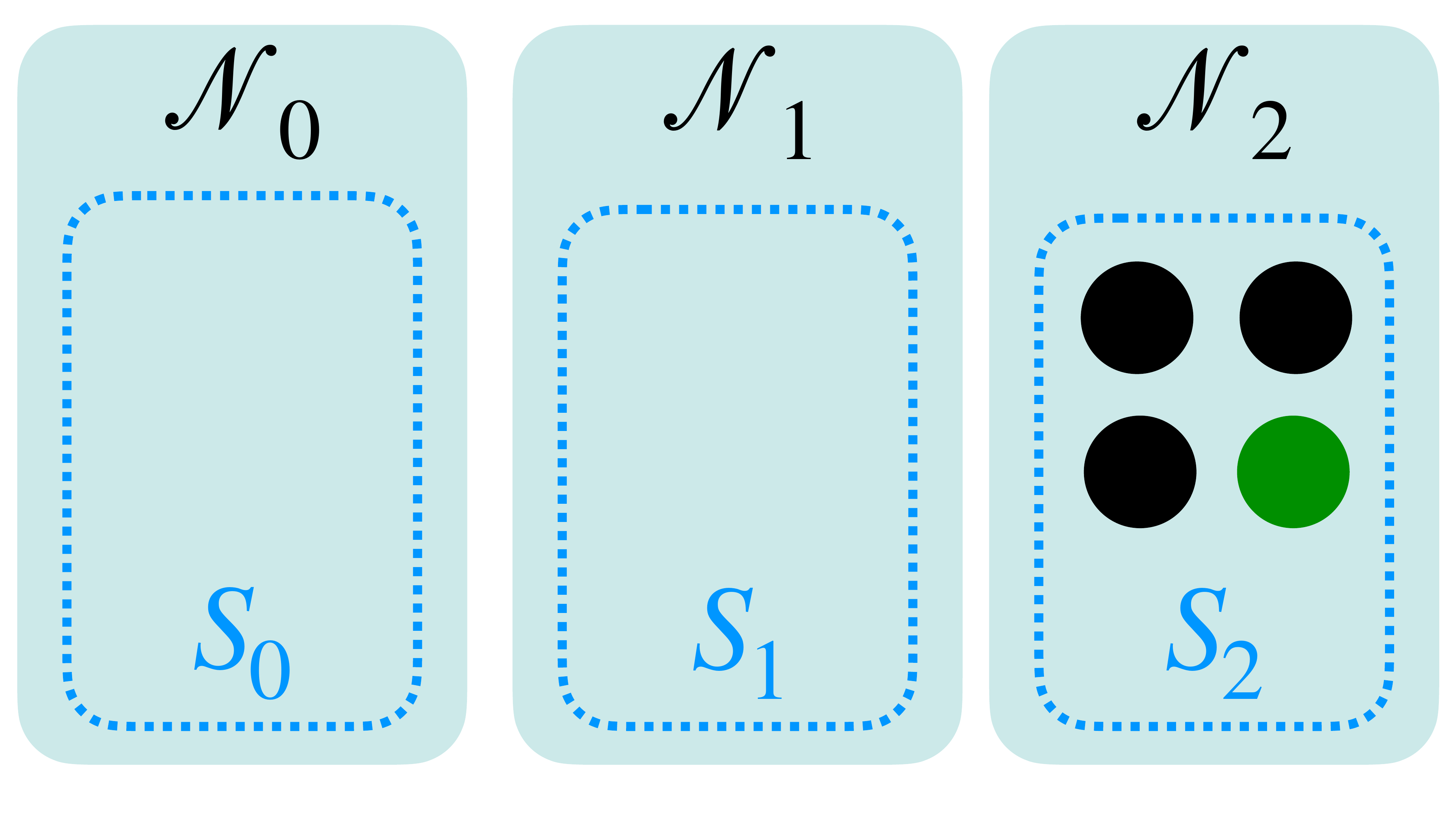}
        \caption{$S_2 \gets S_1 \cup S_2$.}\label{sfig:ANNInsert4}
    \end{subfigure}
    \caption{Our merge-and-reduce strategy when a point (in green) is inserted.} \label{fig:ANNInsert}
\end{figure}

Our algorithm is more formally described in pseudo-code in \Cref{alg:dynANN}.
\begin{algorithm}
    \caption{Dynamic ANNS for Adaptive Updates}
    \label{alg:dynANN}
    \begin{algorithmic}[0] % The number tells where the line numbering should start
            \State \textbf{Input:} $\beta$, $\Delta$, $n$, $s \in \mathbb{R}^d$, $Q \gets \textsc{CoveringNet}(s, \Delta, \beta)$ (computed using \Cref{thm:covNet})
            \State \textbf{Maintains:} $S_0, S_1, \ldots$ -- partition of the inserted points, s.t. $|S_i| \leq 2^i$
            \State \textbf{Maintains:} $\N_0, \N_1, \ldots$ -- a non-adaptive ANNS for each $S_i$
           
            \Function{\ANNInsert}{$u$}
            \State $S_0 \gets S_0 \cup \{u\}$
                    
                    \While{$\exists i$ such that $|S_i| > 2^i$} 
                        \State $S_{i+1} \gets S_{i+1} \cup S_i$ and $S_i \gets \emptyset$
                        \State $\N_{i+1} \gets \ANN(S_{i+1})$ and $\N_i \gets \ANN(\emptyset)$
                    \EndWhile
            \EndFunction
          
           \Function{\ANNDelete}{$u$}
                    \State Let $i$ be such that $u \in S_i$
                    \State $\N_i.\ANNDelete(u)$ and remove $u$ from $S_i$
            \EndFunction

             \Function{\ANNQuery}{$u, \bar{u}$}
                    \State Let $u' \in Q$ be such that $\dist{u}{u'} =O(\beta)$ \Comment{computed using \Cref{thm:covNet}} 
                    \State Let $v_i = \N_i.\ANNQuery(u', \bar{u})$
                    \State \Return $v = \argmin_{v_i} \dist{u'}{v_i}$
            \EndFunction
    \end{algorithmic}
\end{algorithm}

\subsection{Algorithm Analysis}

We now analyze our dynamic ANNS data structure (as described in \Cref{alg:dynANN}). The two non-trivial aspects of analyzing our algorithm are its correctness (i.e., the fact that it is $(c, \beta)$-approximate) and its amortized insertion time. We begin with its correctness.
\begin{restatable}{lemma}{anncorrect}\label{lem:dyANNCorrect}
    \Cref{alg:dynANN} is a $(c, c\beta)$-approximate nearest-neighbor search data structure with high probability for adaptive updates.
\end{restatable}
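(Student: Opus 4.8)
The plan is to prove correctness in two parts: first, that whenever each $\N_i$ answers its query correctly, \Cref{alg:dynANN} returns a valid $(c, c\beta)$-approximate neighbor for the original query point; second, that with high probability every $\N_i$ answers correctly across \emph{all} adaptive updates, by union bounding over the covering net $Q$ and over the polynomially many distinct data structures $\N_i$ that ever get instantiated.

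For the first part, fix a query $\ANNQuery(u, \bar u)$ and let $u' \in Q$ be its covering-net image, so $\dist{u}{u'} \le \beta$ by \Cref{thm:covNet}. Let $v^\star \in S \setminus \{\bar u\}$ be a true nearest neighbor of $u$ in $S \setminus \{\bar u\}$, and say $v^\star \in S_j$. Since $S_j \setminus \{\bar u\} \subseteq S \setminus \{\bar u\}$, the point $v_j$ returned by $\N_j.\ANNQuery(u', \bar u)$ satisfies, assuming $\N_j$ answers correctly,
\begin{align*}
\dist{u'}{v_j} \le c \cdot \dist{u'}{S_j \setminus \{\bar u\}} + \beta \le c \cdot \dist{u'}{v^\star} + \beta \le c\left(\dist{u}{v^\star} + \beta\right) + \beta.
\end{align*}
The algorithm returns $v = \argmin_{v_i} \dist{u'}{v_i}$, so $\dist{u'}{v} \le \dist{u'}{v_j}$, and by the triangle inequality $\dist{u}{v} \le \dist{u'}{v} + \beta \le c\,\dist{u}{v^\star} + c\beta + 2\beta$. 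Absorbing the additive $O(\beta)$ into the $c\beta$ term (adjusting the covering-net parameter by a constant, as the statement permits a $(c, c\beta)$ guarantee starting from a $(c,\beta)$ primitive) gives the claimed bound; one also checks $v \in S \setminus \{\bar u\}$ since each $\N_i$ only returns points of $S_i \setminus \{\bar u\}$ and the $S_i$ are disjoint.

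For the second part, the key observation is that although updates are adaptive, every query is first snapped to a point of the fixed set $Q$, and $|Q| \le (\Delta/\beta)^{O(d \log d)}$. Moreover, over the life of the data structure only $\poly(n)$ distinct instances $\N_i$ are ever created (an amortization argument bounds the number of rebuilds), and each instance sees a prefix of the adaptive update sequence that, \emph{conditioned on its own internal randomness being independent of the inputs it receives}, we must argue is effectively oblivious. The crucial point is that each $\N_i$ is freshly rebuilt with independent randomness each time its contents change via an insertion-driven merge, and deletions into a fixed $\N_i$ come from a set already determined at rebuild time, so for a fixed $\N_i$ instance and a fixed query point $u' \in Q$ the failure event has probability at most $n^{-\Omega(d \log d \log(\Delta/\beta))}$ by hypothesis. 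Union bounding over $|Q| \cdot \poly(n)$ such (instance, query) pairs still leaves failure probability $n^{-\Omega(1)}$, i.e.\ the data structure succeeds for adaptive updates w.h.p.

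The main obstacle is the subtle adaptivity argument in the second part: I must carefully justify that feeding an \emph{oblivious}-update ANNS a sequence of updates that were chosen adaptively (by an adversary who has seen this very data structure's past answers) is nonetheless fine, because (i) all queries are quantized to the oblivious set $Q$ before reaching $\N_i$, so the adversary gains no query-point advantage, and (ii) each $\N_i$'s randomness is refreshed on every rebuild and the inter-rebuild update sequence touching a fixed $\N_i$ is a function of information available before that rebuild — so relative to $\N_i$'s fresh randomness the updates are non-adaptive. Getting the conditioning/independence bookkeeping right here, and correctly counting the number of $\N_i$ instances to include in the union bound (so that the $n^{-\Omega(d\log d\log(\Delta/\beta))}$ per-instance bound of the hypothesis dominates), is where the real care is needed; the geometric/triangle-inequality part is routine.
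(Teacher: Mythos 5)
Your proposal is correct and follows essentially the same route as the paper's proof: quantize the query to the fixed covering net $Q$, note that each $\N_i$ is instantiated with fresh randomness on a set $S_i$ fixed before that randomness is drawn, union bound over $|Q|\cdot O(n)$ (instance, net-point) pairs, and finish with the triangle inequality, absorbing the extra additive $O(\beta)$ terms into the $c\beta$ slack by a constant adjustment. The only cosmetic difference is that you track the block $S_j$ containing the true nearest neighbor rather than using the partition identity $\min_i \dist{u'}{S_i\setminus\{\bar u\}}=\dist{u'}{S\setminus\{\bar u\}}$ as the paper does; the two are equivalent.
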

\begin{proof}

Consider a query on $(u, \bar{u})$ with corresponding $u' \in Q$ in our covering net (i.e., the $u'$ returned by \Cref{thm:covNet} when input $u$). We refer to the state of our data structure at the beginning of its for loop (i.e., upon receiving a new update or query) as at the beginning of an iteration. 

First, observe that a simple argument by induction on our insertions and deletions shows that the point set for which $\N_i$ is a data structure at the beginning of an iteration is $S_i$ and that, furthermore, if $S$ is the set of points which have been inserted but not deleted at a given point in time then $\{S_i\}_i$ partitions $S$. Thus, we have
\begin{align*}
    \min_i \dist{u'}{S_i \setminus \bar{u}} = \dist{u'}{S \setminus \bar{u}}.
\end{align*}
Furthermore, by the guarantees of \Cref{thm:covNet} and the fact that all points of $S$ and all query points are always contained in $B_s(\Delta)$, we know that 
\begin{align}\label{eq:a}
    \dist{u}{u'} \leq O(\beta).
\end{align}
It follows that 
\begin{align*}
    \dist{u'}{S \setminus \bar{u}} \leq \dist{u}{S \setminus \bar{u}} + O(\beta)
\end{align*}
and so
\begin{align}\label{eq:b}
    \min_i \dist{u'}{S_i \setminus \bar{u}} \leq \dist{u}{S \setminus \bar{u}} + O(\beta).
\end{align}

Furthermore, notice that since the randomness of $\N_i$ is chosen after we fix $S_i$, we have by assumption that for a fixed $u' \in Q$ that $\N_i$ succeeds for a query on $(u', \bar{u})$ with points $S_i$ except with probability at most
\begin{align*}
    n^{- \Omega( d \log d \cdot \log (\Delta / \beta))}.
\end{align*}
On the other hand, by \Cref{thm:covNet} we know that  $|Q| \leq \left(\frac{\Delta}{\beta} \right)^{O(d \log d)}$ so taking a union bound over all points in $Q$ we have that with high probability $\N_i$ succeeds for $S_i$ and every query point in $Q$ except with probability at most $1/\poly(n)$. Union bounding over our $O(n)$-many $\N_i$ that we ever instantiate, we get that every $\N_i$ succeeds for its corresponding $S_i$ and every query point of $Q$ except with probability at most $1/\poly(n)$. In other words, with high probability we always have
\begin{align}\label{eq:c}
    \dist{u'}{v_i} \leq c \cdot \dist{u'}{S_i \setminus \bar{u}} + O(\beta).
\end{align}

% By the aforementioned induction, when we perform a query for point $u$ there is some unique $i$ such that $v^* \in S_i \setminus \bar{S}$.

Thus, combining the triangle inequality, Equations \ref{eq:a}, \ref{eq:b} and \ref{eq:c}, we have that with high probability whenever we query point $u$, the returned point $v$ satisfies
\begin{align*}
    \dist{u}{v} &= \min_i \dist{u}{v_i}\\
    &\leq \dist{u}{u'} + \min_i \dist{u'}{v_i}\\
    &\leq O(\beta) + \min_i \dist{u'}{v_i}\\
    &\leq O(\beta) + c \cdot \min_i \cdot \dist{u'}{S_i \setminus \bar{u}}\\
    &\leq c \cdot \dist{u}{S \setminus \bar{u}} + O(c \cdot \beta)
\end{align*}
Choosing our hidden constant appropriately, we get that with high probability our data structure is indeed $(c, c\beta)$-approximate.
\end{proof}

We next prove its amortized insertion time.
\begin{restatable}{lemma}{annrt}\label{lem:dyANNInsert}
    \Cref{alg:dynANN} has amortized insertion time $O(\log n) \cdot T_I$ where $T_I$ is the insertion time of the input oblivious dynamic ANNS data structure.
\end{restatable}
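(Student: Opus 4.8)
The plan is to use the standard charging (potential) argument for the logarithmic ``merge-and-reduce'' method, exploiting that a point only ever moves to \emph{higher}-indexed levels and that there are only $O(\log n)$ levels.

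First I would bound the number of levels. Since there are at most $n$ insertions and the algorithm maintains the invariant $|S_i| \le 2^i$ at the start of every loop iteration, a level $S_i$ with $i \ge 1$ can only become nonempty via a cascade from $S_{i-1}$, which requires $|S_{i-1}| > 2^{i-1}$ at some earlier point and hence at least $2^{i-1}$ points to have been present simultaneously, so at least $2^{i-1}$ insertions to have occurred (deletions only remove points). Thus no level with index exceeding $\lceil \log_2 n\rceil + 1$ is ever touched, and there are $O(\log n)$ levels.

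Next I would account for a single call to $\ANNInsert(u)$. Adding $u$ to $S_0$ and rebuilding $\N_0$ on the at most two points of $S_0$ costs $O(T_I)$ directly; here I use that $\ANN(T)$ builds the oblivious structure by performing $|T|$ insertions, costing $O(|T|)\cdot T_I$ (this is exactly the ``insertions occur only upon instantiation'' remark following \Cref{thm:dynAnnReduction}). The remaining work is the while loop. Each time its body executes for index $i$, it sets $\N_{i+1} \gets \ANN(S_{i+1})$ and $\N_i \gets \ANN(\emptyset)$; the latter costs $O(1)$ and the former costs $O(|S_{i+1}|)\cdot T_I$, where $S_{i+1}$ denotes the set \emph{after} the move. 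Since the body runs only because $|S_i| > 2^i$, and by the invariant the old $S_{i+1}$ had size at most $2^{i+1} < 2|S_i|$, the new $S_{i+1}$ has size less than $3|S_i|$. Hence this execution of the body costs $O(|S_i|)\cdot T_I$, and I would charge $O(T_I)$ to each of the $|S_i|$ points moved from level $i$ to level $i+1$ in this step.

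Finally I would sum the charges. A point is charged only when it is moved from some level $i$ to level $i+1$; since points never move downward and there are $O(\log n)$ levels, each inserted point is charged at most $O(\log n)$ times over the whole execution, contributing $O(\log n)\cdot T_I$ total. With at most $n$ points inserted, the total running time of all $\ANNInsert$ calls is $O(n)\cdot O(T_I)$ (direct level-$0$ work) $+\ O(n\log n)\cdot T_I$ (charged work) $= O(n\log n)\cdot T_I$, giving amortized insertion time $O(\log n)\cdot T_I$. The only thing to handle carefully — routine bookkeeping rather than a genuine obstacle — is that deletions are interleaved with insertions: a deletion merely removes a point from its current $S_i$, so it triggers no cascade, creates no new charges, and preserves the invariant $|S_i|\le 2^i$; and since a point is charged only at the moment it is actually moved, points deleted before reaching higher levels simply never incur those charges.
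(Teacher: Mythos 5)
Your proof is correct and is essentially the paper's argument: both are the standard amortized analysis of the merge-and-reduce scheme, relying on the facts that points only move to higher levels, that a rebuild of level $i+1$ costs $O(2^{i+1})\cdot T_I$, and that there are $O(\log n)$ levels. The only (cosmetic) difference is that you charge the rebuild cost to the individual points moved up a level, whereas the paper counts, for each level, the number of rebuilds ($\le n/2^i$) times the cost per rebuild ($T_I\cdot 2^{i+2}$); the two accountings evaluate the same double sum in opposite orders.
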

\begin{proof}
    Let $n_{\ANNInsert} \leq n$ be the total number of insertions up to some point in time. Each time we instantiate an $\N_{i+1}$ on an $S_{i+1}$, it consists of at most $2^{i+2}$-many points and so by assumption takes us time
    \begin{align*}
        T_I \cdot 2^{(i+2)}
    \end{align*}
    time to instantiate. On the other hand, the total number of times we can instantiate $\N_{i+1}$ is at most $n_{\ANNInsert} / 2^{i}$ (since each time we instantiate it we add a new set of at least $2^i$ points to $S_{i+1}$ and points only move from $S_j$s to $S_{j+1}$s). It follows that the total time to instantiate $\N_{i+1}$ is at most
\begin{align*}
    T_I \cdot 2^{(i+2)} \cdot \frac{n_{\ANNInsert}}{2^{i}} = O(T_I \cdot n_{\ANNInsert}).
\end{align*}
Applying, the fact that $i \leq \log n$, we get that the total time for all insertions is
\begin{align*}
    O(\log n) \cdot T_I \cdot n_{\ANNInsert}.
\end{align*}
Dividing by our $n_{\ANNInsert}$-many insertions, we get an amortized insertion time of $O(\log n) \cdot T_I$, as desired.
\end{proof}

We conclude this section with our proof of \Cref{thm:dynAnnReduction}.
\fullyDynamicAnnReduction*
\begin{proof}
We use \Cref{alg:dynANN}. Correctness follows from \Cref{lem:dyANNCorrect} and the amortized insertion time follows from \Cref{lem:dyANNInsert}. The deletion time is immediate from the fact that we can compute the $i$ such that $u \in S_i$ in constant time. Lastly, we analyze our query time. By \Cref{thm:covNet}, computing each $u'$ takes time $O(d) \leq T_Q$ (since reading the query takes time $\Omega(d)$). Likewise, by assumption, computing $\N_i^{\ANNQuery}(u', \bar{u})$ takes time $T_Q$ for each $i$, giving our desired bounds.
\end{proof}

\section{Centroid-Linkage HAC Algorithm} \label{sec:Centroid_Hac}
% \enote{Maybe we can do JL in the algorithm to get the dependence on $d$ down to $O(d)$?}
In this section, we give our algorithm for Centroid-Linkage HAC. Specifically, we show how to (with \cref{alg:SIMPLE_HAC}) reduce approximate Centroid-Linkage HAC to roughly linearly-many dynamic ANNS function calls, provided the ANNS works for adaptive updates.

\begin{restatable}[Reduction of Centroid HAC to Dynamic ANN for Adaptive Updates]{theorem}{hacReduction} \label{thm:general_hac}
    Suppose we are given a set of $n$ points $P \subset \R^d$, $c > 1$, $\epsilon > 0$, and lower and upper bounds on the minimum and maximum pairwise distance in $P$ of $\delta$ and $\Delta$ respectively. Suppose we are also given a data structure that is a dynamic $c$-approximate NNS data structure for all queries in \cref{alg:SIMPLE_HAC} that works for adaptive updates with insertion, deletion and query times of $T_I$, $T_D$, and  $T_Q$.
    
    Then there exists an algorithm for $c(1+\epsilon)$-approximate Centroid-Linkage HAC on $n$ points that runs in time $\Tilde{O}( n \cdot (T_I + T_D + T_Q))$ assuming $\frac{\Delta}{\delta} \leq \poly(n)$ and $\frac{1}{\epsilon} \leq \polylog(n)$.
\end{restatable}

Our final algorithm for centroid HAC is not quite immediate from the above reduction and our dynamic ANNS algorithm for adaptive updates.
Specifically, the above reduction requires a $c$-approximate NNS data structure but in the preceding section we have only provided a $(c, \beta)$-approximate NNS data structure for $\beta > 0$. However, by leveraging the structure of centroid HAC---in particular, the fact that there is only ever at most one ``very close'' pair---we are able to show that this suffices. In particular, for a given $c$, there exists a $c_0$ and $\beta_0$ such that a $(c_0, \beta_0)$-approximate NNS data structure functions as a $c$-approximate NNS data structure for all queries in \cref{alg:SIMPLE_HAC}. This is formalized by \Cref{lem:mult_ann}. Combining \cref{lem:mult_ann}, \Cref{thm:general_hac} and \Cref{thm:dynAnn}, we get the following.

\begin{restatable}[Centroid HAC Algorithm]{theorem}{hacFinal}\label{thm:Rd_hac}
    For $n$ points in $\R^d$, there exists a randomized algorithm that given any $c > 1.01$ and lower and upper bounds on the minimum and maximum pairwise distance in $P$ of $\delta$ and $\Delta$ respectively, computes a $c$-approximate Centroid-Linkage HAC with high probability. It has runtime $O(n^{1 + O(1/c^2)} d^2)$ assuming $c, d, \frac{\Delta}{\delta} \leq \poly(n)$.
\end{restatable}

\subsection{Algorithm Description}
In what remains of this section, we describe the algorithm for \Cref{thm:general_hac}.
Consider a set of points $P \in \mathbb{R}^d$. Suppose we have a dynamic $c$-approximate NNS data structure $\N$ that works with adaptive updates with query time $T_Q$, insertion time $T_I$, and deletion time $T_D$. Let $Q$ be a priority queue storing elements of the form $(l, x, y)$ where $x$ and $y$ are ``nearby'' centroids and $l = \dist{x}{y}$. The priority queue supports queuing elements and dequeueing the element with shortest distance, $l$. At any given time while running the algorithm, say a centroid $x$ is queued if there is an element of the form $(l, x, y)$ in $Q$. We will maintain a set, $C$, of active centroids allowing us to check if a centroid is active in constant time. For each active centroid, $C$ will also store the weight of the centroid so we can preform merges and an identifier to distinguish between distinct clusters with the same centroid. Note that any time we store a centroid, including in $Q$, we will implicitly store this identifier. %We can use identifier $i$ for the $i$th time we have a centroid in the same point in $\mathbb{R}^d$. \enote{I took this out; I don't think it's necessary}

First, we describe how the algorithm handles merges. To merge two active centroids, $x$ and $y$, we remove them from $C$ and delete them in $\N$. Let $z = w_{x} x + w_{y} y$ be the centroid formed by merging $x$ and $y$. We use $\N$ to find an approximate nearest neighbor, $y^*$, of $z$ and add to $Q$ the tuple $(D(z, y^*), z, y^*)$. Lastly, we add $z$ to $\N$ and $C$. Pseudo-code for this algorithm is given by \Cref{alg:MERGE}. Crucially, we do not try to update any nearest neighbors of any other centroid at this stage. We will do this work later and only if necessary. Since we are using an approximate NNS, it is possible that the centroid $z$ will be the same point in $\R^d$ as the centroid of another cluster. We can detect this in constant time using $C$ and we will immediately merge the identical centroids.

\begin{algorithm}
\caption{Handle-Merge}
\begin{algorithmic}[1]
    \State \textbf{Input:} set of active centroids $C$, ANNS data structure $\N$, priority queue $Q$, centroids $x$ and $y$
    \State $z \gets w_{x} x + w_{y} y$ \Comment{Merge $x$ and $y$} 
    \State Remove $x$ and $y$ from $C$ and $\N$
    \If {there is a centroid $z^*$ that is the same as $z$ is in $C$}
        \State $z \gets w_{z} z + w{z^*} z^*$ \Comment{Merge $z$ and $z^*$}
        \State Remove $z^*$ from C and add $z$
    \Else
        \State Add $z$ to $\N$ and $C$
    \EndIf
    \State $y^* \gets $ An approximate nearest neighbor of $z$ returned by $\N.\ANNQuery(z,z)$ 
    \State Queue $(D(z, y^*), z, y^*)$ to $Q$
\end{algorithmic}
\label{alg:MERGE}
\end{algorithm}

We now describe the full algorithm using the above merge algorithm; we also give pseudo-code in \Cref{alg:SIMPLE_HAC}. Let $\epsilon > 0$ be a parameter that tells the algorithm how aggressively to merge centroids.
%\kuba{I'm not sure if 'aggressively' conveyes the right intuition here} \enote{How approximate our merges are allowed to be} 
To begin, we construct $\N$ by inserting all points of $P$ in any order. Then for each $p \in P$, we use $\N$ to find an approximate nearest neighbor $y \in  P \setminus \{p\}$ and queue $(D(p,y), p, y)$ to $Q$. We also initialize $C = \{  p  : p \in P \}$. The rest of the algorithm is a while loop that runs until $Q$ is empty. Each iteration of the while loop begins by dequeuing from $Q$ the tuple $(l, x, y)$ with minimum $l$.

\begin{enumerate}[label=(\textbf{\arabic*})]%[label=(\textbf{\arabic*}),topsep=0pt,itemsep=0pt,parsep=0pt,leftmargin=15pt]
    \item If $x$ and $y$ are both active centroids then we merge them.
    \item Else if $x$ is not active then we do nothing and move on to the next iteration of the while loop.
    \item Else if $x$ is active but $y$ is not, we use $\N$ to compute a new approximate nearest neighbor, $y^*$ of $x$. Let $l^* = D(x, y^*)$. If $l^* \leq (1+\epsilon) l$, then we merge $x$ and $y^*$. Otherwise we add $(l^*, x, y^*)$ to $Q$.
\end{enumerate}

\begin{algorithm}
\caption{Approximate-HAC}
\begin{algorithmic}[1]
\State \textbf{Input:} set of points $P$, metric $D$, fully dynamic ANNS $\N$ data structure, $\epsilon > 0$
\State \textbf{Output:} $(1+\eps)c$-approximate centroid HAC
\State Initialize $Q$ and $C = \{ \{ p \} : p \in P \}$ and insert all points of P into $\N$
\For{$p \in P$}
    \State $y \gets $ An approximate nearest neighbor of $p$ returned by $\N.\ANNQuery(p, p)$
    \State Queue $(D(p,y), p, y)$ to $Q$
\EndFor
\While{$Q$ is not empty}
    \State $(l, x, y) \gets$ dequeue shortest distance from $Q$ \label{line:dequeue}
    \If {$x,y \in C$}
        \State Handle-Merge($C$, $\N$, $Q$, $x$, $y$)
    \ElsIf {$x \in C$}
        \State $y^* \gets $ An approximate nearest neighbor of $x$ returned by $\N.\ANNQuery(x,x)$
        \State $l^* \gets D(x,y^*)$
        \If {$l^* \leq (1 + \epsilon) l$}
            \State Handle-Merge($C$, $\N$, $Q$, $x$, $y^*$)
        \Else
            \State Queue $(l^*, x, y^*)$ to $Q$
        \EndIf
    \EndIf
\EndWhile
\end{algorithmic}
\label{alg:SIMPLE_HAC}
\end{algorithm}

\subsection{Algorithm Analysis}

We now prove that \Cref{alg:SIMPLE_HAC} gives a $(c(1+\epsilon))$-approximate HAC where $c$ is the approximation value inherited from the approximate nearest neighbor data structure. Note that a smaller $\epsilon$ will give a more accurate HAC but will increase the runtime of the algorithm.

\begin{lemma} \label{lem:SIMPLE_HAC_CORRECTNESS}
   Algorithm \ref{alg:SIMPLE_HAC} gives a $c(1+\epsilon)$-approximate Centroid-Linkage HAC.
\end{lemma}

\begin{proof}
    The idea of the proof is to observe that one of the two endpoints of the closest pair will always store an approximate of the minimum distance in $Q$. In that spirit, suppose that right before a centroid is dequeued, the shortest distance between active centroids is $l_\opt$. Then there must exist active centroids $x_1$ and $x_2$ such that $D(x_1, x_2) = l_\opt$. Assume without loss of generality that $x_1$ was queued before $x_2$ and $x_2$ was queued with distance $l_2$. Then, when $x_2$ was queued, $x_1$ was active so $l_2 \leq c \cdot l_\opt$.

    Thus, when a centroid, $x$, is dequeued, its distance $l_x$ obeys the inequality 
    $$l_x \leq l_2 \leq c \cdot l_\opt.$$ Then, if $x$ is merged during the while loop, it is merged with a centroid of distance at most $$(1+\epsilon) l_x \leq c (1 + \epsilon) l_\opt$$ as desired. If a centriod gets merged without being dequeued, that means it merged with another identical centroid which is consistent with $1$-approximate HAC.
\end{proof}

We now prove the runtime of the algorithm. Say that the maximum pairwise distance between points in $P$ is $\Delta$ and the minimum is $\delta$. We make the following observations about the geometry of the centriods throughout the runtime of \cref{alg:SIMPLE_HAC}.

\begin{observation} \label{obs:contained_in_ball}
    Consider any point $p \in P$. Throughout the runtime of \cref{alg:SIMPLE_HAC}, the distance from any centroid to $p$ is at most $\Delta$.
\end{observation}

\begin{proof}
    Consider a cluster $Q \subseteq P$. Since for each $q \in Q$ we have $D(p,q) \leq \Delta$, it follows that $D(p,\text{cent}(Q)) \leq \Delta$.
\end{proof}

\begin{observation} \label{obs:few_close_points}
    When running \cref{alg:SIMPLE_HAC}, at any point in time there can only ever be a single centroid queued with distance less than $\delta$.
\end{observation}

\begin{proof}
    Initially, every centroid is queued with distance at least $\delta$. Thus, if a centroid is ever queued with a smaller distance it will immediately be dequeued and merged.
\end{proof}

\begin{lemma} \label{lem:SIMPLE_HAC_RUNTIME}
   Algorithm \ref{alg:SIMPLE_HAC} runs in time $O( n \cdot T_I + n \cdot T_D + n \cdot T_Q \cdot \log_{1+\epsilon} (\frac{2 \Delta}{\delta}) + n \log (n) \log_{1+\epsilon} (\frac{2 \Delta}{\delta}))$.
\end{lemma}

\begin{proof}
    The idea of the proof is that each time we dequeue a centroid we either merge it or we increase the distance associated with it by a multiplicative factor of $1+\eps$ which can only happen $\log_{1 + \epsilon}(2 \Delta / \delta)$ times for each of our $O(n)$-many centroids.

    Each centroid is inserted into $\N$ at most once, either during the initial construction or in \cref{alg:MERGE}, and deleted at most once. Since there are $2n - 1$ centroids throughout the runtime, the total time spent on updates to $\N$ is $O(n \cdot T_I + n \cdot T_D)$.

    At any given point during the algorithm, a centroid can be in $Q$ at most once. Since only $2n - 1$ centroids are ever created, there are at most $2n - 1$ elements in $Q$ at any time. Thus, a single queue or dequeue operation takes $O(\log(n))$ time. In the initial for loop we queue $n$ elements to $Q$ and make $n$ queries to $\N$ taking time $O(n \log(n) + n \cdot T_Q)$. Each time algorithm \ref{alg:MERGE} gets called it queues one element to $Q$ and makes up to $4$ updates to $C$. Since \cref{alg:MERGE} is called at most $n - 1$ times, the total time spent on merges, not including updates to $\N$, is $O(n \cdot T_Q + n \log(n))$.

    Lastly, we consider how many times a centroid $x$ can be dequeued on \cref{line:dequeue}. Suppose $x$ is queued with distance $l_i$ the $i$th time it is queued. If $l_1 < \delta$ then by \cref{obs:few_close_points}, $x$ will be immediately dequeued and merged so $x$ is only queued once. Thus, we assume $l_1 \geq \delta$. If $x$ is dequeued with distance $l_i$ then either $x$ is merged and will not be queued again or it is queued again with distance $$l_{i+1} > (1 + \epsilon) l_i > (1 + \epsilon)^i l_1.$$ Since $l_1 \geq \delta$ by assumption and $l_i \leq 2 \Delta$ for all $i \geq 1$ by \cref{obs:contained_in_ball}, it follows that $x$ can only be queued, and therefore dequeued, $\log_{1 + \epsilon} (\frac{2 \Delta}{\delta})$ times. Each time, not including merges, there can be one dequeue from $Q$, one queue to $Q$, and one query to $\N$. Since there are $2n - 1$ centroids created by the algorithm, the while loop (not including merges) runs in time $O(n \log (n) \log_{1+\epsilon} (\frac{2 \Delta}{\delta}) + n \cdot T_Q \cdot \log_{1+\epsilon} (\frac{2 \Delta}{\delta}))$.
\end{proof}

Combining the previous two lemmas together we get the following.

\hacReduction*
\begin{proof}
    By \cref{lem:SIMPLE_HAC_CORRECTNESS} and \cref{lem:SIMPLE_HAC_RUNTIME} we get that \cref{alg:SIMPLE_HAC} gives $c(1 + \epsilon)$-approximate HAC in time $O( n \cdot T_I + n \cdot T_D + n \cdot T_Q \cdot \log_{1+\epsilon} (\frac{2 \Delta}{\delta}) + n \log (n) \log_{1+\epsilon} (\frac{2 \Delta}{\delta}))$. It is only left to show that $\log_{1+\epsilon} (\frac{2 \Delta}{\delta}) \leq \polylog (n)$. Changing the base of the logarithm we get

    $$\log_{1+\epsilon} \Big(\frac{2 \Delta}{\delta}\Big) = \frac{\log(\frac{2 \Delta}{\delta})}{\log (1 + \epsilon)}.$$

    We have that $\log(\frac{2 \Delta}{\delta}) \leq \polylog(n)$ by the assumption that $\frac{\Delta}{\delta} \leq \poly(n)$. By the inequality $\log(1+x) \geq \frac{x}{1 + x}$ for all $x > -1$ we get that if $\epsilon \leq 1$
    \begin{align*}
        \frac{1}{\log(1 + \epsilon)} \leq \frac{1 + \epsilon}{\epsilon} \leq \frac{2}{\epsilon} =  O\Big(\frac{1}{\epsilon}\Big) \leq \polylog(n)
    \end{align*}
    where the final inequality comes from the assumption that $\frac{1}{\epsilon} \leq \polylog(n)$. If $\epsilon > 1$ then
    \begin{align*}
        \frac{1}{\log(1 + \epsilon)} \leq \frac{1}{\log(2)} \leq \polylog(n).
    \end{align*}
    Combining the above inequalities and the above-stated runtime gives our final runtime.
\end{proof}

We use \cref{thm:general_hac} with our results from \cref{sec:ann} to get our main result. First we prove the following lemma which says that our ANNS that includes an additive term functions as a purely multiplicative ANNS for our algorithm.

\begin{lemma} \label{lem:mult_ann}
        For any $\lambda \in (1,c)$, if $\N$ is a $\Big(\frac{c}{\lambda}, \frac{\delta (\lambda - 1)}{(1+c) \lambda}\Big)$-approximate NNS data structure that works for adaptive updates then $\N$ is a $c$-approximate NNS data structure for all queries in \cref{alg:SIMPLE_HAC}.
    \end{lemma}

    \begin{proof}
        Let $C$ be the set of active centroids at the time any given query takes place. We will say a point $x$ is close to the set $C$ if $D(x,C) < \frac{\delta}{c}$. We show that if $x$ is not close to $C$, then $\N^{\text{query}}(x,x) = y$ will give us the multiplicative result we desire:

        \begin{align}\label{eq:not_close}
            \nonumber  D(x, y) &\leq \frac{c}{\lambda} D(x,C) + \frac{\delta (\lambda - 1)}{(1+c) \lambda}\\
            \nonumber &\leq \frac{c}{\lambda} D(x,C) + \frac{\delta}{c} (\lambda - 1)\\
            \nonumber &\leq \frac{c}{\lambda} D(x,C) + D(x,C) (\lambda - 1)\\
            \nonumber &< \frac{c}{\lambda} D(x,C) + D(x,C) \frac{c (\lambda - 1)}{\lambda}\\ \nonumber
            \nonumber &\leq c \cdot D(x,C) \Big(\frac{1}{\lambda} + \frac{\lambda - 1}{\lambda}\Big)\\ 
            &\leq c \cdot D(x,C).
        \end{align}

        We now prove the lemma by induction. The base case is the set of neighbors calculated for the initial set of points $P$. Each $p \in P$ is not close to $C \setminus p$ so by \cref{eq:not_close}, 
        
        $$D(p,\N^{\text{query}}(p,p)) < c \cdot D(p, C \setminus p)$$
        
        as desired. Now for the induction step, consider some query for a centroid $x$ that has just been dequeued. For $x,y \in C$, say $y$ is a $c$-approximate nearest neighbor of $x$ if $\dist{x}{y} \leq c \cdot \dist{x}{C \setminus x}$. Assume by induction that all queries up to this point have returned $c$-approximate nearest neighbors. We know by \cref{obs:few_close_points} and the inductive hypothesis that either no points are close to $C$ or $x$ is the lone point that is close to $C$. In the former case, $\N$ returns a $c$-approximate nearest neighbor by \cref{eq:not_close}. Now consider the latter case. Let $y$ be a centroid such that $\dist{x}{y} > c \cdot \dist{x}{C \setminus x}$. We will show that $\N^{\text{query}}(x,x) \neq y$.

        Define $L = \dist{x}{y} + \dist{x}{C \setminus x}$. By the triangle inequality and because $x$ is the only point that is close to $C$, we have $L \geq \frac{\delta}{c}$. Define $p_y = \dist{x}{y} / L$ and $p_S = \dist{x}{C \setminus x} / L$ to be the percentage of the distance $L$ from $\dist{x}{y}$ and $\dist{x}{C \setminus x}$ respectively. It is enough for us to show that 

        $$\frac{c}{\lambda}(p_S \cdot L) + \frac{\delta (\lambda - 1)}{(1+c) \lambda} < p_y \cdot L$$

        for all $L \geq \frac{\delta}{c}$. Since this equation is linear in $L$, if it is true for any such $L$ then it is true for $L = \frac{\delta}{c}$. Since $p_y > c \cdot p_S$, we have $p_S < \frac{1}{1 + c}$ and $p_y > \frac{c}{1 + c}$. Plugging in $L = \frac{\delta}{c}$ we get

        \begin{align*}
            \frac{c}{\lambda}\Big(p_S \cdot L\Big) + \frac{\delta (\lambda - 1)}{(1+c) \lambda} &< \frac{c}{\lambda}\Big(\frac{1}{1 + c} \cdot \frac{\delta}{c}\Big) + \frac{\delta (\lambda - 1)}{(1+c) \lambda}\\
            &= \frac{\delta + \delta (\lambda - 1)}{(1+c) \lambda}\\
            &= \frac{\delta}{(1+c)}\\
            &< p_y \cdot \frac{\delta}{c}\\
            &= p_y \cdot L
        \end{align*}

        as desired.
    \end{proof}

%     \wnote{I proved this as if \cref{thm:dynAnn} givs a $c$ approximation and not a $O(c)$ because I'm not completely sure how to handle this.} \enote{By decreasing $c$ by an appropriate constant (hidden in the $O$), you can move the $O$ from the approximation to the runtime.}
% \wnote{Would it make sense to change it in the theorem since that is the form we use or leave it sense that is more common in literature?} \enote{I would leave in more common form}

\hacFinal*

\begin{proof}
    Let $\epsilon = 0.001$ and define $\hat{c} = \frac{c}{1 + \epsilon} > 1.005$. For any $\lambda \in (1,\hat{c})$, \cref{thm:dynAnn} says we can construct a $\Big(\frac{\hat{c}}{\lambda}, \frac{\delta (\lambda - 1)}{(1+\hat{c}) \lambda}\Big)$-approximate nearest neighbor data structure, $\N$, with query, deletion, and amortized insertion time $n^{O(\lambda^2/\hat{c}^2)} \log\Big(\frac{\Delta (1+\hat{c}) \lambda}{\delta (\lambda - 1)}\Big) d^2$. Note that we moved the big $O$ in \cref{thm:dynAnn} from the approximation guarantee to the runtime by scaling by some constant. By \cref{lem:mult_ann}, $\N$ is a $\hat{c}$-approximate ANNS data structure for all queries in \cref{alg:SIMPLE_HAC}. By \cref{thm:general_hac} and our definition of $\hat{c}$, we get $c$-approximate centroid HAC by running \cref{alg:SIMPLE_HAC} with $\N$. The runtime is $\Tilde{O}(n^{1 + O(\lambda^2/\hat{c}^2)} \log\Big(\frac{\Delta (1+\hat{c}) \lambda}{\delta (\lambda - 1)}\Big) d^2)$. Letting $\lambda = 1.005$ and by the assumptions that $\frac{\Delta}{\delta}, c \leq \poly(n)$, all $\log$ terms are dominated by the big $O$ in the exponent and we get the runtime $O(n^{1 + O(1/\hat{c}^2)} d^2) = O(n^{1 + O(1/c^2)} d^2)$.
\end{proof}

In \cref{alg:SIMPLE_HAC}, in the worst case we may have to queue a centroid $\log_{1+\epsilon}(\frac{2 \Delta}{\delta})$ times. However, in practice a centroid may only be queued a small number of times. Say a tuple $(l,x,y)$ in our priority queue becomes stale if the centroid $y$ is merged. Let $\Gamma$ be the average number of tuples in our priority queue that become stale with each merge. Our experiments summarized in \Cref{tab:staleness} suggest that $\Gamma$ is a small constant in practice with $\Gamma < 1$ for all experiments. The following lemma shows that in this case we lose the $\log_{1+\epsilon}(\frac{2 \Delta}{\delta})$ in the runtime.

\begin{lemma}
   If the average number of queued element made stale for each merge in Algorithm \ref{alg:SIMPLE_HAC} is $\Gamma$, then \cref{alg:SIMPLE_HAC} runs in time $O( n \cdot T_I + n \cdot T_D + n \cdot T_Q \cdot \Gamma + n \log (n) \cdot \Gamma)$.
\end{lemma}

\begin{proof}
    There are a total of $n-1$ merges during the runtime of \cref{alg:SIMPLE_HAC} so the total number of tuples that are made stale is $(n - 1) \Gamma$. There are an additional $n-1$ tuples that get merged. Thus, the total number of tuples that get dequeued throughout the runtime of the algorithm is $O(\Gamma)$. The rest of the proof is the same as that of \cref{lem:SIMPLE_HAC_RUNTIME} except we pay the cost of de-queuing on \cref{line:dequeue} $O(n \cdot \Gamma)$ times instead of $n \cdot \log_{1+\epsilon}(\frac{2 \Delta}{\delta})$.
\end{proof}
\section{Empirical Evaluation}\label{sec:experiments}
We empirically evaluate the performance and effectiveness of our approximate Centroid HAC algorithm through a comprehensive set of experiments.
Our analysis on various publicly-available benchmark clustering datasets demonstrates that our approximate Centroid HAC algorithm:
\begin{enumerate}[label=(\textbf{\arabic*})]%[label=({\arabic*}),topsep=1pt,itemsep=0pt,parsep=0pt,leftmargin=15pt]
\item Consistently produces clusters with quality comparable to that of exact Centroid HAC;
\item Achieves an average speed-up of 22$\times$ with a max speed-up of 175$\times$ compared to exact Centroid HAC when using 96 cores in our scaling study. On a single core, it achieves an average speed-up of 5$\times$ with a max speed-up of 36$\times$.
\end{enumerate}
% Our code will be made open-source ...
% Do we have to link the code as well?
%  Summarize quality and speed-up (avg, max)

\paragraph*{Experimental Setup:} We run our experiments on a 96-core Dell PowerEdge R940 (with two-way hyperthreading) machine with 4$\times$2.4GHz Intel 24-core 8160 Xeon processors (with 33MB L3 cache) and 1.5TB of main memory. Our programs are implemented in C++ and compiled with the g++ compiler (version 11.4) with \texttt{-O3} flag.

\paragraph*{Dynamic ANNS Implementation using DiskANN}
As demonstrated in previous works, although LSH-based algorithms have strong theoretical guarantees, they tend to perform poorly in practice when compared to graph-based ANNS data structures~\cite{ANNScaling}. 
For instance, in ParlayANN~\cite{ANNScaling}, the authors find that the recall achievable by LSH-based methods on a standard ANNS benchmark dataset are strictly dominated by state-of-the-art graph-based ANNS data structures.

Therefore, for our experiments, we use DiskANN, a widely-used state-of-the-art graph-based ANNS data structure~\cite{subramanya2019diskann}. 
In particular, we consider the in-memory version of this algorithm from ParlayANN, called Vamana.
In a nutshell, the algorithm builds a bounded degree {\em routing} graph that can be searched using beam search. Note that this graph is {\em not} the $k$-NN graph of the pointset. 
The graph is constructed using an incremental construction that adds bidirectional edges between a newly inserted point, and points traversed during a beam search for this point; if a point's degree exceeds the degree bound, the point is {\em pruned} to ensure a diverse set of neighbors. See \cite{subramanya2019diskann} for details.

For Centroid HAC, we require the ANNS implementation to support dynamic updates.
The implementation provided by ParlayANN currently only supports fast {\em static} index building and queries.
Recently, FreshDiskANN~\cite{singh2021freshdiskann} described a way to handle insertions and deletions via a \emph{lazy update} approach. 
However, their approach requires a periodic \emph{consolidation} step that scans through the graph and deletes inactive nodes and rebuilds the neighborhood of affected nodes, which is expensive. 

Instead, in our implementation of Centroid HAC, we adopt the following new approach that is simple and practical, and adheres to the updates required by our theoretical algorithm: when clusters $u$ and $v$ merge, choose one of them to represent the centroid, and update its neighborhood $N(u)$ (without loss of generality) to the set obtained by pruning the set $N(u)\cup N(v)$. The intuition here is that $N(u)\cup N(v)$ is a good representative for the neighborhood of the centroid of $u$ and $v$ in the routing graph. Further, during search, points will redirect to their current representative centroid (via union-find~\cite{dhulipala2020connectit}), thus allowing us to avoid updating the in-neighbors of a point in the index.
We believe application-driven update algorithms, such as the one described here, can help speed-up  algorithms that uses dynamic ANNS as a subroutine.

%We would like to emphasize that the index graph constructed by Vamana is not the $k$-NN graph, rather it is a routing graph that facilitates fast queries via \emph{greedy search}. Thus, the set $N(u)\cup N(v)$ above is not necessarily a candidate set for the $k$-nearest neighbors of the new centroid, but rather a good candidate for the neighborhood in the routing graph. Indeed, we observe good performance of our update algorithm in all of our experiments. 
%We believe application-driven update algorithms, such as the one described here, can help speed-up the performance of algorithms that uses dynamic ANNS as a subroutine. \kishen{should we keep this?}

\paragraph*{Algorithms Implemented}
We implement the approximate Centroid HAC algorithm described in \Cref{alg:SIMPLE_HAC}, using the dynamic ANNS implementation based on DiskANN~\cite{subramanya2019diskann, ANNScaling}. 
We denote this algorithm as $(1+\epsilon)$-Centroid HAC or Centroid\textsubscript{$\epsilon$}. 
We emphasize that although we refer to this as $(1+\epsilon)$-Centroid HAC, it is not necessarily a {\em true} $(1+\epsilon)$-approximate algorithm since we are using a heuristic ANNS data structure with no theoretical guarantees on the approximation factor. 
The use of $(1+\epsilon)$ here captures only the loss from merging a ``near-optimal'' pair of clusters, as detailed in \Cref{alg:SIMPLE_HAC}.

As our main baseline, we consider the optimized implementation of exact Centroid HAC from the \texttt{fastcluster}\footnote{\url{https://pypi.org/project/fastcluster/}} package. This implementation requires quadratic space since it maintains the distance matrix, and is a completely sequential algorithm. We also implement an efficient version of exact Centroid HAC based on our framework (i.e. setting $\epsilon=0$ and using exact NNS queries) which has the benefit of using only linear space and supports parallelism in NNS queries. 

We also implement a \emph{bucket-based} version of approximate Centroid HAC as a baseline, based on the approach of \cite{abboudhac}. The algorithm of \cite{abboudhac} runs in rounds processing pairs of clusters whose distance is within the threshold defined by that round; the threshold value scales by a constant factor, resulting in logarithmic (in aspect ratio) number of rounds. In each round, they consider a pair of clusters that are within the threshold distance away, and merge them. For the linkage criteria considered in \cite{abboudhac}, the distances between two clusters can only go up as a result of two clusters merging. However, this is not true for Centroid HAC, as discussed in \Cref{sec:intro}.

Here, we observe that when two clusters merge in a round, the only distances affected are distances of other clusters to the new merged cluster. Thus, we can repeatedly check if the nearest neighbor of this cluster is still within the threshold and merge with it in that case. Thus, in a nutshell, the algorithm filters clusters whose nearest neighbors are within the current threshold in each round and merges clusters by the approach mentioned above. This algorithm requires many redundant NNS queries, as we will see during the experimental evaluation.

% Error plots to support this approach

% We compare clustering quality and scalability with the exact centroid HAC implementation from fastcluster (cite). Further, we also compare with graph-based HAC algorithms (single, complete, average, ward). 

\subsection{Quality Evaluation}
We evaluate the clustering quality of our approximate centroid HAC algorithm against ground truth clusterings using standard metrics such as the \emph{Adjusted Rand Index (ARI)}, \emph{Normalized Mutual Information (NMI)}, \emph{Dendrogram Purity}, and the unsupervised \emph{Dasgupta cost}~\cite{Dasgupta2016Hierarchical}. 
Our primary objective is to assess the performance of our algorithm across various values of $\epsilon$ on a diverse set of benchmarks. 
% \kishen{Should be clear about what alg is being used / what $\eps$ is doing here.} 
We primarily compare our results with those obtained using exact Centroid HAC.

For these experiments, we consider the standard benchmark clustering datasets {\tt iris}, {\tt digits}, {\tt cancer}, {\tt wine}, and {\tt faces} from the UCI repository (obtained from the \texttt{sklearn.datasets} package). We also consider the {\tt MNIST} dataset which contains images of grayscale digits between 0 and 9, and {\tt birds}, a dataset containing images of 525 species of birds; see \Cref{sec:app_experiments} for more details. 

\paragraph*{Results.}
The experimental results are presented in \Cref{tab:quality-evals}, with a more detailed quality evaluation in \Cref{sec:app_qual_evals}. We summarize our results here.
\begin{table}[t]
\centering
\vspace{-1.5em}
%\footnotesize{}
\caption{\small The ARI and NMI scores of our approximate Centroid HAC implementations for $\epsilon = 0.1,0.2,0.4,$ and $0.8$, versus Exact Centroid HAC. The best quality score for each dataset is in bold and underlined.
% \kuba{I'd suggest adding rows with average ARI and average NMI - this could make it more clear that the methods perform similarly }
}
\label{tab:quality-evals}
\begin{adjustbox}{width=0.7\columnwidth,center}
\begin{tabular}{cc|cccc|c}
\toprule
\multicolumn{1}{l}{} &
  Dataset &
  Centroid\textsubscript{0.1} &
  Centroid\textsubscript{0.2} &
  Centroid\textsubscript{0.4} &
  Centroid\textsubscript{0.8} &
  Exact Centroid\\
\midrule
\multirow{8}{*}{\begin{sideways}ARI\end{sideways}} & {\tt iris}   &  \best{0.759} & 0.746 & 0.638 & 0.594 & \best{0.759} \\
     & {\tt wine} & 0.352 & 0.352 & \best{0.402} & 0.366 & 0.352 \\
     & {\tt cancer} & 0.509 & 0.526 & 0.490 & \best{0.641} & 0.509 \\
     & {\tt digits} & 0.589 & 0.571 & 0.576 & \best{0.627} & 0.559 \\
     & {\tt faces}  & 0.370 & 0.388 & \best{0.395} & 0.392 & 0.359 \\
     & {\tt mnist}  &  \best{0.270} & 0.222 & 0.218 & 0.191 & 0.192 \\
     & {\tt birds}  & 0.449 & 0.449 & 0.442 & \best{0.456} & 0.441 \\
\cmidrule{2-7}
     & {\bf Avg} & \best{0.471} & 0.465 & 0.452 & 0.467 & 0.453 \\
\midrule 
\multirow{8}{*}{\begin{sideways}NMI\end{sideways}} & {\tt iris} & \best{0.803} & 0.795 & 0.732 & 0.732 & \best{0.803} \\
     & {\tt wine}   & 0.424 & 0.424 & 0.413 & 0.389 & 0.424 \\
     & {\tt cancer} & 0.425 & 0.471 & 0.459 & \best{0.528} & 0.425 \\
     & {\tt digits} & 0.718 & 0.726 & 0.707 & \best{0.754} & 0.727 \\
     & {\tt faces}  & 0.539 & 0.534 & 0.549 & 0.549 & \best{0.556} \\
     & {\tt mnist}  &  0.291 & 0.282 & 0.306 & \best{0.307} & 0.250 \\
     & {\tt birds}  & 0.748 & 0.747 & 0.756 & \best{0.764} & 0.743 \\
\cmidrule{2-7}
     & {\bf Avg} & 0.564 & 0.569 & 0.560 & \best{0.575} & 0.561 \\
\bottomrule
\end{tabular}
\end{adjustbox}
\end{table}

We observe that the quality of the clustering produced by approximate Centroid HAC is generally comparable to that of the exact Centroid HAC algorithm. On average, the quality is slightly better in some cases, but we attribute this to noise.
In particular, we observe that for the value of $\epsilon=0.1$, we consistently get comparable quality to that of exact centroid: the ARI and NMI scores are on average within a factor of 7\% and 2\% to that of exact Centroid HAC, respectively. We also obtained good dendrogram purity score and Dasgupta cost as well, with values within 0.3\% and 0.03\%, respectively.

% Interestingly, sometimes we observe better clustering quality for larger $\epsilon$ values (for instance, see results for the cancer dataset in \Cref{tab:quality-evals}).
% However, this 
% Quality of centroid varies drastically based on tie-breaking rules, and order of merges. We observe sometimes better quality for larger $\epsilon$ values (for instance, see cancer dataset). 
% 
% We also evaluated the bucket-based implementation of approximate centroid HAC; see \Cref{sec:app_qual_evals}.
% Can propose eps=0.1, comparable to exact centroid. Don't recommend using higher values, even though better in some cases, but it is mostly because we got lucky.

\subsection{Running Time Evaluation}

\begin{figure}
    \centering
    \begin{subfigure}[b]{0.32\textwidth}
        \centering
        \includegraphics[width=\textwidth]{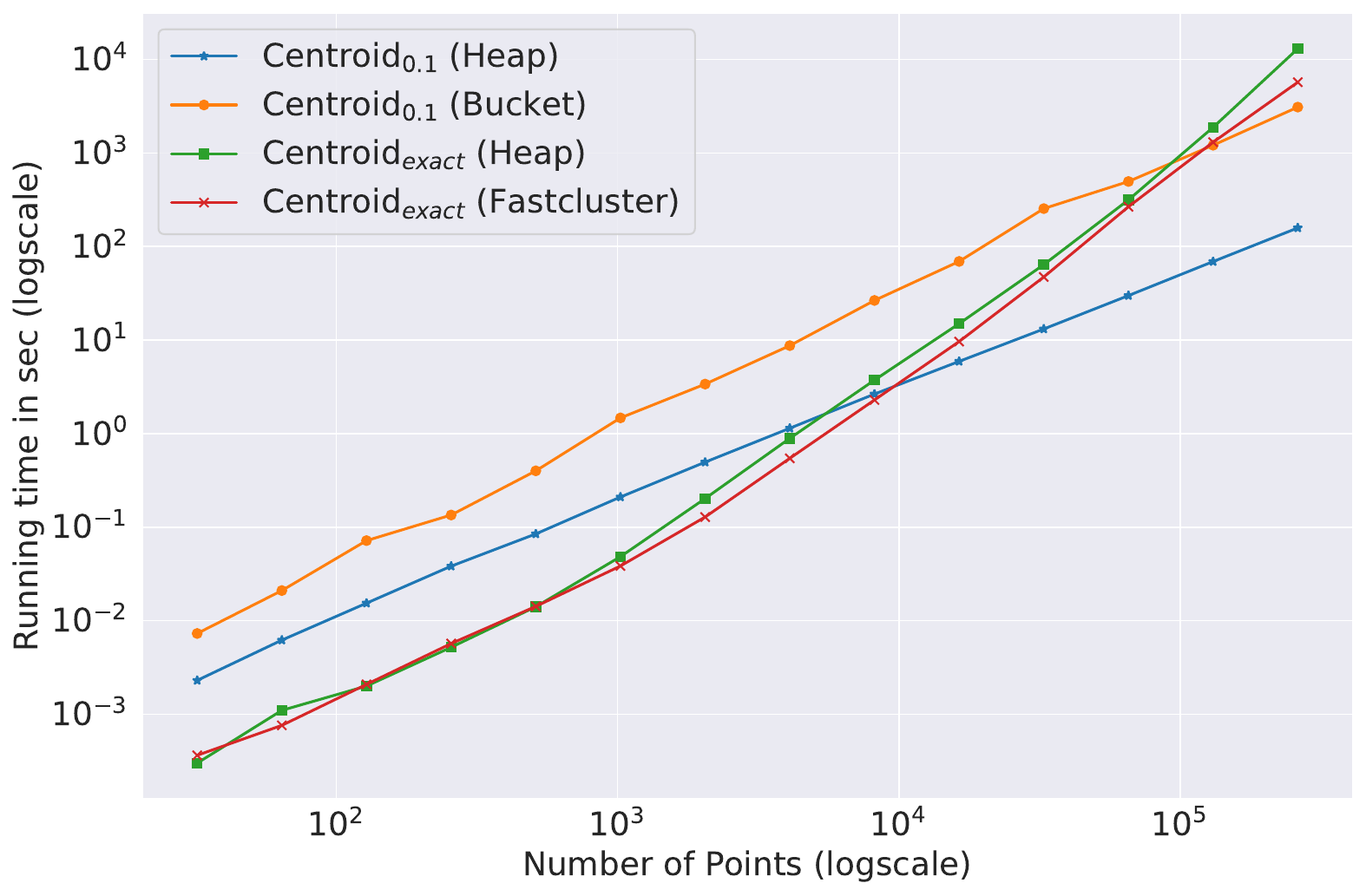}
    \caption{}
    \label{fig:plot_runtime_numpoints_single_thread}
    \end{subfigure}
    \hfill
    \begin{subfigure}[b]{0.32\textwidth}
        \centering
        \includegraphics[width=\textwidth]{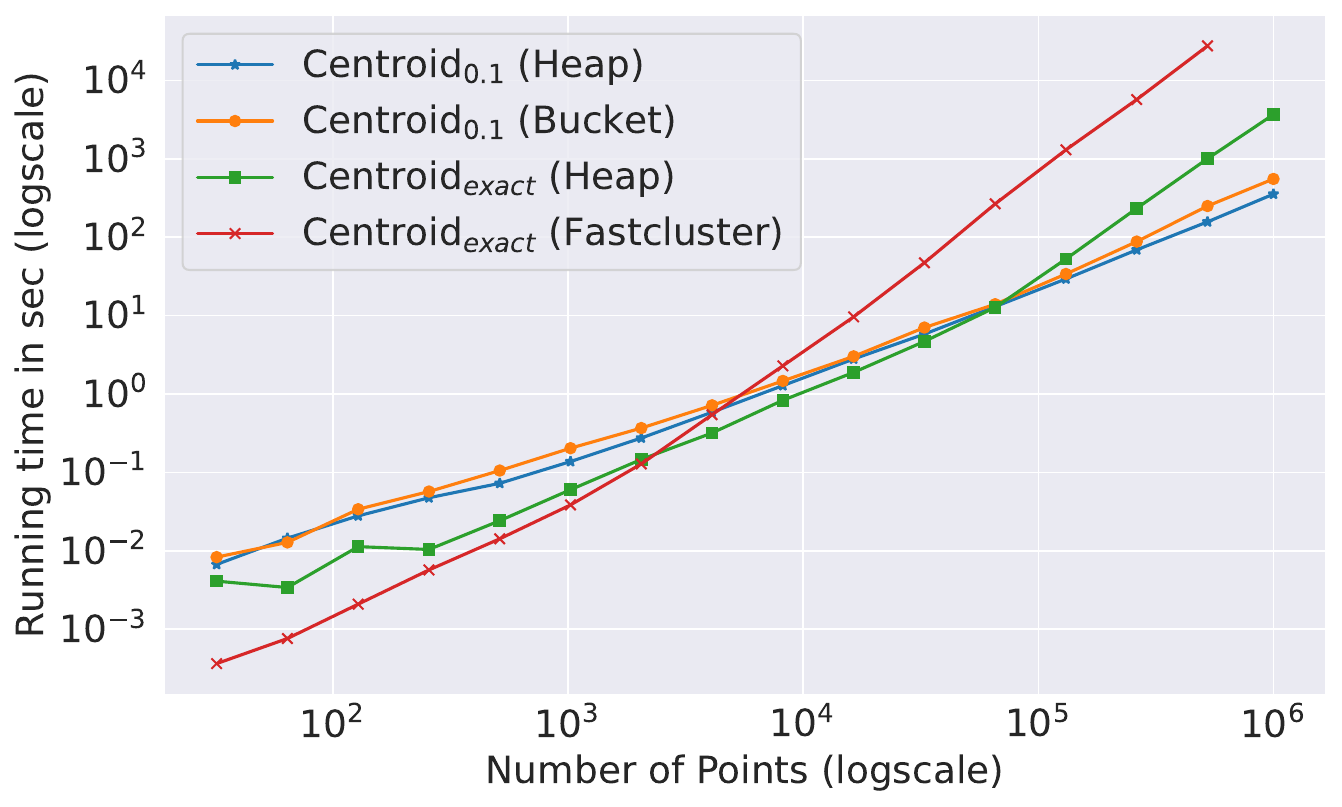}
    \caption{}
    \label{fig:plot_runtime_numpoints}
    \end{subfigure}
    \begin{subfigure}[b]{0.32\textwidth}
        \centering
        \includegraphics[width=\textwidth]{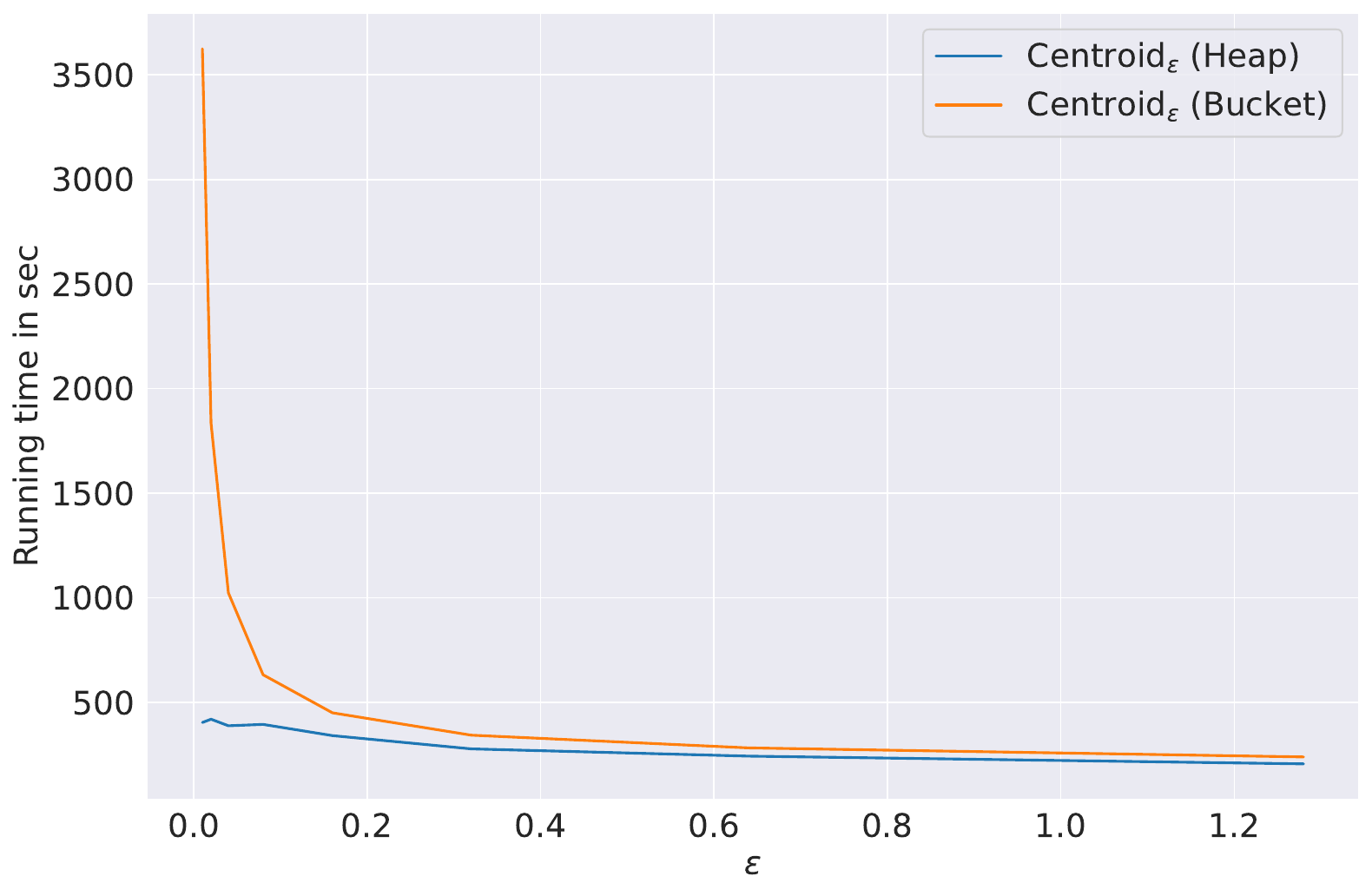}
    \caption{}
    \label{fig:plot_runtime_eps}
    \end{subfigure}
    \caption{\small Running times of fastcluster's centroid HAC, our implementation of exact centroid HAC, and the heap-based and bucket-based approximate centroid HAC with $\epsilon=0.1$. In \Cref{fig:plot_runtime_numpoints_single_thread}, our approximate and exact implementations are run on 1 core, whereas in \Cref{fig:plot_runtime_numpoints} they have access to 192 cores. \Cref{fig:plot_runtime_eps} compares the running times of heap and bucket based algorithms as a function of $\epsilon$.}
\end{figure}
Next, we evaluate the scalability of approximate Centroid HAC against exact Centroid HAC on large real-world pointsets. We consider the optimized exact Centroid HAC implementation from the \texttt{fastcluster} package~\cite{mullner2013fastcluster}. We also implement an efficient version of exact Centroid HAC based on our framework (i.e. setting $\epsilon=0$ and using exact NNS queries) which has the benefit of using only linear space and supports parallelism in NNS queries. We also implement a \emph{bucket-based} version of approximate Centroid HAC as a baseline, based on the approach of \cite{abboudhac} along with an observation to handle the non-monotonicity of Centroid HAC; details in \Cref{sec:app_experiments}.
% \kuba{I think we should also consider $n^2$ time exact centroid HAC}

\paragraph*{Results.} 
We now summarize the results of our scalability study; see \Cref{sec:app_runtime_evals} for a more details and plots. \Cref{fig:plot_runtime_numpoints_single_thread,fig:plot_runtime_numpoints} shows the running times on varying slices of the {\tt SIFT-1M} dataset using one thread and 192 parallel threads, respectively. For the approximate methods, we considered $\epsilon=0.1$. \Cref{fig:plot_runtime_eps} further compares the heap and bucket based approaches as a function of $\epsilon$. 

We observe that the bucket based approach is very slow for small values of $\epsilon$ due to many redundant nearest-neighbor computations. Yet, at $\epsilon=0.1$, both algorithms, with 192 threads, obtain speed-ups of up to 175$\times$ with an average speed-up of 22$\times$ compared to the exact counterparts. However, on a single thread, the bucket based approach fails to scale, while the heap based approach still achieves significant speed-ups of upto 36$\times$ with an average speed-up of 5$\times$.
Overall, our heap-based approximate Centroid HAC implementation with $\epsilon=0.1$ demonstrates good quality and scalability, making it a good choice in practice. 
% \enote{Are these supposed to be not filled in}

% Add table of speed-up values in appendix

% Appendix compares num of queries made as well. There is noticeable speedu-ups due to parallelism amongst most queries. We run on a single thread to see the self-relative speed-up, the bucket based algorithm performs poorly compared to the heap-based approach.

% Running time comparison on a single thread for H vs B (1M)
% (1K or 10K or 50 K or 100K) for H vs B for all eps (1hr limit?)
% Bar plots for running time on smaller datasets
% 

\section{Conclusion}
In this work we gave an approximate algorithm for Centroid-Linkage HAC which runs in subquadratic time.
Our algorithm is obtained by way of a new ANNS data structure which works correctly under adaptive updates which may be of independent interest.
On the empirical side we have demonstrated up to 36$\times$ speedup compared to the existing baselines.
An interesting open question is whether approximate Centroid-Linkage HAC admits a theoretically and practically efficient \emph{parallel} algorithm.

\section*{Acknowledgements}
This work was supported by NSF grant CNS-2317194.
We thank Vincent Cohen-Addad for helpful discussions.

\bibliography{main}

\begin{thebibliography}{10}

\bibitem{abboudhac}
Amir Abboud, Vincent Cohen-Addad, and Hussein Houdrouge.
\newblock Subquadratic high-dimensional hierarchical clustering.
\newblock In {\em Annual Conference on Neural Information Processing Systems (NeurIPS)}, 2019.

\bibitem{andoni2009nearest}
Alexandr Andoni.
\newblock {\em Nearest neighbor search: the old, the new, and the impossible}.
\newblock PhD thesis, Massachusetts Institute of Technology, 2009.

\bibitem{andoni2008near}
Alexandr Andoni and Piotr Indyk.
\newblock Near-optimal hashing algorithms for approximate nearest neighbor in high dimensions.
\newblock {\em Communications of the ACM}, 51(1):117--122, 2008.

\bibitem{bateni2024s}
MohammadHossein Bateni, Laxman Dhulipala, Kishen~N Gowda, D~Ellis Hershkowitz, Rajesh Jayaram, and Jakub {\L}{\k{a}}cki.
\newblock It's hard to hac with average linkage!
\newblock In {\em International Colloquium on Automata, Languages and Programming (ICALP)}, 2024.

\bibitem{bateni2023optimal}
MohammadHossein Bateni, Hossein Esfandiari, Hendrik Fichtenberger, Monika Henzinger, Rajesh Jayaram, Vahab Mirrokni, and Andreas Wiese.
\newblock Optimal fully dynamic k-center clustering for adaptive and oblivious adversaries.
\newblock In {\em Proceedings of the 2023 Annual ACM-SIAM Symposium on Discrete Algorithms (SODA)}, pages 2677--2727. SIAM, 2023.

\bibitem{beimel2022dynamic}
Amos Beimel, Haim Kaplan, Yishay Mansour, Kobbi Nissim, Thatchaphol Saranurak, and Uri Stemmer.
\newblock Dynamic algorithms against an adaptive adversary: generic constructions and lower bounds.
\newblock In {\em Proceedings of the 54th Annual ACM SIGACT Symposium on Theory of Computing}, pages 1671--1684, 2022.

\bibitem{benzecri}
J-P Benz{\'e}cri.
\newblock Construction d'une classification ascendante hi{\'e}rarchique par la recherche en cha{\^\i}ne des voisins r{\'e}ciproques.
\newblock {\em Cahiers de l'analyse des donn{\'e}es}, 7(2):209--218, 1982.

\bibitem{Dasgupta2016Hierarchical}
Sanjoy Dasgupta.
\newblock A cost function for similarity-based hierarchical clustering.
\newblock In {\em Annual ACM Symposium on Theory of Computing (STOC)}, page 118–127, New York, NY, USA, 2016. Association for Computing Machinery.
\newblock \href {https://doi.org/10.1145/2897518.2897527} {\path{doi:10.1145/2897518.2897527}}.

\bibitem{de2023dynamic}
Sarita de~Berg and Frank Staals.
\newblock Dynamic data structures for k-nearest neighbor queries.
\newblock {\em Computational Geometry}, 111:101976, 2023.

\bibitem{mnist}
Li~Deng.
\newblock The mnist database of handwritten digit images for machine learning research [best of the web].
\newblock {\em IEEE Signal Processing Magazine}, 29(6):141--142, 2012.
\newblock \href {https://doi.org/10.1109/MSP.2012.2211477} {\path{doi:10.1109/MSP.2012.2211477}}.

\bibitem{dhulipala2021hierarchical}
Laxman Dhulipala, David Eisenstat, Jakub {\L}{\k{a}}cki, Vahab Mirrokni, and Jessica Shi.
\newblock Hierarchical agglomerative graph clustering in nearly-linear time.
\newblock In {\em International Conference on Machine Learning (ICML)}, pages 2676--2686, 2021.

\bibitem{dhulipala2022hierarchical}
Laxman Dhulipala, David Eisenstat, Jakub {\L}{\k{a}}cki, Vahab Mirrokni, and Jessica Shi.
\newblock Hierarchical agglomerative graph clustering in poly-logarithmic depth.
\newblock {\em Annual Conference on Neural Information Processing Systems (NeurIPS)}, 35:22925--22940, 2022.

\bibitem{dhulipala2020connectit}
Laxman Dhulipala, Changwan Hong, and Julian Shun.
\newblock Connectit: a framework for static and incremental parallel graph connectivity algorithms.
\newblock {\em Proceedings of the VLDB Endowment}, 14(4):653--667, 2020.

\bibitem{dhulipala2023terahac}
Laxman Dhulipala, Jakub {\L}{\k{a}}cki, Jason Lee, and Vahab Mirrokni.
\newblock Terahac: Hierarchical agglomerative clustering of trillion-edge graphs.
\newblock {\em Proceedings of the ACM on Management of Data}, 1(3):1--27, 2023.

\bibitem{ANNScaling}
Magdalen Dobson, Zheqi Shen, Guy~E. Blelloch, Laxman Dhulipala, Yan Gu, Harsha~Vardhan Simhadri, and Yihan Sun.
\newblock Scaling graph-based {ANNS} algorithms to billion-size datasets: {A} comparative analysis.
\newblock {\em CoRR}, abs/2305.04359, 2023.
\newblock \href {https://arxiv.org/abs/2305.04359} {\path{arXiv:2305.04359}}, \href {https://doi.org/10.48550/arXiv.2305.04359} {\path{doi:10.48550/arXiv.2305.04359}}.

\bibitem{har2012approximate}
Sariel Har-Peled, Piotr Indyk, and Rajeev Motwani.
\newblock Approximate nearest neighbor: Towards removing the curse of dimensionality.
\newblock 2012.

\bibitem{Clusteringjl}
JuliaStats.
\newblock Clustering.jl.
\newblock URL: \url{https://juliastats.org/Clustering.jl/stable/hclust.html}.

\bibitem{kapron2013dynamic}
Bruce~M Kapron, Valerie King, and Ben Mountjoy.
\newblock Dynamic graph connectivity in polylogarithmic worst case time.
\newblock In {\em Proceedings of the twenty-fourth annual ACM-SIAM symposium on Discrete algorithms}, pages 1131--1142. SIAM, 2013.

\bibitem{karczmarz2019reliable}
Adam Karczmarz and Jakub {\L}{\k{a}}cki.
\newblock Reliable hubs for partially-dynamic all-pairs shortest paths in directed graphs.
\newblock {\em arXiv preprint arXiv:1907.02266}, 2019.

\bibitem{karthik2020closest}
CS~Karthik and Pasin Manurangsi.
\newblock On closest pair in euclidean metric: Monochromatic is as hard as bichromatic.
\newblock {\em Combinatorica}, 40(4):539--573, 2020.

\bibitem{mathematica}
Wolfram: Computation~Meets Knowledge.
\newblock Hierarchical clustering.
\newblock URL: \url{https://reference.wolfram.com/language/HierarchicalClustering/tutorial/HierarchicalClustering}.

\bibitem{lattanzi2020framework}
Silvio Lattanzi, Thomas Lavastida, Kefu Lu, and Benjamin Moseley.
\newblock A framework for parallelizing hierarchical clustering methods.
\newblock In {\em Machine Learning and Knowledge Discovery in Databases: European Conference, ECML PKDD 2019, W{\"u}rzburg, Germany, September 16--20, 2019, Proceedings, Part I}, pages 73--89. Springer, 2020.

\bibitem{Liu_2022_CVPR}
Zhuang Liu, Hanzi Mao, Chao-Yuan Wu, Christoph Feichtenhofer, Trevor Darrell, and Saining Xie.
\newblock A convnet for the 2020s.
\newblock In {\em Proceedings of the IEEE/CVF Conference on Computer Vision and Pattern Recognition (CVPR)}, pages 11976--11986, June 2022.

\bibitem{mathworks_linkage}
MathWorks.
\newblock Agglomerative hierarchical cluster tree - matlab linkage.
\newblock \url{https://www.mathworks.com/help/stats/linkage.html}, 2024.

\bibitem{mullner2013fastcluster}
Daniel M{\"u}llner.
\newblock fastcluster: Fast hierarchical, agglomerative clustering routines for r and python.
\newblock {\em Journal of Statistical Software}, 53:1--18, 2013.

\bibitem{murtagh1983survey}
Fionn Murtagh.
\newblock A survey of recent advances in hierarchical clustering algorithms.
\newblock {\em The computer journal}, 26(4):354--359, 1983.

\bibitem{murtagh2012algorithms}
Fionn Murtagh and Pedro Contreras.
\newblock Algorithms for hierarchical clustering: an overview.
\newblock {\em Wiley Interdisciplinary Reviews: Data Mining and Knowledge Discovery}, 2(1):86--97, 2012.

\bibitem{murtagh2017algorithms}
Fionn Murtagh and Pedro Contreras.
\newblock Algorithms for hierarchical clustering: an overview, ii.
\newblock {\em Wiley Interdisciplinary Reviews: Data Mining and Knowledge Discovery}, 7(6):e1219, 2017.

\bibitem{nanongkai2017dynamic}
Danupon Nanongkai and Thatchaphol Saranurak.
\newblock Dynamic spanning forest with worst-case update time: adaptive, las vegas, and o (n1/2-$\varepsilon$)-time.
\newblock In {\em Proceedings of the 49th Annual ACM SIGACT Symposium on Theory of Computing}, pages 1122--1129, 2017.

\bibitem{pedregosa2011scikit}
Fabian Pedregosa, Ga{\"e}l Varoquaux, Alexandre Gramfort, Vincent Michel, Bertrand Thirion, Olivier Grisel, Mathieu Blondel, Peter Prettenhofer, Ron Weiss, Vincent Dubourg, et~al.
\newblock Scikit-learn: Machine learning in python.
\newblock {\em the Journal of machine Learning research}, 12:2825--2830, 2011.

\bibitem{hclust}
RDocumentation.
\newblock hclust function - rdocumentation.
\newblock URL: \url{https://www.rdocumentation.org/packages/stats/versions/3.6.2/topics/hclust}.

\bibitem{shearer1985alglib}
JM~Shearer and Michael~A Wolfe.
\newblock Alglib, a simple symbol-manipulation package.
\newblock {\em Communications of the ACM}, 28(8):820--825, 1985.

\bibitem{singh2021freshdiskann}
Aditi Singh, Suhas~Jayaram Subramanya, Ravishankar Krishnaswamy, and Harsha~Vardhan Simhadri.
\newblock Freshdiskann: A fast and accurate graph-based ann index for streaming similarity search, 2021.
\newblock \href {https://arxiv.org/abs/2105.09613} {\path{arXiv:2105.09613}}.

\bibitem{subramanya2019diskann}
Suhas~Jayaram Subramanya, Devvrit, Rohan Kadekodi, Ravishankar Krishnaswamy, and Harsha Simhadri.
\newblock Diskann: Fast accurate billion-point nearest neighbor search on a single node.
\newblock In {\em NeurIPS 2019}, November 2019.

\bibitem{virtanen2020scipy}
Pauli Virtanen, Ralf Gommers, Travis~E Oliphant, Matt Haberland, Tyler Reddy, David Cournapeau, Evgeni Burovski, Pearu Peterson, Warren Weckesser, Jonathan Bright, et~al.
\newblock Scipy 1.0: fundamental algorithms for scientific computing in python.
\newblock {\em Nature methods}, 17(3):261--272, 2020.

\bibitem{wajc2020rounding}
David Wajc.
\newblock Rounding dynamic matchings against an adaptive adversary.
\newblock In {\em Proceedings of the 52nd Annual ACM SIGACT Symposium on Theory of Computing}, pages 194--207, 2020.

\bibitem{yu2023pecann}
Shangdi Yu, Joshua Engels, Yihao Huang, and Julian Shun.
\newblock Pecann: Parallel efficient clustering with graph-based approximate nearest neighbor search, 2023.
\newblock \href {https://arxiv.org/abs/2312.03940} {\path{arXiv:2312.03940}}.

\bibitem{yu2021parchain}
Shangdi Yu, Yiqiu Wang, Yan Gu, Laxman Dhulipala, and Julian Shun.
\newblock Parchain: A framework for parallel hierarchical agglomerative clustering using nearest-neighbor chain.
\newblock {\em arXiv preprint arXiv:2106.04727}, 2021.

\end{thebibliography}

\newpage
\appendix
\section{Proof of \cref{thm:nonAdaptiveAnn}}\label{sec:nonAdANN}
In the interest of completion, we give a proof of \cref{thm:nonAdaptiveAnn} (dynamic ANNS for oblivious updates).

\subsection{Locality Sensitive Hashing}
We will make use of the well-known algorithmic primitive of locality sensitive hashing.
\begin{definition}[Locality Sensitive Hashing]
    A family of hash functions $\mcH$ is called $(r, cr, p_1, p_2)$-sensitive if for any $p, q \in \mathbb{R}^d$ we have
    \begin{enumerate}
        \item \textbf{Close Points Probably Collide:} If $\dist{u}{v} \leq r$ then $\Pr_{h \sim H}(h(u) = h(v)) \geq p_1$.
        \item \textbf{Far Points Probably Don't Collide:} If $\dist{u}{v} \geq cr$ then $\Pr_{h \sim \mcH}(h(u) = h(v)) \leq p_2$.
    \end{enumerate}
\end{definition}

We will make use of known locality sensitive hashing schemes for Euclidean distance in $\mathbb{R}^d$. For this result, see also \cite{andoni2008near}.

\begin{theorem}[Lemma 3.2.3 of \cite{ andoni2009nearest}]
\label{thm:EucLSH}
    Given $r$ and $c$, there is a $(r, cr, p_1, p_2)$-sensitive LSH family with
    \begin{align*}
    \frac{\log(1/p_1)}{\log(1/p_2)} = 1/c^2 + o(1) & \text{ \qquad and \qquad } \log(1/p_2) \leq o(\log n).
    \end{align*} 
    Furthermore, one can sample $h \sim \mcH$ and  given $u \in \mathbb{R}^d$ compute $h(u)$ in time $dn^{o(1)}$.
\end{theorem}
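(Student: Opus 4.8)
The plan is to realize $\mathcal{H}$ as a composition $h = g \circ A$, where $A : \mathbb{R}^d \to \mathbb{R}^t$ is a random scaled-Gaussian projection into $t = t(n)$ dimensions with $1 \ll t \ll \log n$, and $g$ is a random \emph{ball-carving} hash on the low-dimensional space $\mathbb{R}^t$. The projection is used purely to shrink the ambient dimension enough that $g$ can be evaluated in subpolynomial time; it lies far below the Johnson--Lindenstrauss threshold and does not preserve all pairwise distances, but this is not needed. For the LSH guarantee we reason about one fixed pair at a time: for a fixed pair at distance $\le r$ we only use the upper tail of its projected distance, for a fixed pair at distance $\ge cr$ only the lower tail, and the remaining (small) failure probabilities of $A$ are absorbed into $p_1$ and $p_2$ respectively.

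First I would set up $g$ on $\mathbb{R}^t$. Fix a ball radius $w$ and a grid spacing $\eta = \Theta(w)$ so that the balls $B(z,w)$ over $z \in \eta\mathbb{Z}^t$ are pairwise disjoint, draw independent uniform shifts $s_1, \dots, s_U \in [0,\eta]^t$ with $U = 2^{\Theta(t \log t)}$, and set $G_i = \eta\mathbb{Z}^t + s_i$. Define $g(x) = (i^\star, z)$, where $i^\star$ is the least index with $x \in B(z,w)$ for the unique $z \in G_{i^\star}$, and a dedicated sink value if no such index exists; since the per-level covering probability $q = \mathrm{vol}(B(w))/\eta^t$ satisfies $Uq = \omega(1)$, the sink is reached with probability $n^{-\omega(1)}$ and is harmless. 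For two points at distance $\rho$, summing a geometric series over levels shows that $\Pr[g(x)=g(y)]$ equals, up to a $(1\pm o(1))$ factor, $\mathrm{vol}(B(x,w)\cap B(y,w))/\mathrm{vol}(B(w))$, and the standard high-dimensional volume estimate gives $\log(1/\Pr[g(x)=g(y)]) = \tfrac{t}{2}\ln\frac{1}{1-(\rho/2w)^2}$ up to lower-order terms. Evaluating this at the two working scales --- the projected images of $r$ and $cr$ --- and taking $w$ to be a growing (say $\Theta(\sqrt{t})$) factor larger than them drives the ratio of the two logarithms to $1/c^2$ with additive error $O(1/t) = o(1)$, while keeping $\log(1/p_2) = o(\log n)$.

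Then I would compose and fix parameters. For a fixed pair at distance $\le r$, the projected distance is $\le (1+\epsilon)\sqrt{t/d}\,r$ except with probability $e^{-\Omega(\epsilon^2 t)}$, so $p_1$ is $(1 - e^{-\Omega(\epsilon^2 t)})$ times the ball-carving collision bound at scale $(1+\epsilon)\sqrt{t/d}\,r$; symmetrically, for a fixed pair at distance $\ge cr$ the projected distance is $\ge (1-\epsilon)\sqrt{t/d}\,cr$ except with probability $e^{-\Omega(\epsilon^2 t)}$, which we add directly into $p_2$. Taking $\epsilon = o(1)$ with $\epsilon^2 t = \omega(1)$ makes this additive term negligible against the target $p_2$ and contributes only an $O(\epsilon)$ perturbation to the exponent, giving $\log(1/p_1)/\log(1/p_2) = 1/c^2 + o(1)$ and $\log(1/p_2) = o(\log n)$. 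For the running time, sampling $h$ costs $O(dt)$ to draw $A$ and $n^{o(1)}$ to draw the $U = n^{o(1)}$ shifts, while evaluating $h(u)$ costs $O(dt)$ to compute $Au$ plus $O(t)$ per level to round to the nearest grid point and test ball membership, i.e.\ $O(dt) + U \cdot O(t) = d \cdot n^{o(1)}$ since $t = o(\log n)$.

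The crux is the joint choice of $t$, $w$ and $\epsilon$: $t$ must grow ($t = \omega(1)$) so that the projection distortion and the $O(1/t)$ finite-dimension corrections in the ball-carving exponent are $o(1)$, and so that $e^{-\Omega(\epsilon^2 t)}$ is dominated by the target $p_2$, yet $t$ must stay small enough ($t\log t = o(\log n)$) that the $U = 2^{\Theta(t\log t)}$ levels cost only a subpolynomial factor; meanwhile $w$ must be large relative to the working scales to push the exponent ratio to $1/c^2$ but not so large that $\log(1/p_2)$ blows up. Once these are pinned down, the ball-intersection volume estimate and the per-level geometric series are routine.
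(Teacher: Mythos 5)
The paper does not prove this statement at all: \cref{thm:EucLSH} is imported as a black box (Lemma 3.2.3 of Andoni's thesis, i.e.\ the Andoni--Indyk near-optimal Euclidean LSH), so there is no in-paper proof to compare against. Your sketch is, in substance, a reconstruction of the proof of that cited result --- random Gaussian projection to $t=\omega(1)$, $t\log t = o(\log n)$ dimensions with only one-sided per-pair tail bounds, followed by ball carving over $U = 2^{\Theta(t\log t)} = n^{o(1)}$ randomly shifted grids, with the collision probability controlled by the lens/cap volume and the exponent ratio driven to $1/c^2$ by taking the ball radius $w$ large relative to the working scale. The decomposition, the absorption of projection failures into $p_1$ and $p_2$ separately, and the time accounting are all the right ingredients.

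There is, however, one concrete mis-tuning that would break the stated error bound. Writing $h=\rho/2w$ for the normalized scale, the true estimate is $\log(1/p) = \frac{t}{2}\ln\frac{1}{1-h^2} + \ln(h\sqrt{t}) + O(1)$, where the $O(1)$ also absorbs the constant-factor slack from the geometric series over levels (normalizing by $\mathrm{vol}(B)$ versus $\mathrm{vol}(B\cup B')$). These corrections are lower order only when the main term $\frac{t}{2}h^2$ tends to infinity. Your choice $w = \Theta(\sqrt{t})\cdot\rho$ gives $h = \Theta(1/\sqrt{t})$, hence $\frac{t}{2}h^2 = \Theta(1)$, so the ``lower-order'' terms are of the same order as the main term and the ratio $\log(1/p_1)/\log(1/p_2)$ is only $\Theta(1)/\Theta(c^2)$ with uncontrolled constants, not $1/c^2+o(1)$. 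The correct regime is $1 \ll w/\rho \ll \sqrt{t}$, e.g.\ $w = t^{1/4}\rho$, which makes $th^2 = \sqrt{t} \to \infty$ while $h^2 \to 0$, yielding ratio error $O(h^2) + O(\log t/(th^2)) = o(1)$ and $\log(1/p_2) = \Theta(c^2\sqrt{t}) = o(\log n)$. A second, minor slip: $Uq = \omega(1)$ gives sink probability $e^{-\omega(1)} = o(1)$, not $n^{-\omega(1)}$; fortunately $o(1)$ relative to the constant-order $p_1, p_2$ is all the argument needs.
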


\thmobliviousann*

\begin{proof}
We first construct a data structure which succeeds with constant probability for a given query. To do so we begin by turning our LSH hashes from \Cref{thm:EucLSH} into a hash function which is an ``$L$ ors of $K$ ands''. In particular, we let 
 \begin{align*}
    L := n^{1/c^2 + o(1)} \cdot \Theta(\log n).% + \gamma \cdot \Theta(\log n).
\end{align*}
and let 
 \begin{align*}
     K := \frac{1}{\log(1/p_2)} \cdot \left( \Theta(\log n) + \log L + \log \log (\Delta / \beta) \right).
 \end{align*}
 Notice that by \Cref{thm:EucLSH} we have that $\log(1/p_2) \leq o(\log n)$ and by assumption we have $\log(\Delta / \beta), \gamma \leq \poly(n)$ and so we have
 \begin{align}\label{eq:asf}
     K \cdot L & \leq n^{1/c^2 + o(1)}
     %\log L \cdot  n^{1/c^2 + o(1)} \cdot \gamma\nonumber\\
     %&= \left(\gamma \cdot n^{1/c^2} + \gamma^2 \right) \cdot n^{o(1)}.
 \end{align}
For each $i$ such that $\beta \leq 2^i \leq \Delta$, we define $L$-many new hash functions, each of which consists of $K$-many $h$ from \Cref{thm:EucLSH}. In particular, for $i \in [\log(\Delta / \beta)]$, $l \in [L]$ and $k \in [K]$, we let $h_{ikl}$ be a uniformly random $h$ sampled according to \Cref{thm:EucLSH} using radius $r = 2^i$ and our input $c$. Likewise, we let $g_{il} : \mathbb{R}^d \to \mathbb{R}^k$ be a new hash function defined on $u \in \mathbb{R}^d$ as
\begin{align*}
    g_{il}(u) = (h_{il1}(u), h_{il2}(u), \ldots, h_{ilk}(u)).
\end{align*}
We let $G_i := \{g_{il} : l \leq L\}$ be all such hash functions for $i$ and let $\mcG = \bigcup_i G_i$ be all such hash functions across scales.

By \Cref{thm:EucLSH}, sampling each constituent hash functions for each $g \in \mcG$ can be done in time $d n^{o(1)}$ and so by this and \Cref{eq:asf} initializing all hash functions in $\mcG$ takes time
\begin{align}\label{eq:time}
    d n^{o(1)} \cdot L \cdot K \cdot \log(\Delta / \beta)  = n^{1/c^2 + o(1)}  \cdot \log(\Delta / \beta)  \cdot d
\end{align}
% Consider an $i$ such that $2^{i} \in [\delta, \Delta]$; observe that the number of such $i$ is $\log(\Delta/\delta)$. We let $H_i$ be a set of 
% \begin{align*}
%     L := n^{1/c^2 + o(1)} \cdot \Theta(\log n) \cdot d \log d
% \end{align*}
% independent random hash functions as per \Cref{thm:EucLSH} using the radius $r = 2^i$ and input $c$. By \Cref{thm:EucLSH}, initializing each hash function takes time $d n^{o(1)}$ and so initializing all of our hash functions takes time
% \begin{align*}
%     d n^{o(1)} \cdot n^{1/c^2+o(1)} \cdot \Theta(\log n) \cdot d \log d \cdot \log(\Delta / \delta) = \tilde{O} \left( n^{1/c^2 + o(1)} \cdot d^2 \cdot \log(\Delta / \delta)\right).
% \end{align*}
This initialization time will be dominated by the time it takes to insert a single point. For each such hash function we will maintain a collection of buckets which partitions points in $S$.

To insert a point into $S$, we simply compute its hash for each $g \in \mcG$ and store this hash in each corresponding bucket. To delete a point from $S$ we simply delete it from all buckets to which it hashes. Lastly, to answer a query on $(u, \bar{u})$, we hash $u$ according to each $g \in \mcG$ and then return the $v \in S \setminus \bar{u}$ such that for minimum $i$ we have a $g \in G_i$ such that $g(u) = g(v)$. Since by \Cref{thm:EucLSH} evaluating each hash function takes time $dn^{o(1)}$, we get that all of these operations also take time at most that given by \Cref{eq:time}.
% \begin{align*}
%     d n^{o(1)} \cdot L \cdot K \cdot \log(\Delta / \beta)  = n^{1/c^2 + o(1)} \cdot d \cdot \gamma \cdot \log(\Delta / \beta).
% \end{align*}

% Since evaluating a single hash function in $\bigcup_i H_i$ takes time $d n^{o(1)}$ by \Cref{thm:EucLSH}, we have that all of these operations take time
% \begin{align*}
%     n^\eps \cdot \Theta(\log n) \we cdot d^2 \log d \cdot \log(\Delta / \delta).
% \end{align*}

% Lastly, to answer a query on $(u, \bar{u})$, i.e., to find a point in $S \setminus \bar{u}$ close to point $u$, we hash $u$ according to each $h \in \{H_i\}_i$ and then return (if one exists) the $v$ such that $v \in S \setminus \bar{u}$ and there is some $h \in \{H_i\}_i$ such that $h(u) = h(v)$. This takes time 

Lastly, we argue the (constant probability) correctness of a query for a fixed set of points. Fix query points $(u, \bar{u})$ and a set of points $S$. Let $w\in S \setminus \{u\}$ be a fixed point of $S\setminus \{u\}$ where we let $i_w$ be such that $2^{i_w} \leq \dist{u}{w}\leq 2^{i_w+1}$. Say that $w$ \emph{succeeds} if
\begin{enumerate}
    \item \textbf{One Correct Collision:} there is a $g \in G_{i_w}$ such that $g(w) = g(u)$ and
    \item \textbf{No Incorrect Collisions:} for all $i < i_w$  and $g \in G_i$ we have that $g(u) \neq g(w)$.
\end{enumerate}
Observe that if both of the cases happen for every $w \in S \setminus \{u\}$ then we have that our query is correct. The only non-trivial part of this observation is the fact that if $u$ is within distance $\beta$ of $S \setminus \{\bar{u}\}$ then $i_w = 1$ and the returned point is within an additive $\beta$ of $u$ so long as one correct collision holds.

Continuing, we next lower bound the probability of one correct collision. For a fixed $g_{i_wl} \in G_{i_w}$, we have that $g(w) = g(u)$ iff $h_{i_wlk}(w) = h_{i_wlk}(u)$ for all $k \leq K$. The probability of this for one $k \leq K$ is, by definition, $p_1$. By \Cref{thm:EucLSH} we know that $\frac{\log(1/p_1)}{\log(1/p_2)} = 1/c^2 + o(1)$ and so for a fixed $l$, we have that this occurs with probability at least 
\begin{align*}
    p_1^K &= \exp(-K \cdot \log(1/p_1)) \\
    &= \exp\left(- \frac{\log(1/p_1)}{\log(1/p_2)} \cdot \left(O(\log n) + \log L + \log \log (\Delta / \beta) \right) \right)\\
    &= \exp(- (1/c^2 + o(1)) \cdot \left(O(\log n)  + \log L + \log \log (\Delta / \beta) \right))\\
    & \geq \exp\left(- (1/c^2 + o(1)) \cdot O(\log n)  \right)
\end{align*}
It follows that the probability that this does not hold for some $l$ is at most
\begin{align*}
    \left(1-p_1^K\right)^{L} &\leq \exp(-L \cdot p_1^K) \\
    & \leq \exp\left(-L \cdot \exp\left[- (1/c^2 + o(1)) \cdot O(\log n) \right] \right)\\
    &= \exp\left(- \Omega(\log n)  - \exp[((1/c^2) + o(1)) \cdot \Omega(\log n)] \cdot \exp\left[- (1/c^2 + o(1)) \cdot O(\log n) \right] \right)\\
    & = n^{- \Omega(1)}.
\end{align*}
Next, we lower bound the probability of no incorrect collisions. Fix an $i < i_w$. Observe that for a given $h_{ilk}$ we have, by definition, that $h_{ilk}(u) = h_{ilk}(w)$ with probability at most $p_2$. Thus, we have $g_{il}(u) = g_{il}(w)$ with probability at most
\begin{align*}
    p_2^K = \exp(-\log(1/p_2)K) = \exp(-\Omega(\log n) - \log L - \log \log (\Delta / \beta)).
\end{align*}
Union bounding over our at most $\log(\Delta / \beta)$-many such $i$ and $L$-many such $l$, we get that that we have no incorrect collisions except with probability at most
\begin{align*}
    p_2^K \cdot \log(\Delta / \beta) \cdot L = \exp(-\Omega(\log n)) = n^{- \Omega(1)}.
\end{align*}
Finally, union bounding over our $n$-many possible points in $S$ and using the fact that $\gamma \geq 1$ we have that we succeed except with probability at least a fixed constant bounded away from $0$ for $S$ and this query. Furthermore, by taking $\gamma$-many independent repetitions of this data structure and taking the returned point closest to our query, we increase our runtime by a multiplicative $\gamma$ and reduce our failure probability to $\exp(-\gamma)$.
\end{proof}
\section{Empirical Evaluation}\label{sec:app_experiments}

\paragraph*{Datasets}
\begin{table}[t]
\centering
\footnotesize
\begin{tabular}{lccc}
\toprule
Dataset & n & d & \# Clusters \\
\midrule
{\tt iris}   & 150   & 4    & 3   \\
{\tt wine}   & 178   & 13   & 3   \\
{\tt cancer} & 569   & 30   & 2   \\
{\tt digits} & 1797  & 64   & 10  \\
{\tt faces}  & 400   & 4096 & 40  \\
{\tt mnist}  & 70000 & 784  & 10  \\
{\tt birds}  & 84635 & 1024 & 525 \\ 
{\tt covertype}  & 581012 & 54	& 7 \\ 
{\tt sift-1M}  & 1000000 & 128	& - \\ 
\bottomrule
\end{tabular}
\caption{Datasets}
\label{tab:datasets}
\end{table}
The details of the various datasets used in our experiments are stated in \Cref{tab:datasets}. {\tt mnist}\cite{mnist} is a standard machine learning dataset that consists of $28\times 28$ dimensional grayscale images of digits between 0 and 9. Here, each digit corresponds to a cluster. The {\tt birds} dataset contains $224\times 224\times 3$ dimensional images of 525 species of birds. As done in previous works~\cite{yu2023pecann}, we pass each image through ConvNet~\cite{Liu_2022_CVPR} to obtain an embedding. The ground truth clusters correspond to each of the 525 species of birds. The licenses of these datasets are as follows: {\tt iris} (CC BY 4.0), {\tt wine} (CC BY 4.0), {\tt cancer} (CC BY 4.0), {\tt digits} (CC BY 4.0), {\tt faces} (CC BY 4.0), {\tt Covertype} (CC BY 4.0), {\tt mnist} (CC BY-SA 3.0 DEED), {\tt birds} (CC0: Public Domain), and {\tt sift-1M} (CC0: Public Domain). 

\subsection{Quality Evaluation}\label{sec:app_qual_evals}
% \afterpage{\begin{landscape}
\begin{table}[htpb]
\centering
\caption{\small The ARI, NMI, Dendrogram Purity and Dasgupta Cost of our approximate Centroid HAC implementations for $\epsilon = 0.1,0.2,0.4,$ and $0.8$, versus Exact Centroid HAC. The best quality score for each dataset is in bold and underlined.
% \kuba{I'd suggest adding rows with average ARI and average NMI - this could make it more clear that the methods perform similarly }
}
\label{tab:quality-evals-main}
\footnotesize
\begin{tabular}{cc|cccc|c}
\toprule
\multicolumn{1}{l}{} &
  Dataset &
  Centroid\textsubscript{0.1} &
  Centroid\textsubscript{0.2} &
  Centroid\textsubscript{0.4} &
  Centroid\textsubscript{0.8} &
  Exact Centroid\\
\toprule
\multirow{7}{*}{\begin{sideways}ARI\end{sideways}} & {\tt iris}   &  \best{0.759} & 0.746 & 0.638 & 0.594 & \best{0.759} \\
     & {\tt wine} & 0.352 & 0.352 & \best{0.402} & 0.366 & 0.352 \\
     & {\tt cancer} & 0.509 & 0.526 & 0.490 & \best{0.641} & 0.509 \\
     & {\tt digits} & 0.589 & 0.571 & 0.576 & \best{0.627} & 0.559 \\
     & {\tt faces}  & 0.370 & 0.388 & \best{0.395} & 0.392 & 0.359 \\
     & {\tt mnist}  &  \best{0.270} & 0.222 & 0.218 & 0.191 & 0.192 \\
     & {\tt birds}  &  0.449 & 0.449 & 0.442 & \best{0.456} & 0.441  \\
\cmidrule{2-7}
     & {\bf Avg} & \best{0.471} & 0.465 & 0.452 & 0.467 & 0.453 \\
\midrule
\multirow{7}{*}{\begin{sideways}NMI\end{sideways}} & {\tt iris} & \best{0.803} & 0.795 & 0.732 & 0.732 & \best{0.803} \\
     & {\tt wine}   & \best{0.424} & \best{0.424} & 0.413 & 0.389 & \best{0.424} \\
     & {\tt cancer} & 0.425 & 0.471 & 0.459 & \best{0.528} & 0.425 \\
     & {\tt digits} & 0.718 & 0.726 & 0.707 & \best{0.754} & 0.727 \\
     & {\tt faces}  & 0.539 & 0.534 & 0.549 & 0.549 & \best{0.556} \\
     & {\tt mnist}  &  0.291 & 0.282 & 0.306 & \best{0.307} & 0.250   \\
     & {\tt birds}  & 0.748 & 0.747 & 0.756 & \best{0.764} & 0.743 \\
\cmidrule{2-7}
     & {\bf Avg} & 0.564 & 0.569 & 0.560 & \best{0.575} & 0.561 \\
\midrule
\multirow{7}{*}{\begin{sideways}Purity\end{sideways}} & {\tt iris} & 0.869 & 0.862 & 0.808 & 0.809 & \best{0.871}  \\
     & {\tt wine}   & 0.616 & 0.616 & \best{0.640} & 0.601 & 0.616 \\
     & {\tt cancer} & 0.816 & 0.818 & 0.805 & \best{0.836} & 0.816  \\
     & {\tt digits} & 0.677 & 0.667 & 0.654 & \best{0.715} & 0.679 \\
     & {\tt faces}  & 0.460 & 0.477 & 0.478 & \best{0.488} & 0.467 \\
     & {\tt mnist}  & \best{0.310} & 0.286 & 0.283 & 0.278 & 0.308 \\
     & {\tt birds}  & 0.555 & 0.550 & 0.543 & 0.516 & \best{0.559} \\
\cmidrule{2-7}
     & {\bf Avg} & 0.615 & 0.611 & 0.602 & 0.606 & \best{0.616} \\
\midrule
\multirow{5}{*}{\begin{sideways}Dasgupta\end{sideways}} & {\tt iris}   &  506011.1 & 506101.2 & 510680.6 & 507149.4 & \best{505809.8} \\
     & {\tt wine}   & 7655.9 & 7655.9 & \best{7545.7} & 7564.2 & 7655.9 \\
     & {\tt cancer} & 153644.7 & \best{153519.2} & 156220.4 & 156893.8 & 153843.2 \\
     & {\tt digits} & 39292919.2 & 39315744.3 & 39299184.6 & \best{39215081.5} & 39289534.3 \\
     & {\tt faces}  & 1703728.2 & 1703696.7 & 1706886.4 & \best{1703361.3} & 1704833.1  \\
     & mnist  & 11193$\times$10\textsuperscript{9} & 11195$\times$10\textsuperscript{9} & 11194$\times$10\textsuperscript{9} & \best{11185$\times$10\textsuperscript{9}} & 11195$\times$10\textsuperscript{9} \\
     & birds  & 5198$\times$10\textsuperscript{9} & 5199$\times$10\textsuperscript{9} & 5196$\times$10\textsuperscript{9} & \best{5183$\times$10\textsuperscript{9}} & 5201$\times$10\textsuperscript{9}\\
\cmidrule{2-7}
     & {\bf Avg} & 2341$\times$10\textsuperscript{9} & 2342$\times$10\textsuperscript{9} & 2341$\times$10\textsuperscript{9} & \best{2338$\times$10\textsuperscript{9}} & 2342$\times$10\textsuperscript{9}\\
\bottomrule
\end{tabular}
\end{table}
% \end{landscape}}
The output of a hierarchical clustering algorithm is typically a tree of clusters, called a \emph{dendrogram}, that summarizes all the merges performed by the algorithm. The leaves correspond to the points in the dataset, and each internal node represents the cluster formed by taking all the leaves in its subtree. There is a cost associated to each internal node denoting the cost incurred when merging two clusters to form the cluster associated to that node. Given a threshold value, a clustering is obtained by cutting the dendrogram at certain internal nodes whose associated cost is greater than the threshold.

In our experiments, the ARI and NMI scores are calculated by taking the best score obtained over the clusterings formed by considering all possible \emph{thresholded}-cuts to the output dendrogram. For larger datasets, we consider cuts only at threshold values scaled at a constant factor (i.e. logarithmic-many cuts). 

\Cref{tab:quality-evals-main} contains the results obtained for our quality evaluations. We also plot the scores obtained by approximate Centroid HAC as a function of $\epsilon$; see \Cref{fig:plot_iris_quals}, \Cref{fig:plot_wine_quals}, \Cref{fig:plot_cancer_quals}, \Cref{fig:plot_digits_quals}, and \Cref{fig:plot_faces_quals}.

We observe that the quality of the clustering produced by approximate Centroid HAC is generally comparable to that of the exact Centroid HAC algorithm. On average, the quality is slightly better in some cases, but we attribute this to noise.
In particular, we observe that for the value of $\epsilon=0.1$, we consistently get comparable quality to that of exact centroid: the ARI and NMI scores are on average within a factor of 7\% and 2\% to that of exact Centroid HAC, respectively. We also obtained good dendrogram purity score and Dasgupta cost as well, with values within 0.3\% and 0.03\%, respectively.

% \begin{itemize}
%     \item Centroid Cost v/s eps (H and B)
%     \item Num ANNS Queries v/s eps
% \end{itemize}

%  Can skip for now, not priority
% \paragraph*{Varying parameters of Vamana}
% \begin{itemize}
%     \item Fix eps=0.1. ARI v/s R (Vamana param) (H and B)
%     \item Fix eps=0.1. Running time v/s R
% \end{itemize}

\paragraph*{Non-Monotonicity of CentroidHAC}
Unlike other HAC algorithms, the costs of merging clusters for centroid HAC is not monotone (i.e., non-decreasing). The monotonicity property of HAC algorithms is useful when producing the clusterings for applications from the output dendrgoram: typically the dendrogram is cut at certain thresholds, and the disconnected subtrees obtained are flattened to compute the clusters. Here, cutting the dendrogram for a given threshold returns a well-defined and unique clustering. However, this is not the case with Centroid HAC. Indeed, for a given threshold value, we can obtain different clusterings depending on how we define to cut the dendrogram. For e.g., if two nodes $U$ and $V$ exists in the output dendrogram with associated merge costs (i.e., cost incurred at the time of merging the two clusters to form that node) of $U$ being greater than $V$, and $V$ is an ancestor of $U$ (which is a possibility for Centroid HAC). Then, given a threshold value $cost(V)<\tau<cost(V)$, we could either cut at $V$ (or above), or stop somewhere below $U$.

We would like to characterize and study this non-monotonicity of Centroid HAC in real-world datasets. For this, we define the following notion: for nodes $U,V$ in the output dendrogram, the pair $(U,V)$ is called a \emph{$\delta$-inversion} if $V$ is an ancestor of $U$ and $cost(U) \ge (1+\delta)cost(V)$, where $cost(U)$ is the merge cost associated to node $U$.

We calculate the number of such inversions incurred by exact Centroid HAC, and compare with the inversions incurred by our approximate algorithm. \Cref{fig:plot_invs} shows the number of $\delta$-inversions incurred on the benchmark datasets. We observe that for small values of $\epsilon$, approximate Centroid HAC has similar number of $\delta$ inversions as that of exact Centroid HAC.

\begin{figure}
    \centering
    \begin{subfigure}[b]{0.49\textwidth}
        \centering
        \includegraphics[width=\textwidth]{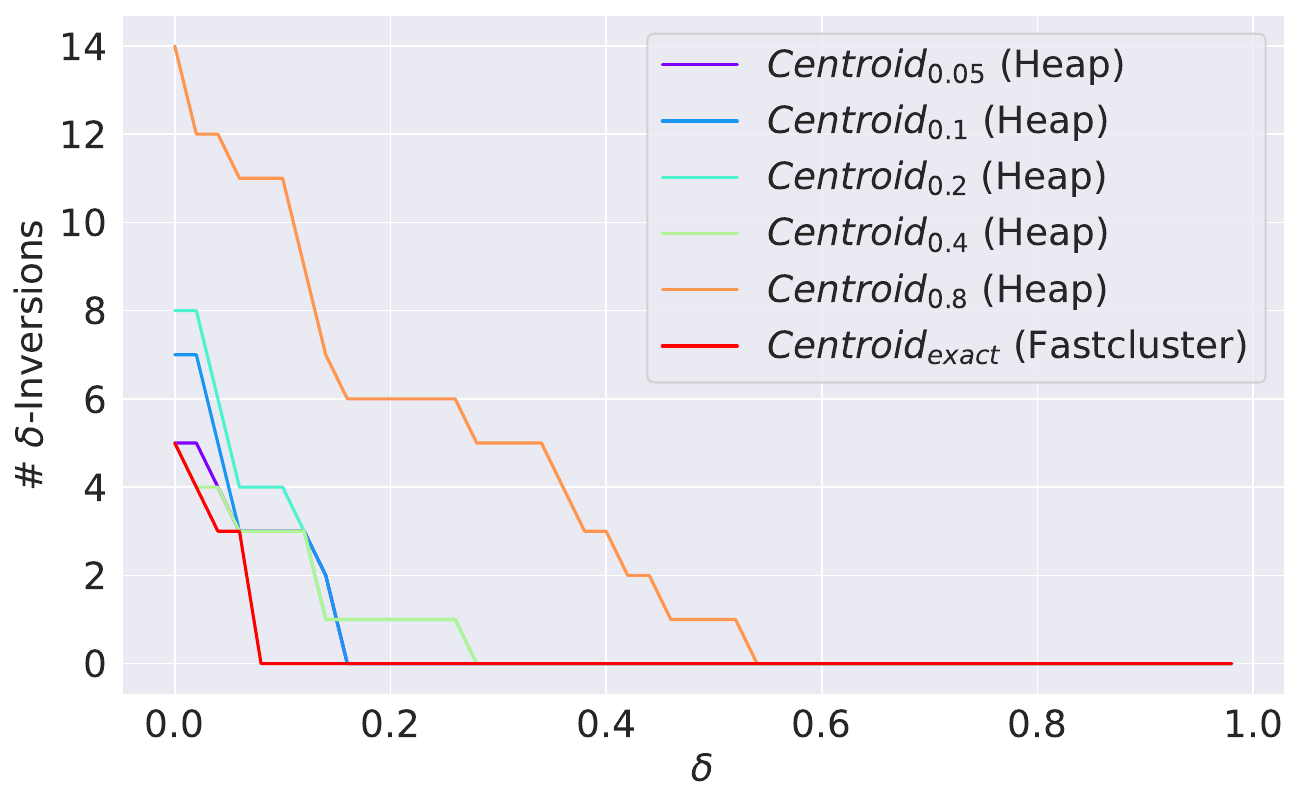}
    \caption{\texttt{iris}}
    \end{subfigure}
    \hfill
    \begin{subfigure}[b]{0.49\textwidth}
        \centering
        \includegraphics[width=\textwidth]{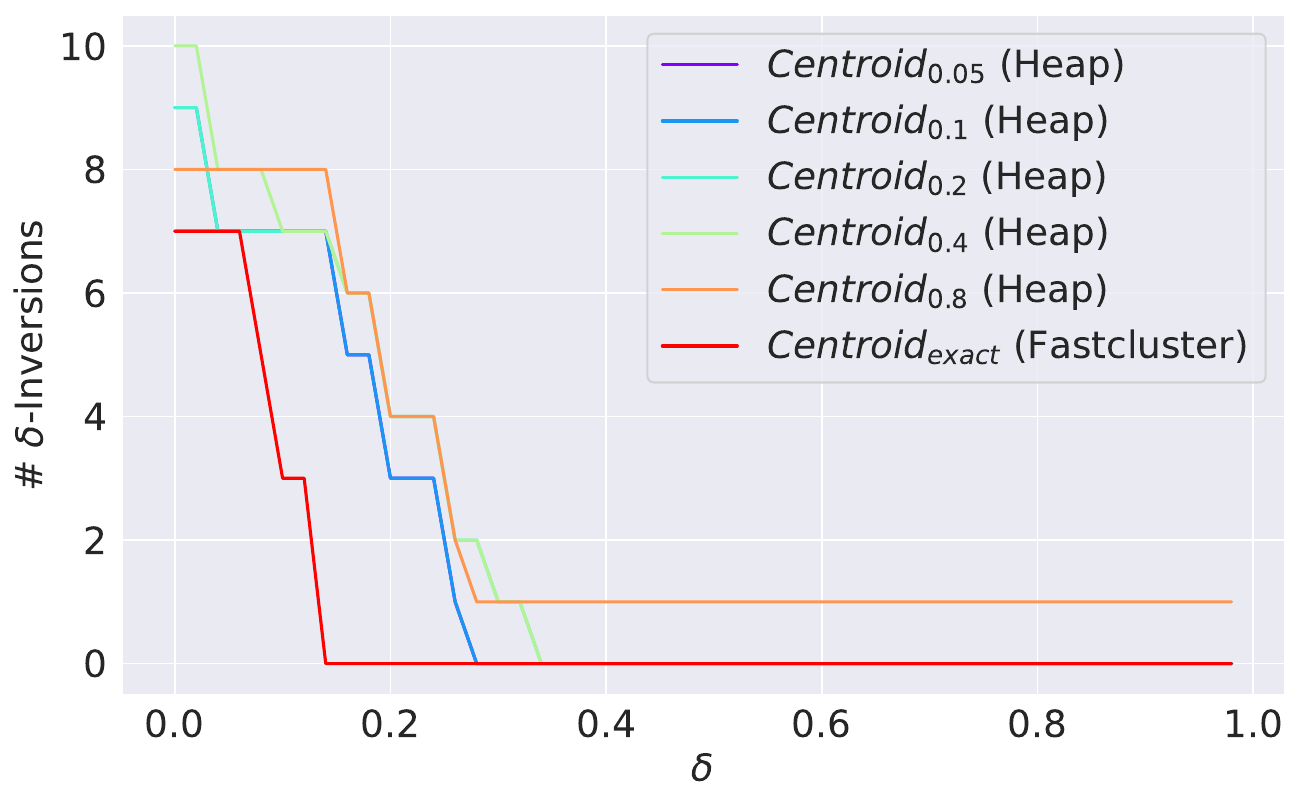}
    \caption{\texttt{wine}}
    \end{subfigure}
    \hfill
    \begin{subfigure}[b]{0.49\textwidth}
        \centering
        \includegraphics[width=\textwidth]{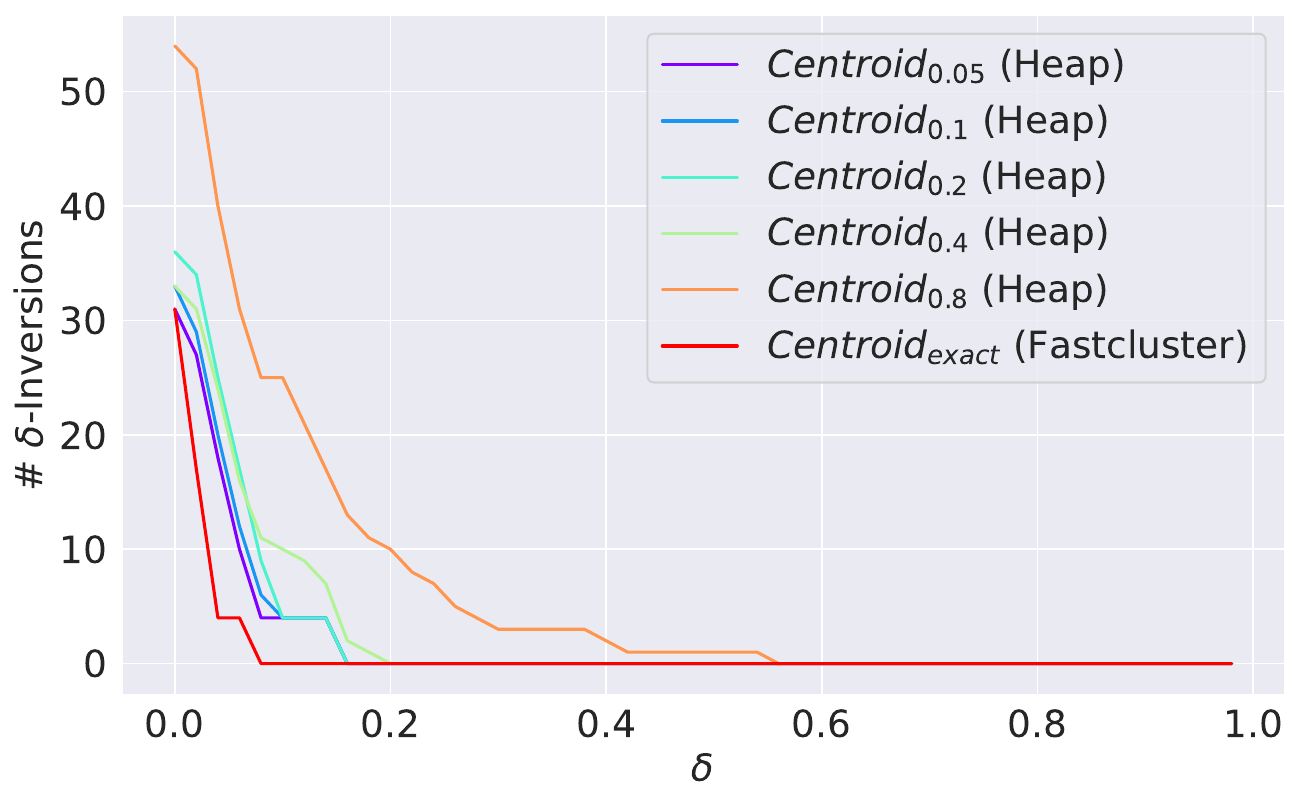}
    \caption{\texttt{cancer}}
    \end{subfigure}
    \hfill
    \begin{subfigure}[b]{0.49\textwidth}
        \centering
        \includegraphics[width=\textwidth]{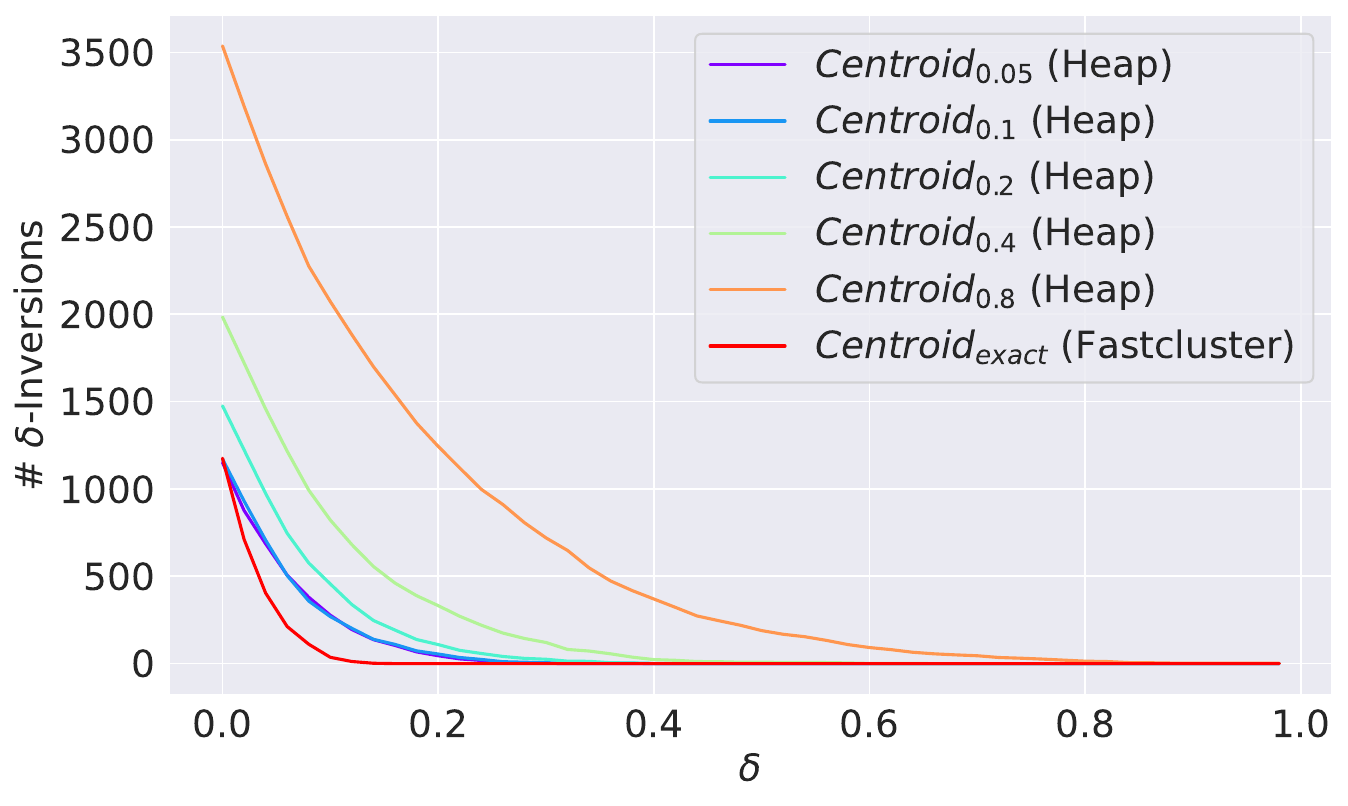}
    \caption{\texttt{digits}}
    \end{subfigure}
    \hfill
    \begin{subfigure}[b]{0.49\textwidth}
        \centering
        \includegraphics[width=\textwidth]{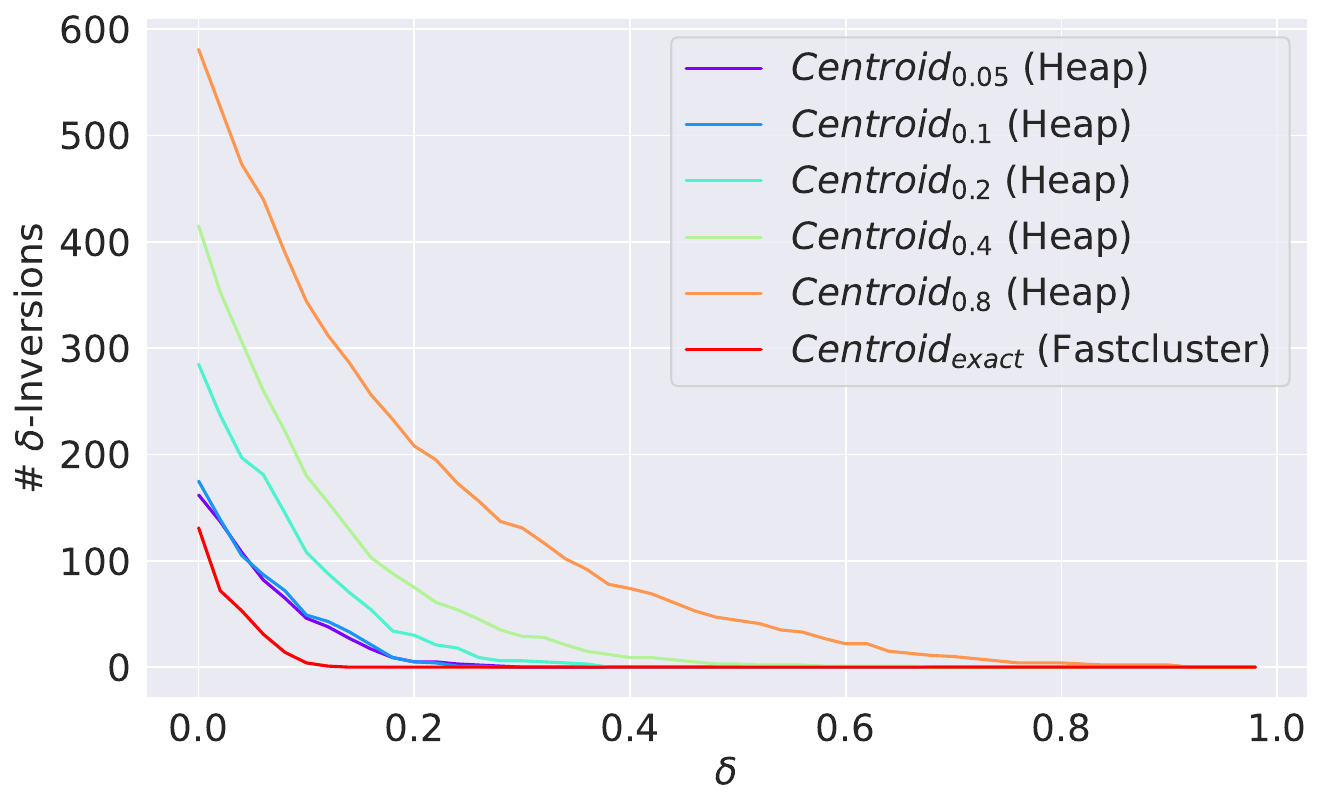}
    \caption{\texttt{faces}}
    \end{subfigure}
    \caption{Non-Monotonicity of Centroid HAC: No. of $\delta$-inversions vs $\delta$}
    \label{fig:plot_invs}
\end{figure}

\subsection{Running Time Evaluation}\label{sec:app_runtime_evals}
\Cref{fig:plot_runtime_numpoints_single_thread,fig:plot_runtime_numpoints} shows the running times on varying slices of the {\tt SIFT-1M} dataset using one thread and 192 parallel threads, respectively. For the approximate methods, we considered $\epsilon=0.1$. \Cref{fig:plot_runtime_eps} further compares the heap and bucket based approaches as a function of $\epsilon$. 

We observe that the bucket based approach is very slow for small values of $\epsilon$ due to many redundant nearest-neighbor computations. Yet, at $\epsilon=0.1$, both algorithms, with 192 threads, obtain speed-ups of up to 175$\times$ with an average speed-up of 22$\times$ compared to the exact counterparts. However, on a single thread, the bucket based approach fails to scale, while the heap based approach still achieves significant speed-ups of upto 36$\times$ with an average speed-up of 5$\times$.
Overall, our heap-based approximate Centroid HAC implementation with $\epsilon=0.1$ demonstrates good quality and scalability, making it a good choice in practice. 

\paragraph*{Stale Merges and Number of Queries}
We compute the number of \emph{stale merges} (see \Cref{sec:intro}) incurred by our algorithm on various benchmark datasets. See \Cref{tab:staleness}. We observe that, on average, the number of such stale queries are incurred about 60\% of the time compared to the number of actual merges (i.e. number of points).
\begin{table}[!htp]
\parbox{.49\linewidth}{
    \centering
    \caption{Number of Stale Merges}\label{tab:staleness}
    \scriptsize
    \begin{tabular}{cccc}\toprule
    Dataset & $n$ &Centroid\textsubscript{0.1} \\\midrule
    iris &150 &123 \\
    wine &178 &161 \\
    cancer &569 &497 \\
    digits &1797 &875 \\
    faces &400 &90 \\
    mnist &70000 &33039 \\
    birds &84635 &29371 \\
    covtype &581012 &512198 \\
    \bottomrule
    \end{tabular}
}\hfill
    \parbox{.49\linewidth}{
    \centering
    \caption{Number of ANNS Queries}\label{tab:num_queries}
    \scriptsize
    \begin{tabular}{cccc}\toprule
    Dataset & Centroid\textsubscript{0.1}(Heap) & Centroid\textsubscript{0.1}(Bucket)\\\midrule
    iris &561 &4112 \\
    wine &686 &7011 \\
    cancer &2167 &24852 \\
    digits &5583 &48523 \\
    faces &1000 &9862 \\
    mnist &221086 &2481831 \\
    birds &236415 &4581108 \\
    covtype &2292518 &34770413 \\
    \bottomrule
    \end{tabular}
}
\end{table}

We also compare the number of nearest-neighbor queries made by the heap and bucket based algorithms. As expected, the number of NNS queries made by the bucket-based algorithms is an order of magnitude higher than that of the heap based approach. This is due to many redundant queries performed at each round by the bucket-based algorithm. However, these queries can be performed in parallel, thus resulting in good speed-ups for this algorithm when run on many cores.

\begin{figure}[h]
    \centering
    \begin{subfigure}[b]{0.49\textwidth}
        \centering
        \includegraphics[width=\textwidth]{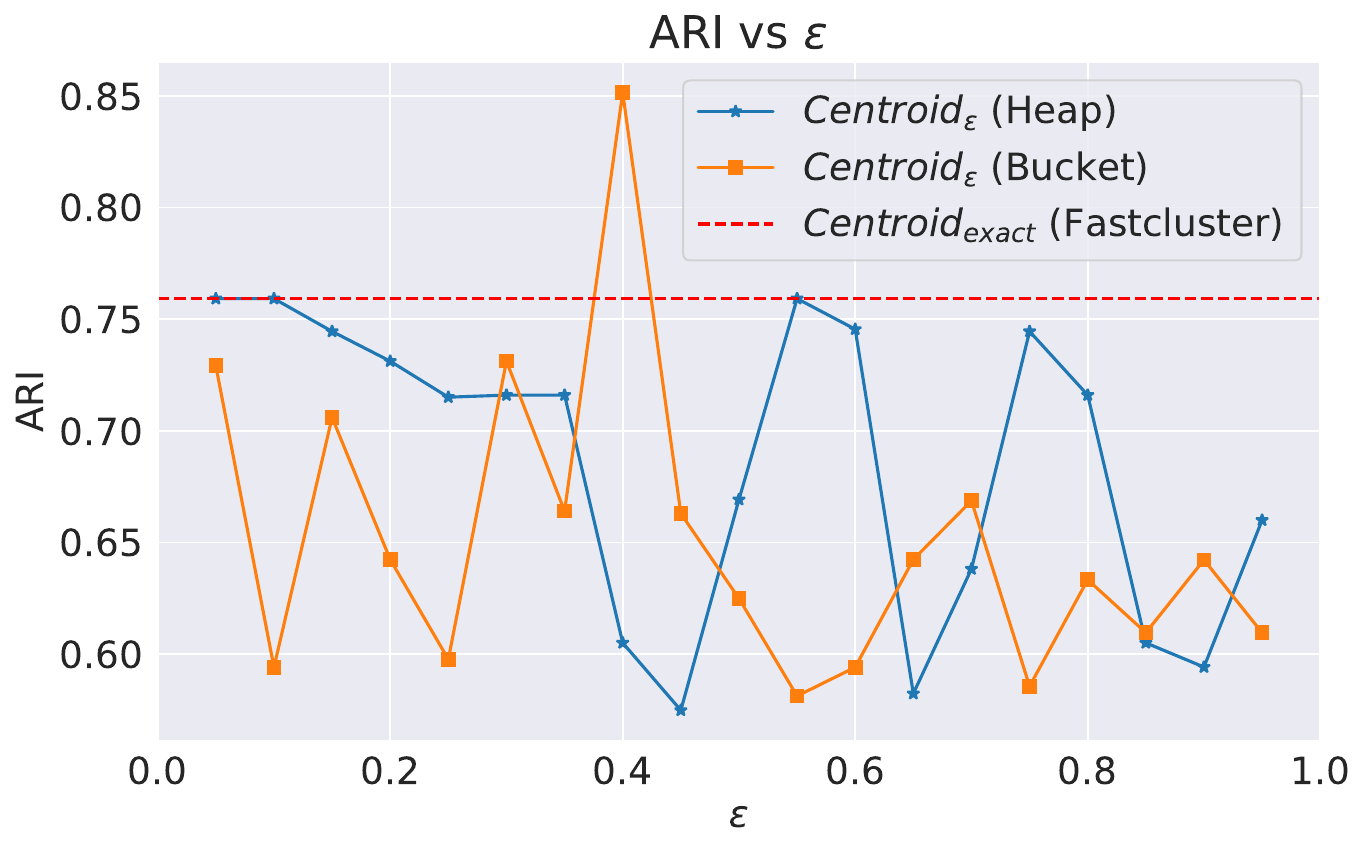}
    \end{subfigure}
    \hfill
    \begin{subfigure}[b]{0.49\textwidth}
        \centering
        \includegraphics[width=\textwidth]{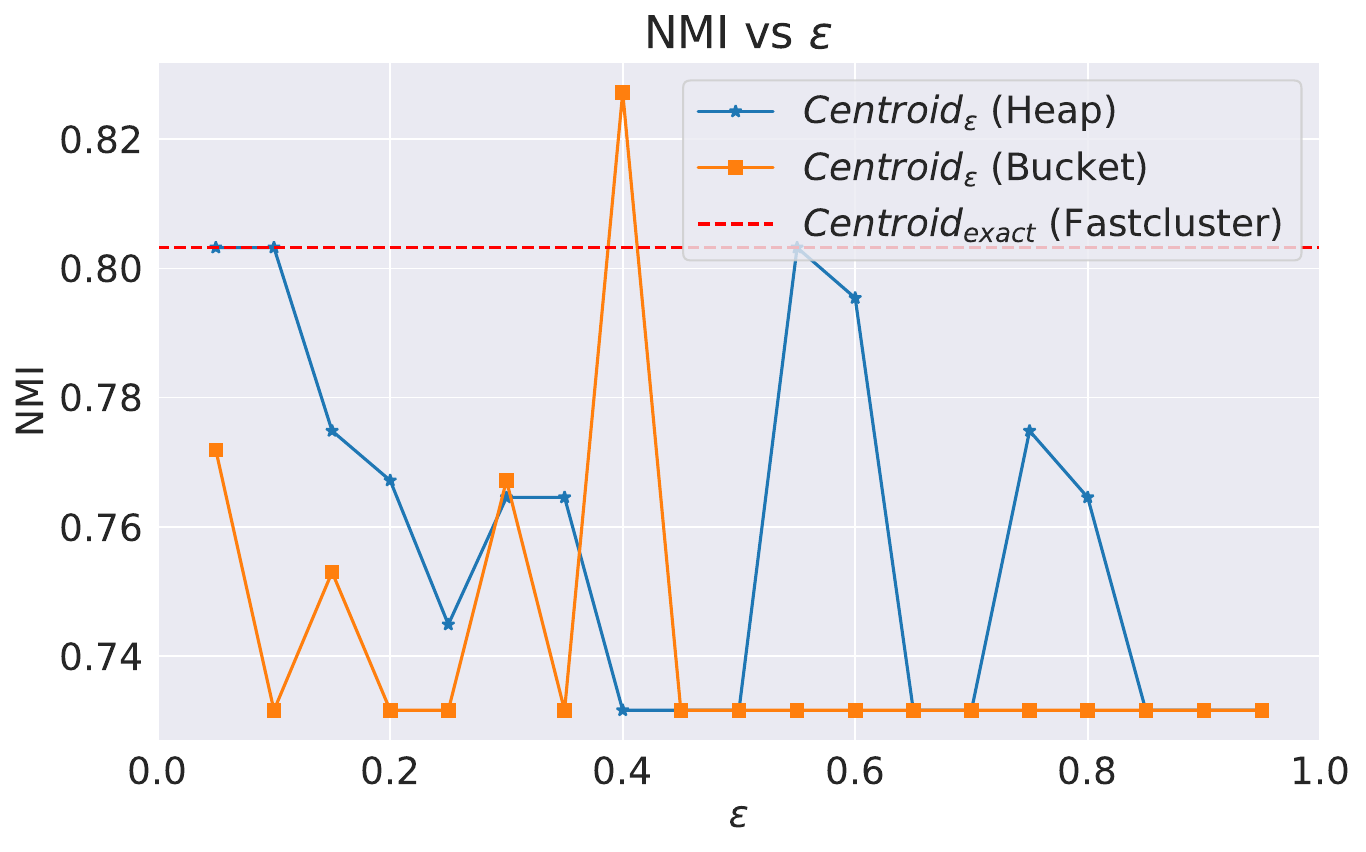}
    \end{subfigure}
    \hfill
    \begin{subfigure}[b]{0.49\textwidth}
        \centering
        \includegraphics[width=\textwidth]{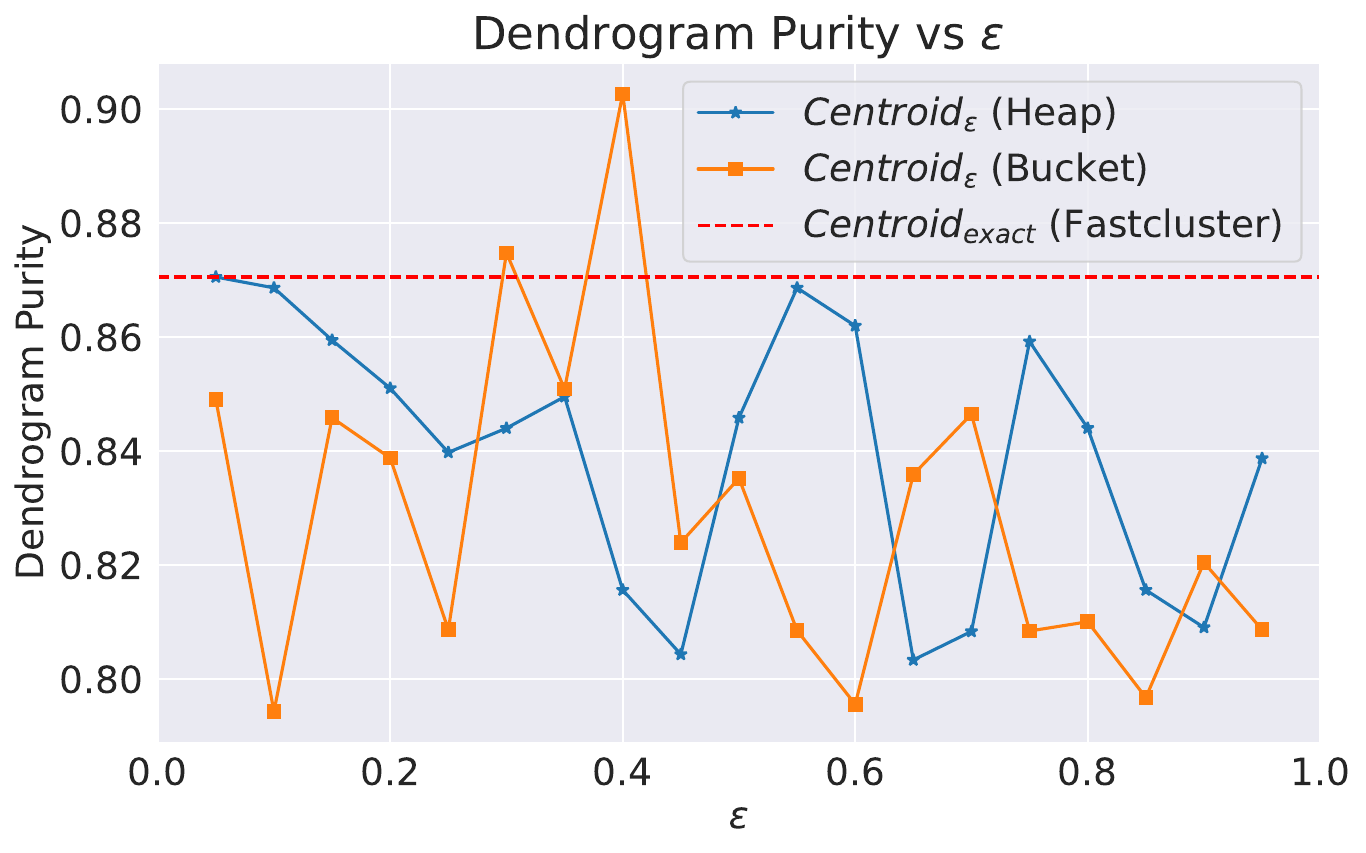}
    \end{subfigure}
    \hfill
    \begin{subfigure}[b]{0.49\textwidth}
        \centering
        \includegraphics[width=\textwidth]{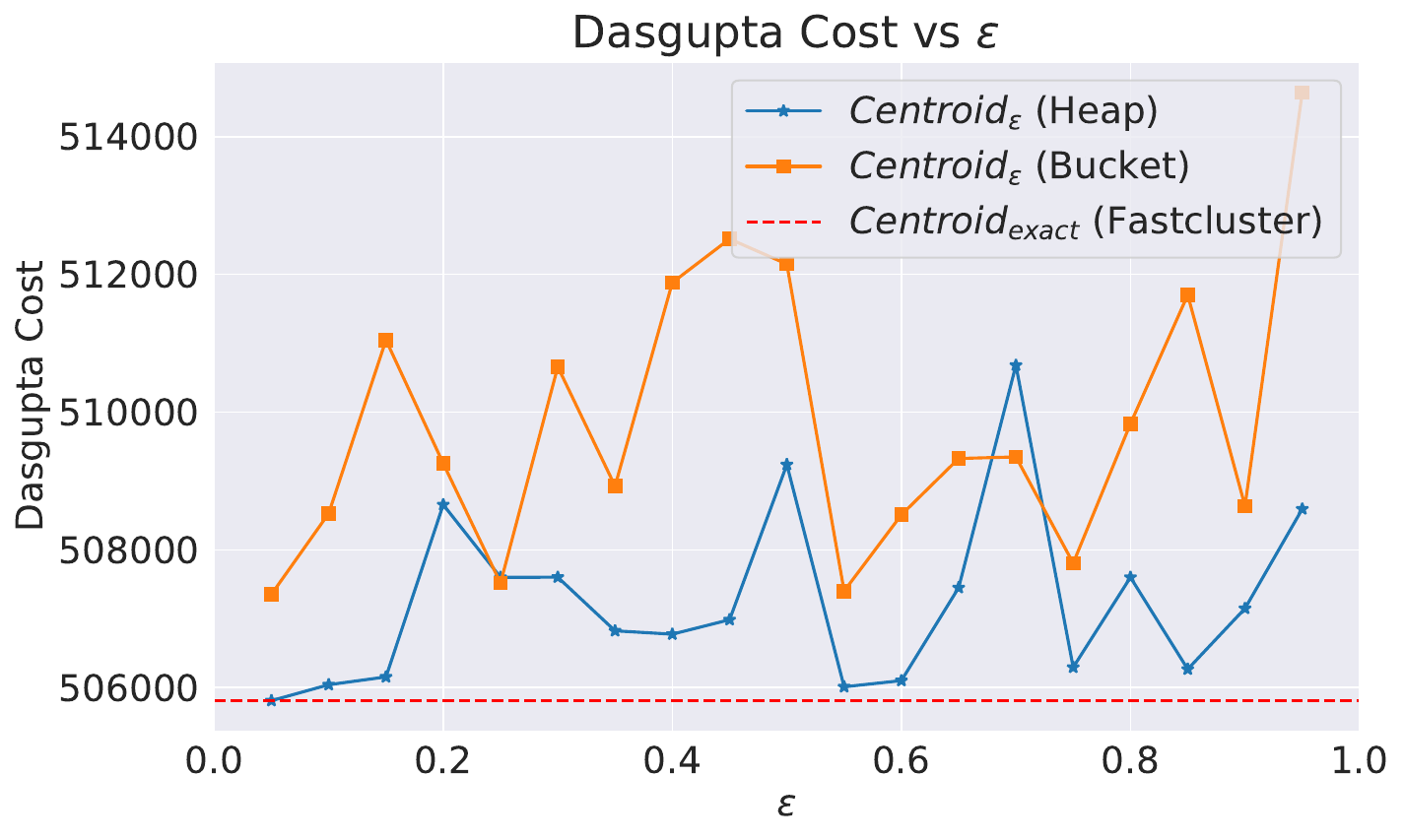}
    \end{subfigure}
    \caption{Quality Evaluations of the \texttt{iris} dataset}
    \label{fig:plot_iris_quals}
\end{figure}

\begin{figure}
    \centering
    \begin{subfigure}[b]{0.49\textwidth}
        \centering
        \includegraphics[width=\textwidth]{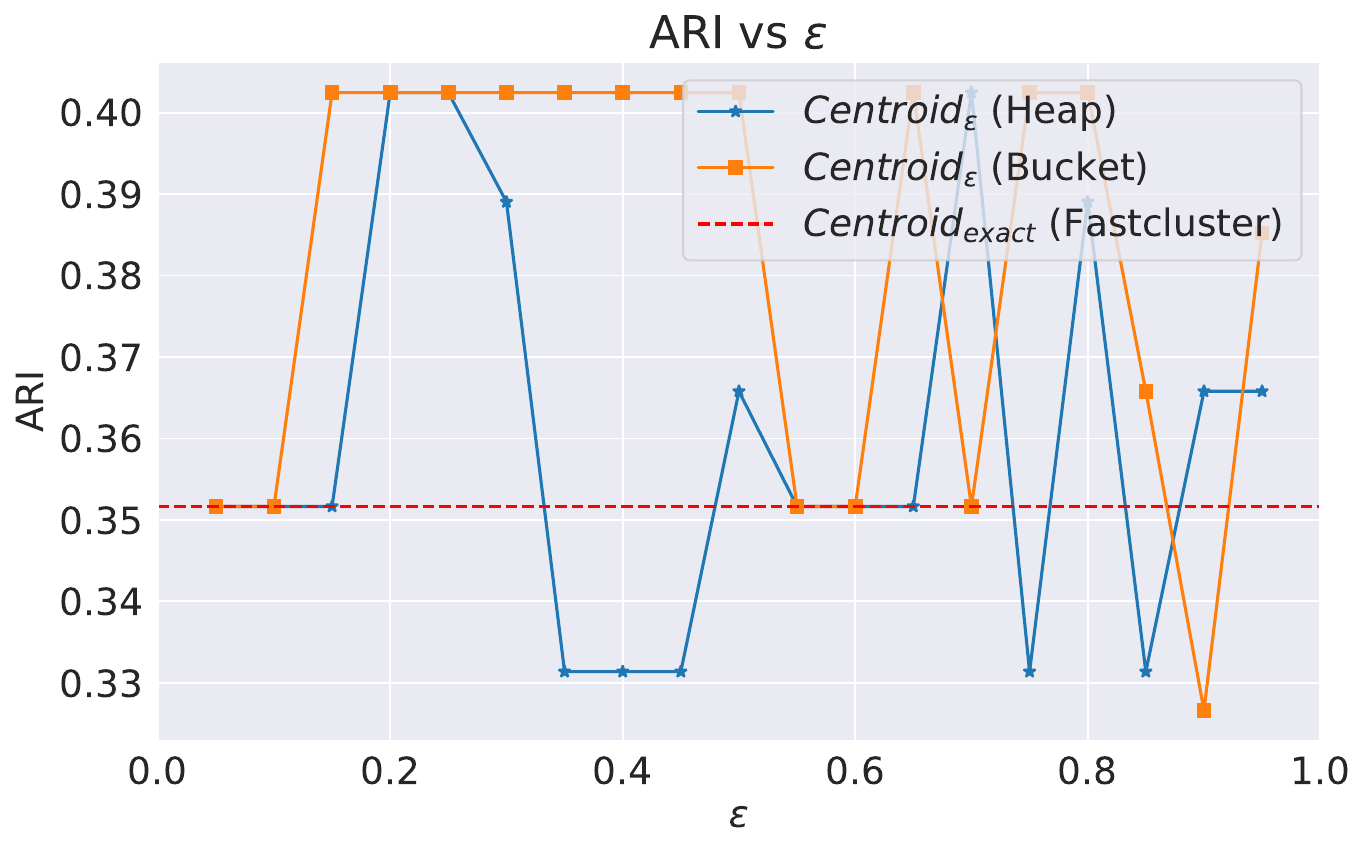}
    \end{subfigure}
    \hfill
    \begin{subfigure}[b]{0.49\textwidth}
        \centering
        \includegraphics[width=\textwidth]{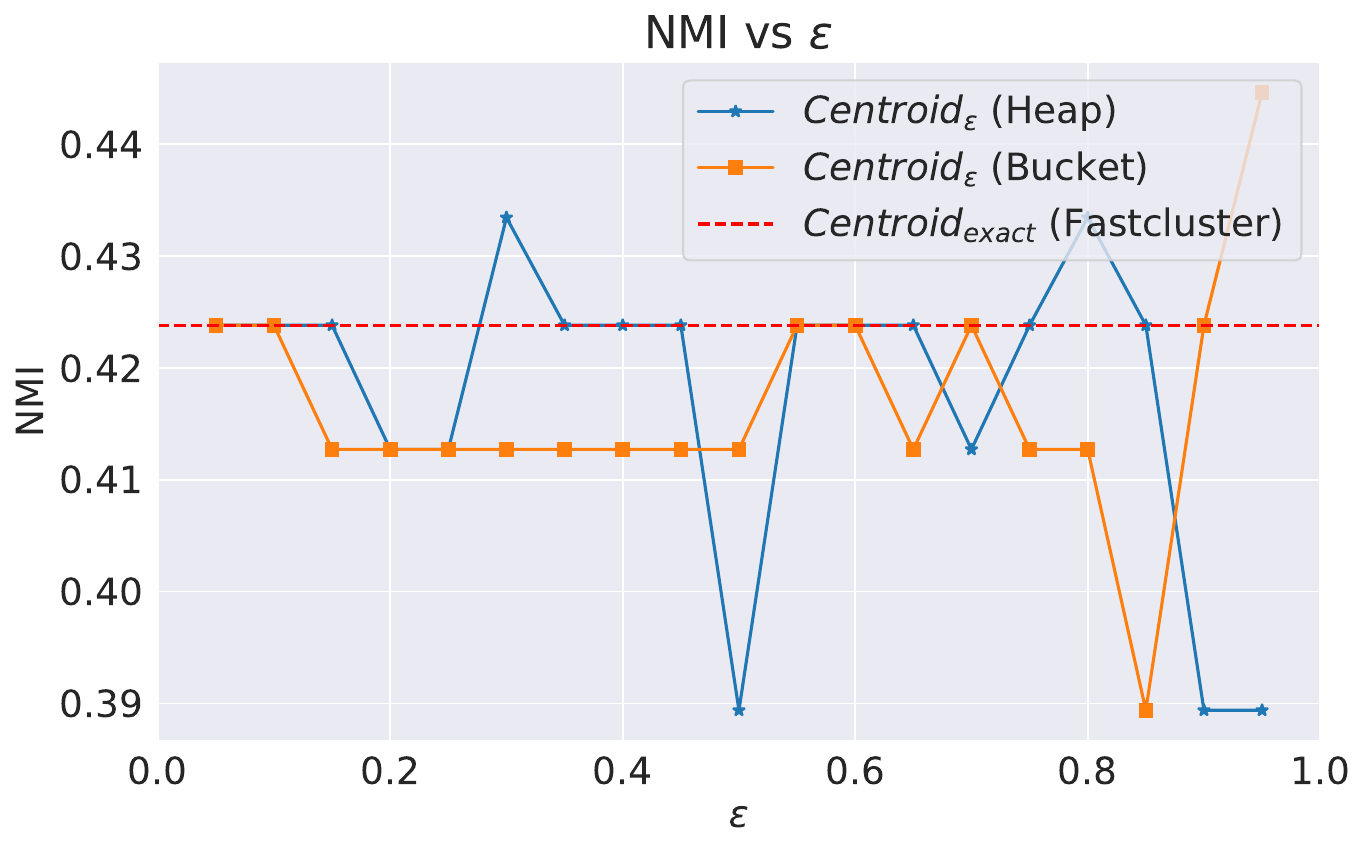}
    \end{subfigure}
    \hfill
    \begin{subfigure}[b]{0.49\textwidth}
        \centering
        \includegraphics[width=\textwidth]{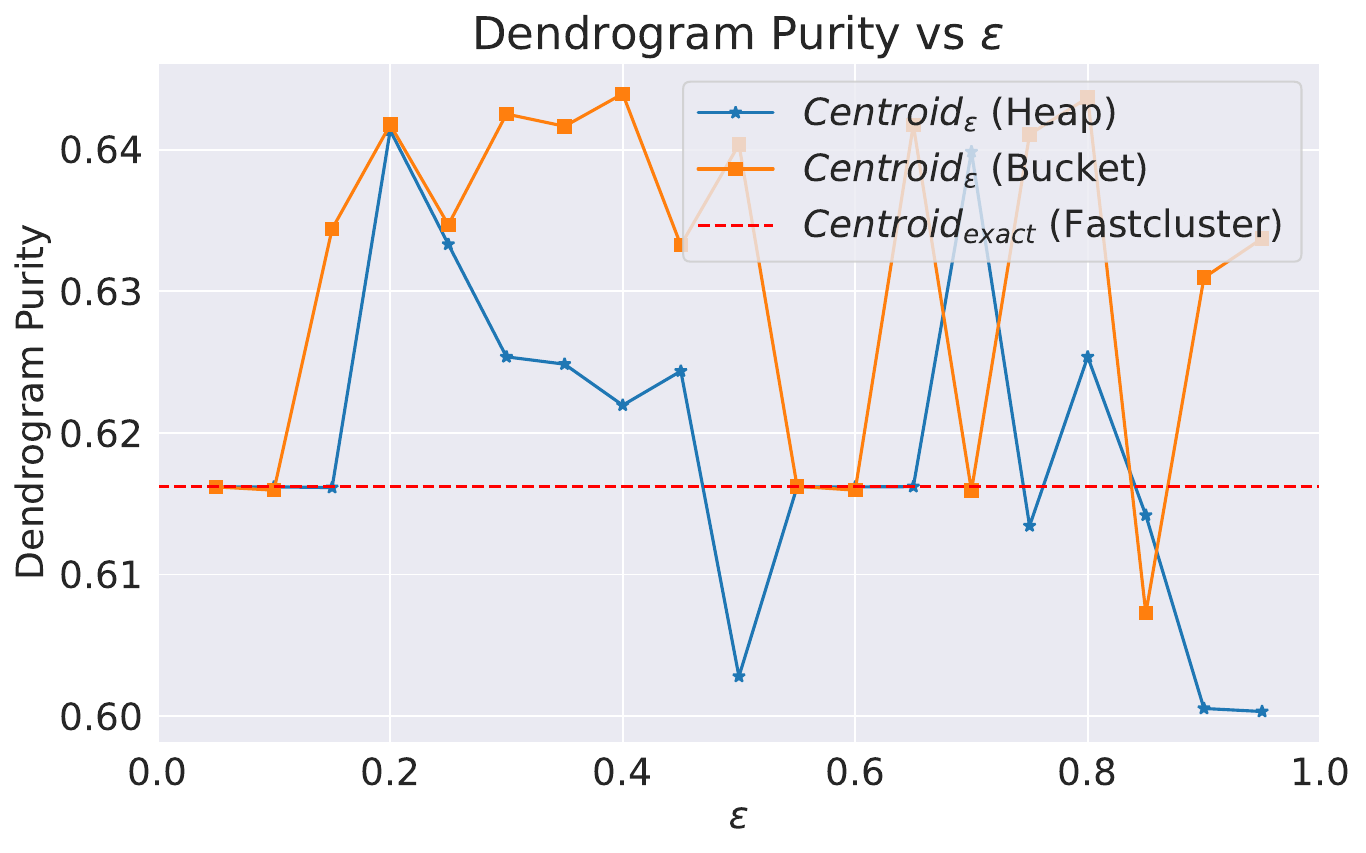}
    \end{subfigure}
    \hfill
    \begin{subfigure}[b]{0.49\textwidth}
        \centering
        \includegraphics[width=\textwidth]{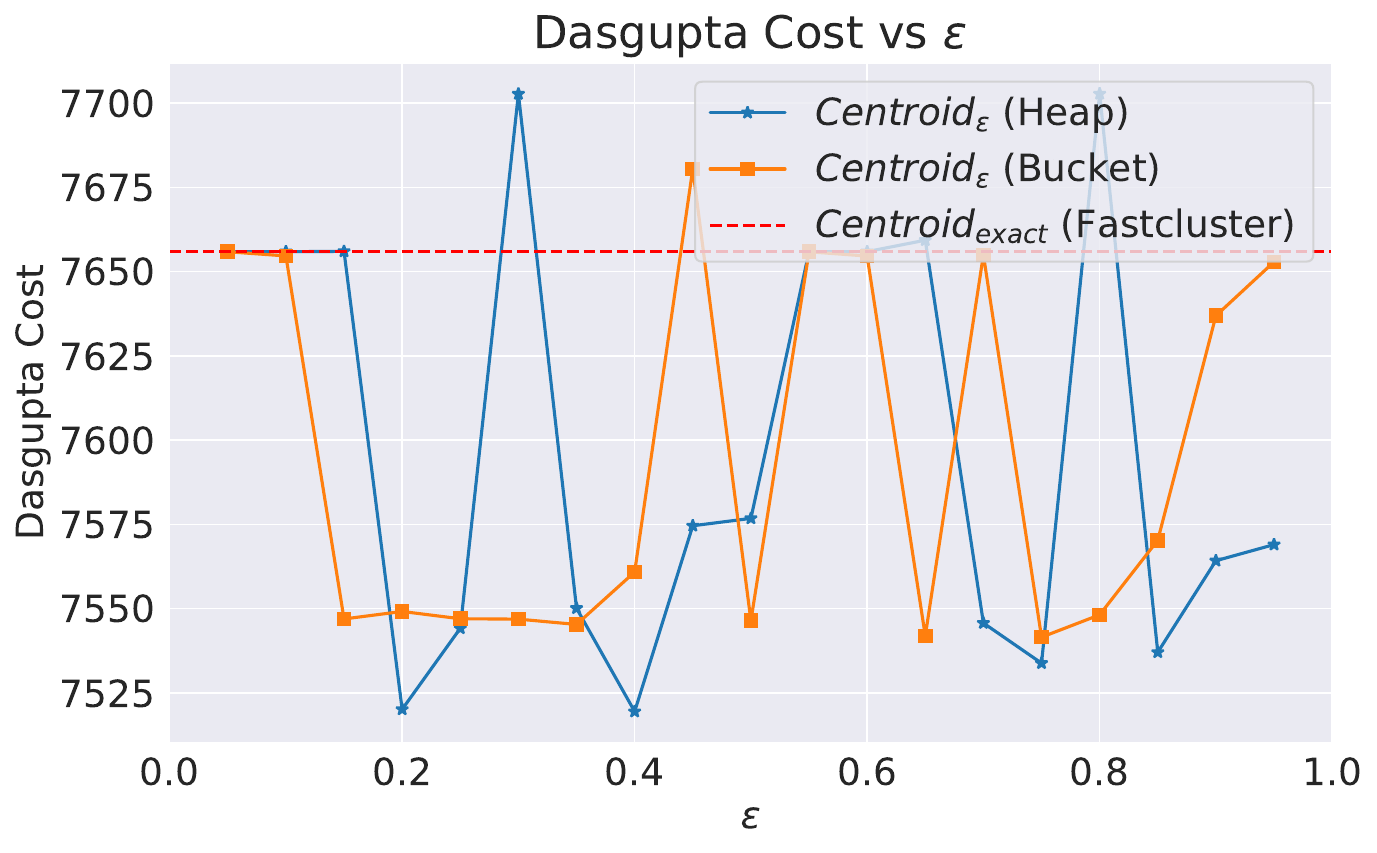}
    \end{subfigure}
    \caption{Quality Evaluations of the \texttt{wine} dataset}
    \label{fig:plot_wine_quals}
\end{figure}

\begin{figure}
    \centering
    \begin{subfigure}[b]{0.49\textwidth}
        \centering
        \includegraphics[width=\textwidth]{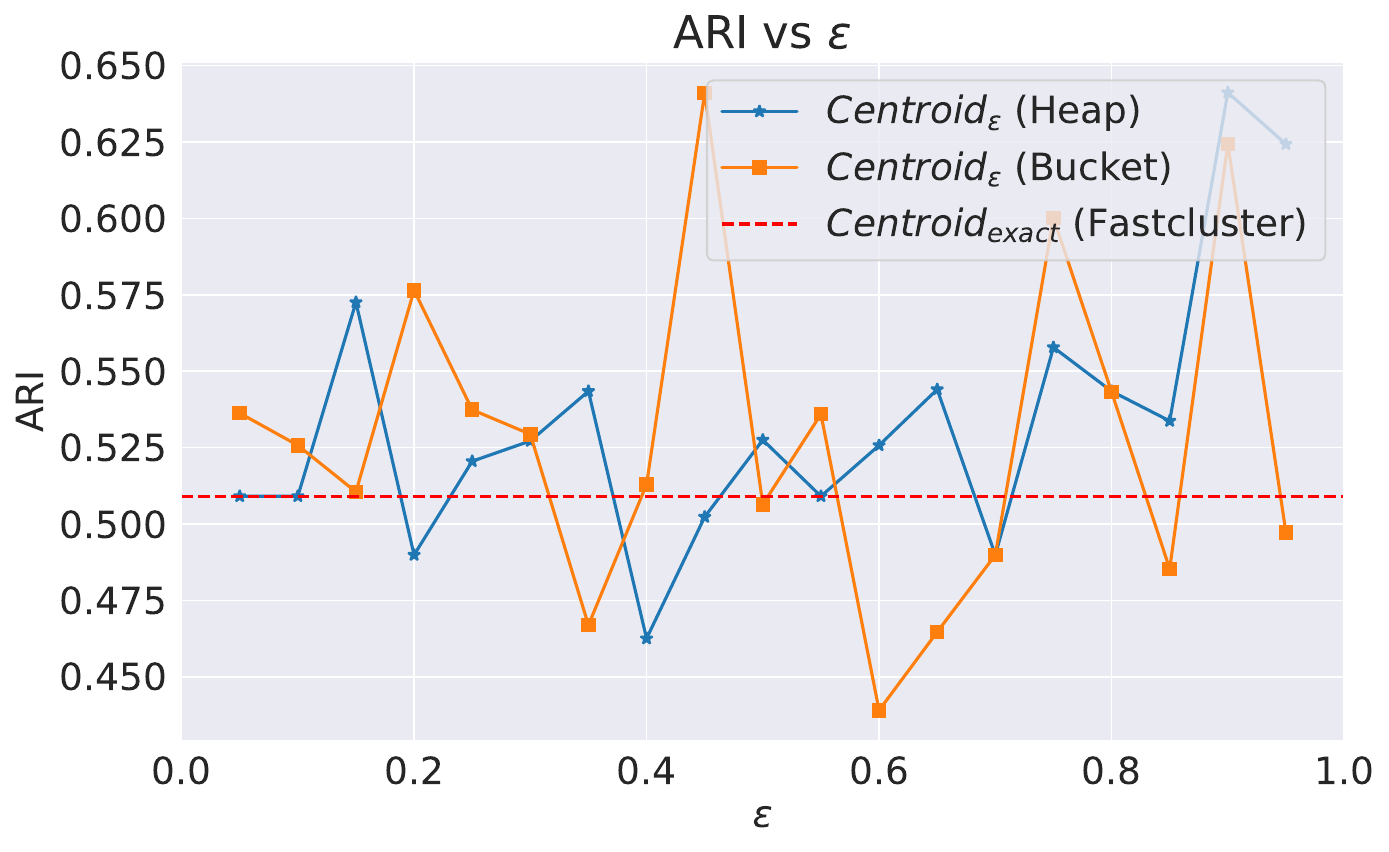}
    \end{subfigure}
    \hfill
    \begin{subfigure}[b]{0.49\textwidth}
        \centering
        \includegraphics[width=\textwidth]{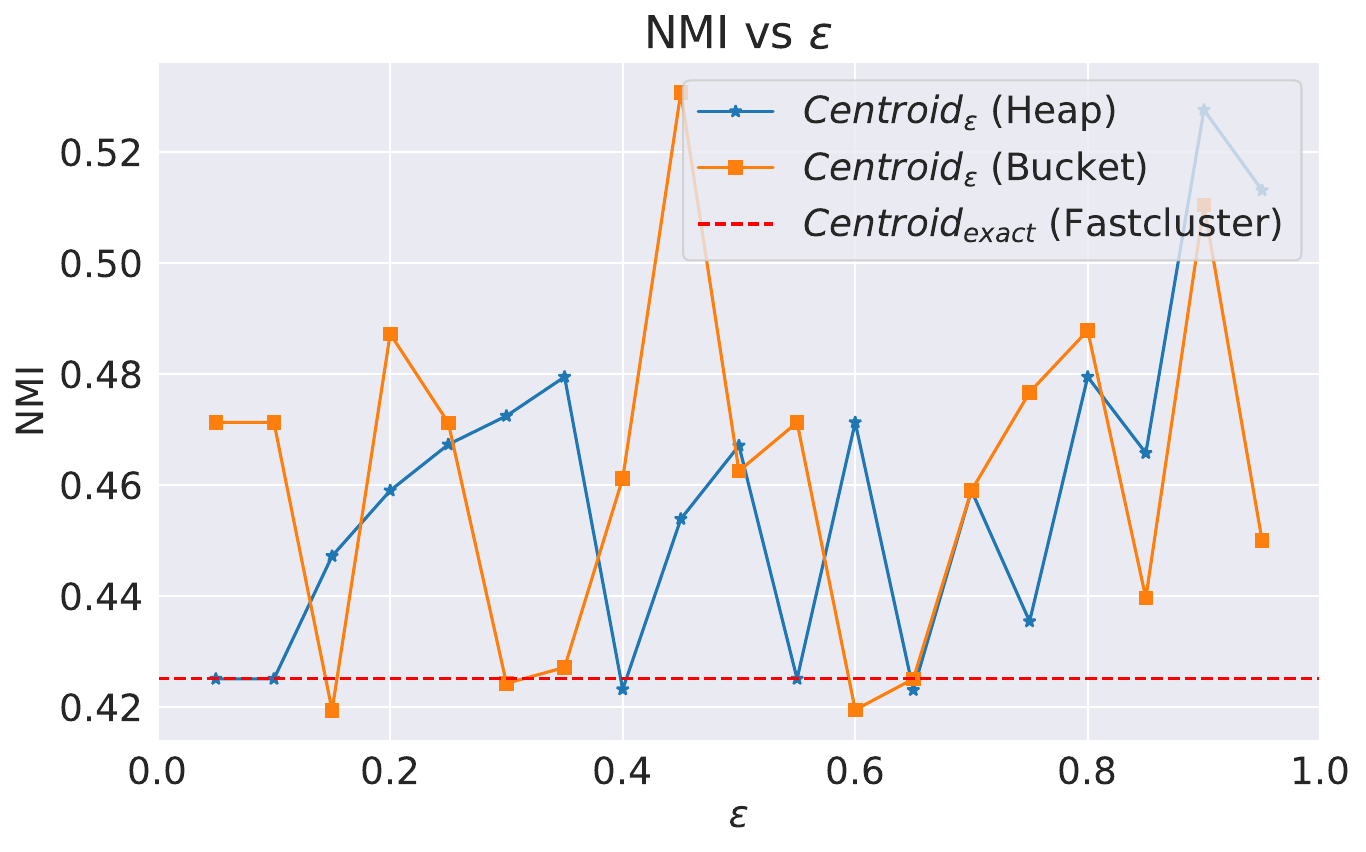}
    \end{subfigure}
    \hfill
    \begin{subfigure}[b]{0.49\textwidth}
        \centering
        \includegraphics[width=\textwidth]{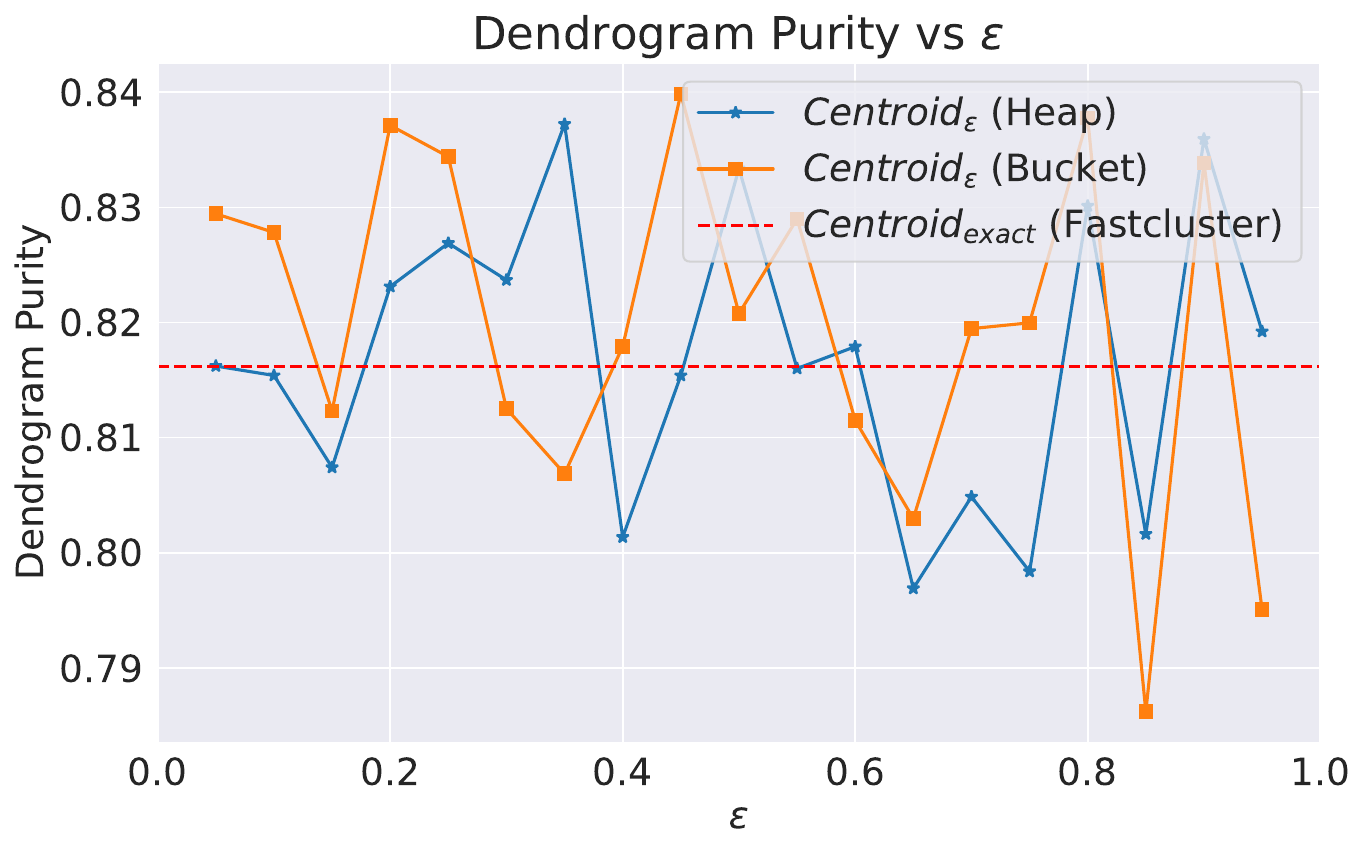}
    \end{subfigure}
    \hfill
    \begin{subfigure}[b]{0.49\textwidth}
        \centering
        \includegraphics[width=\textwidth]{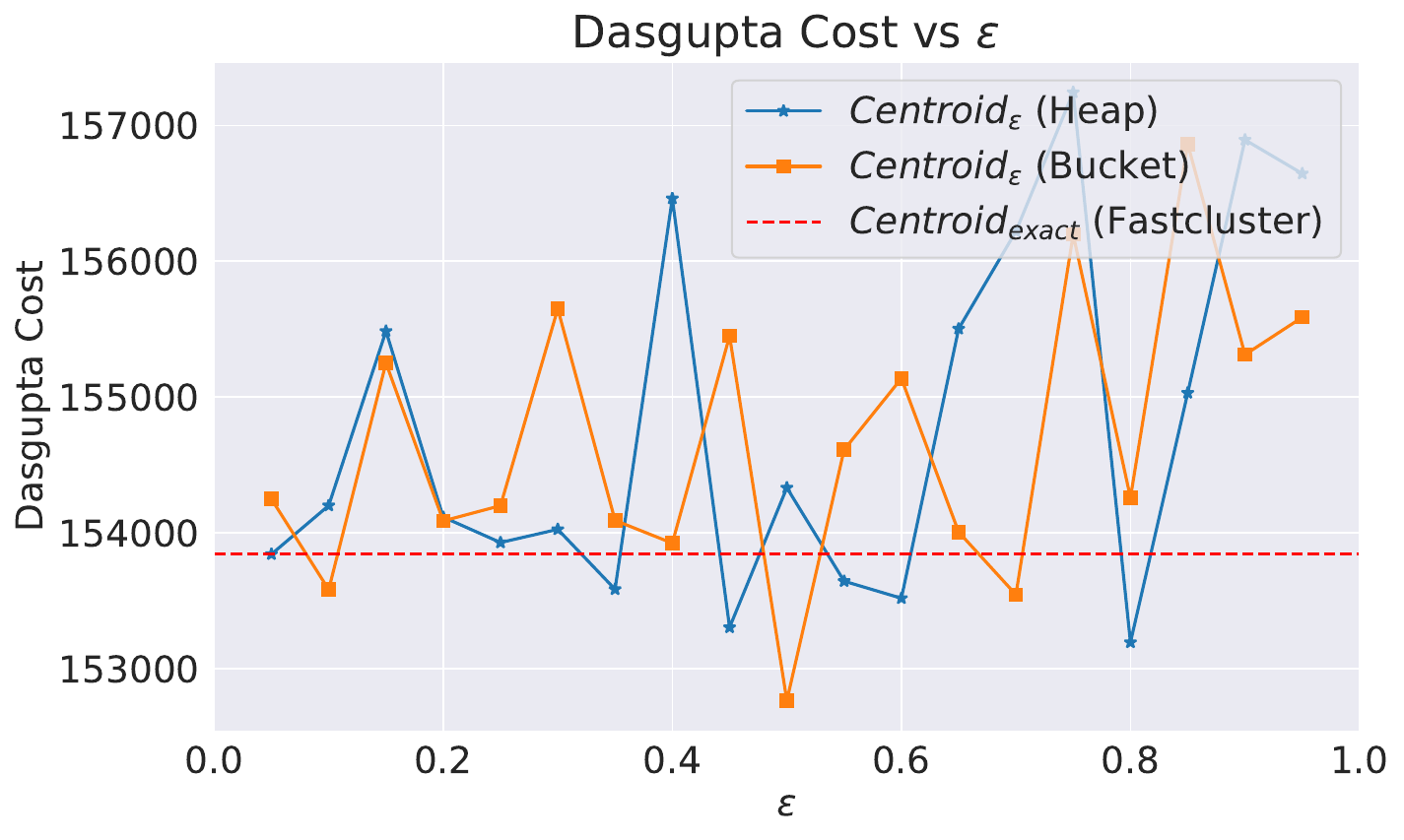}
    \end{subfigure}
    \caption{Quality Evaluations of the \texttt{cancer} dataset}
    \label{fig:plot_cancer_quals}
\end{figure}

\begin{figure}
    \centering
    \begin{subfigure}[b]{0.49\textwidth}
        \centering
        \includegraphics[width=\textwidth]{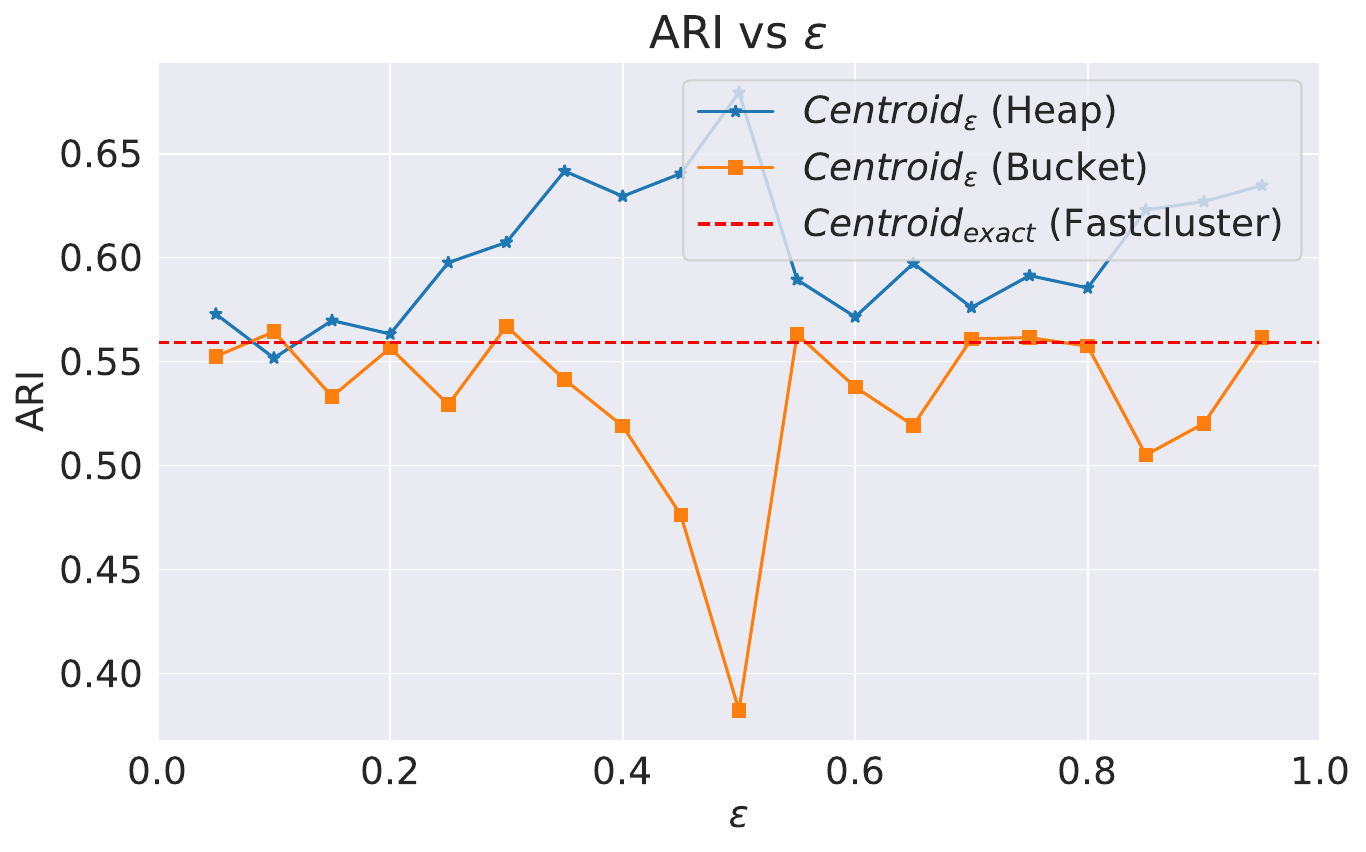}
    \end{subfigure}
    \hfill
    \begin{subfigure}[b]{0.49\textwidth}
        \centering
        \includegraphics[width=\textwidth]{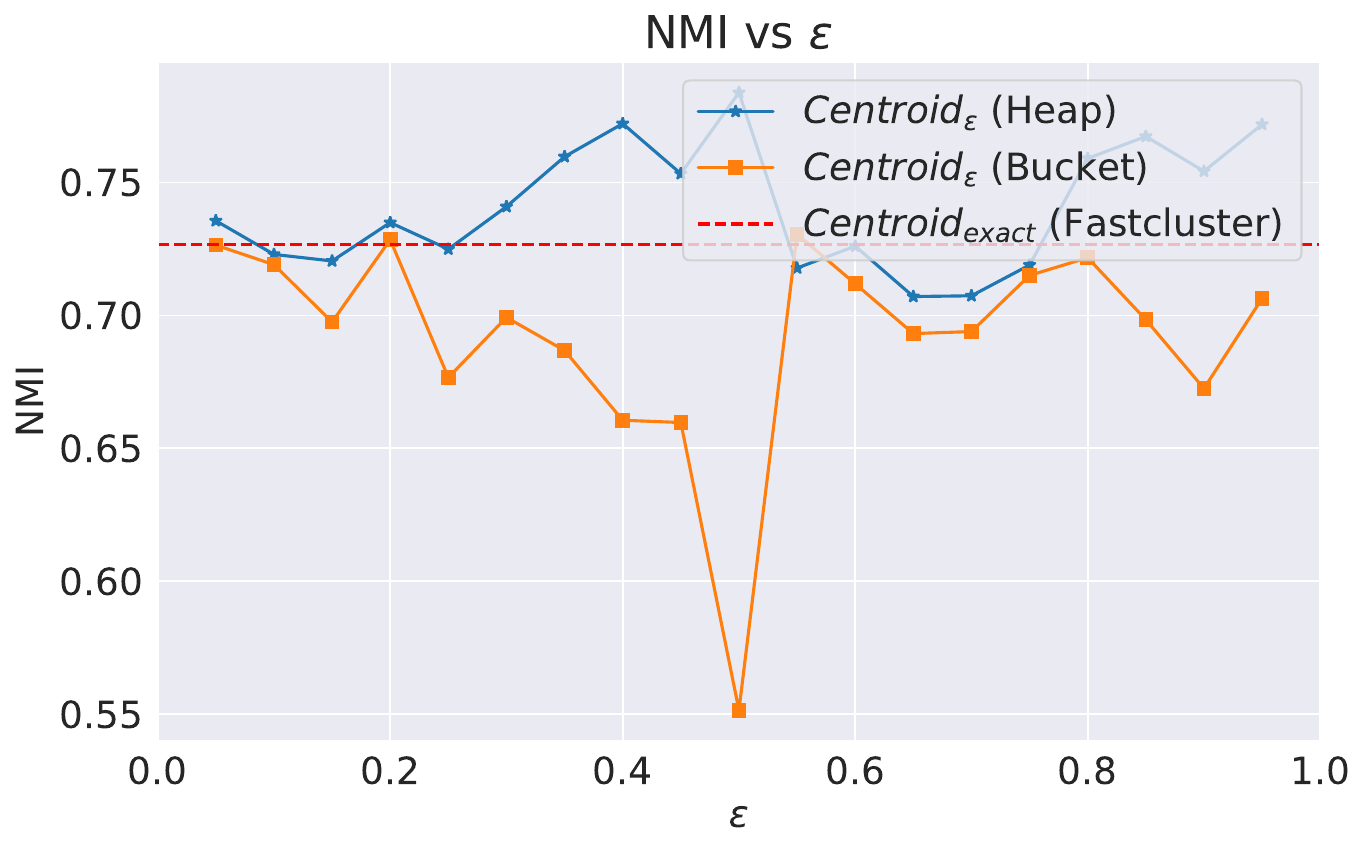}
    \end{subfigure}
    \hfill
    \begin{subfigure}[b]{0.49\textwidth}
        \centering
        \includegraphics[width=\textwidth]{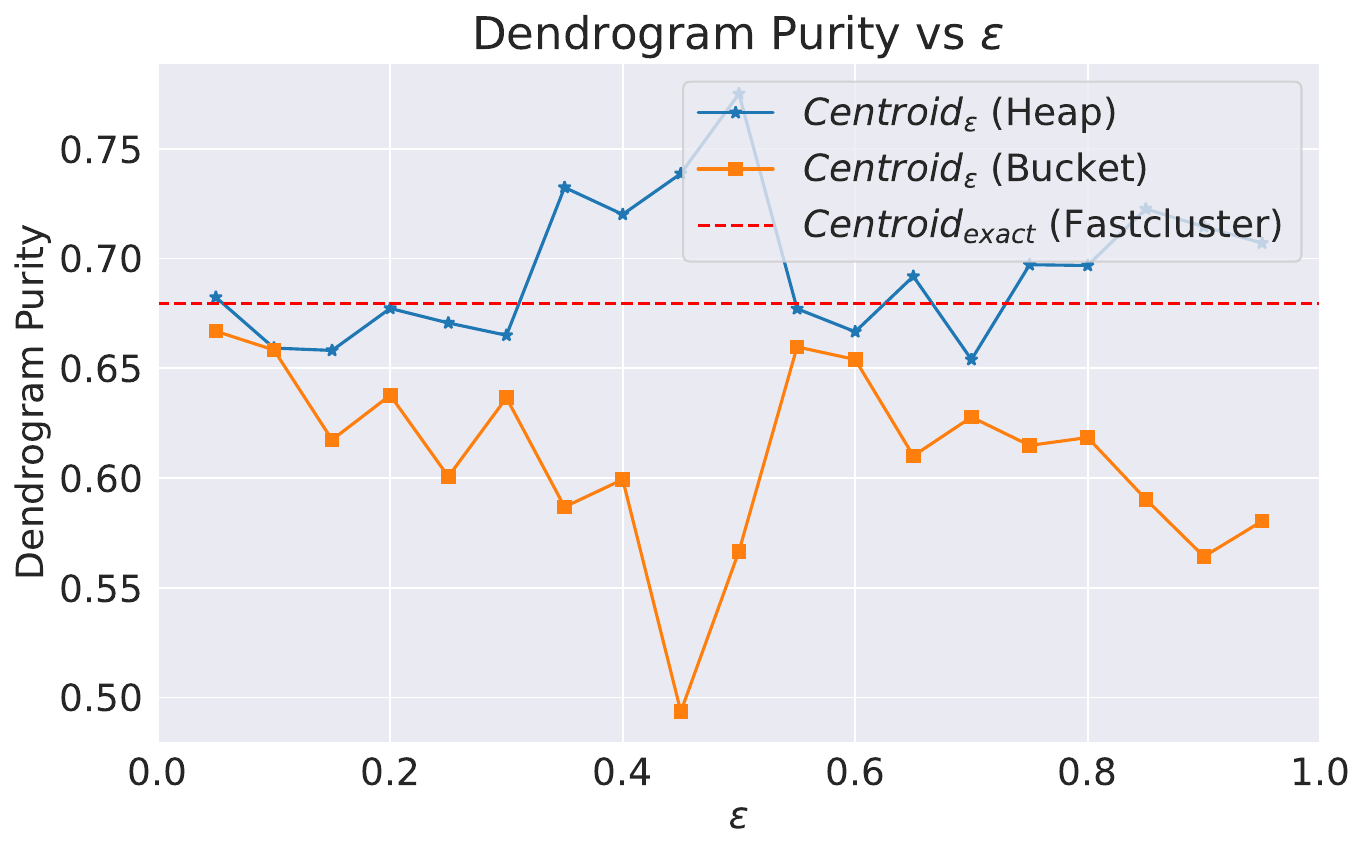}
    \end{subfigure}
    \hfill
    \begin{subfigure}[b]{0.49\textwidth}
        \centering
        \includegraphics[width=\textwidth]{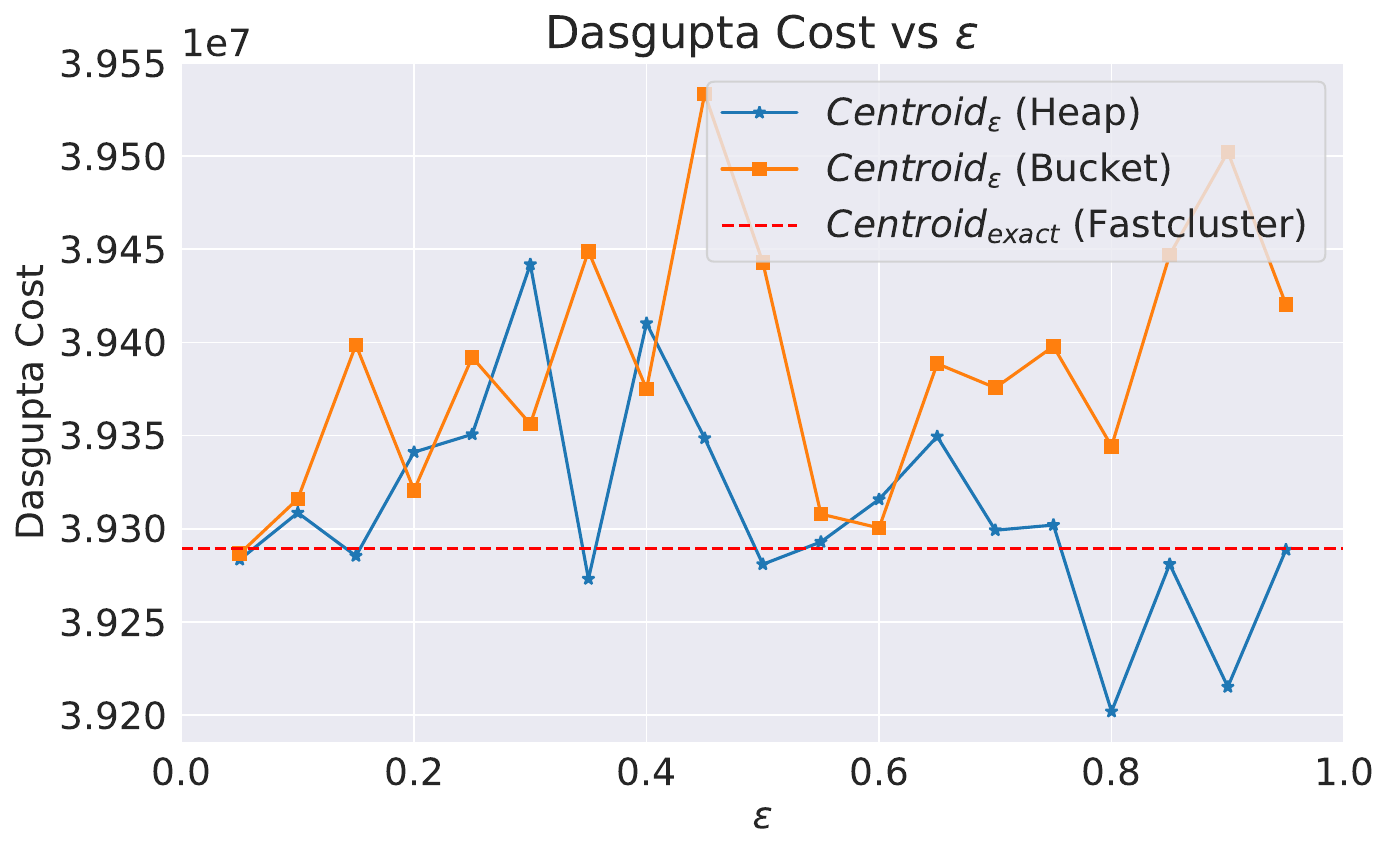}
    \end{subfigure}
    \caption{Quality Evaluations of the \texttt{digits} dataset}
    \label{fig:plot_digits_quals}
\end{figure}

\begin{figure}
    \centering
    \begin{subfigure}[b]{0.49\textwidth}
        \centering
        \includegraphics[width=\textwidth]{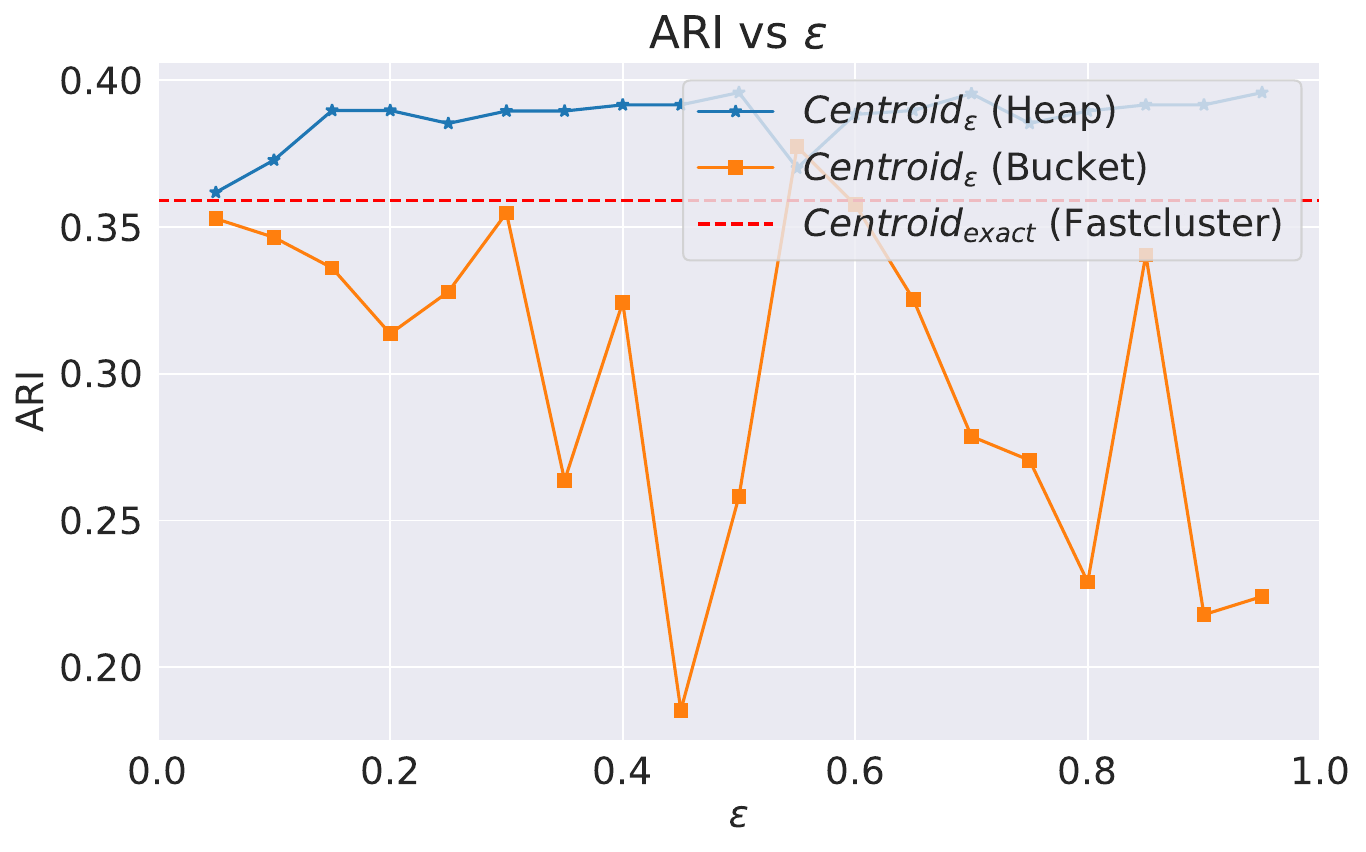}
    \end{subfigure}
    \hfill
    \begin{subfigure}[b]{0.49\textwidth}
        \centering
        \includegraphics[width=\textwidth]{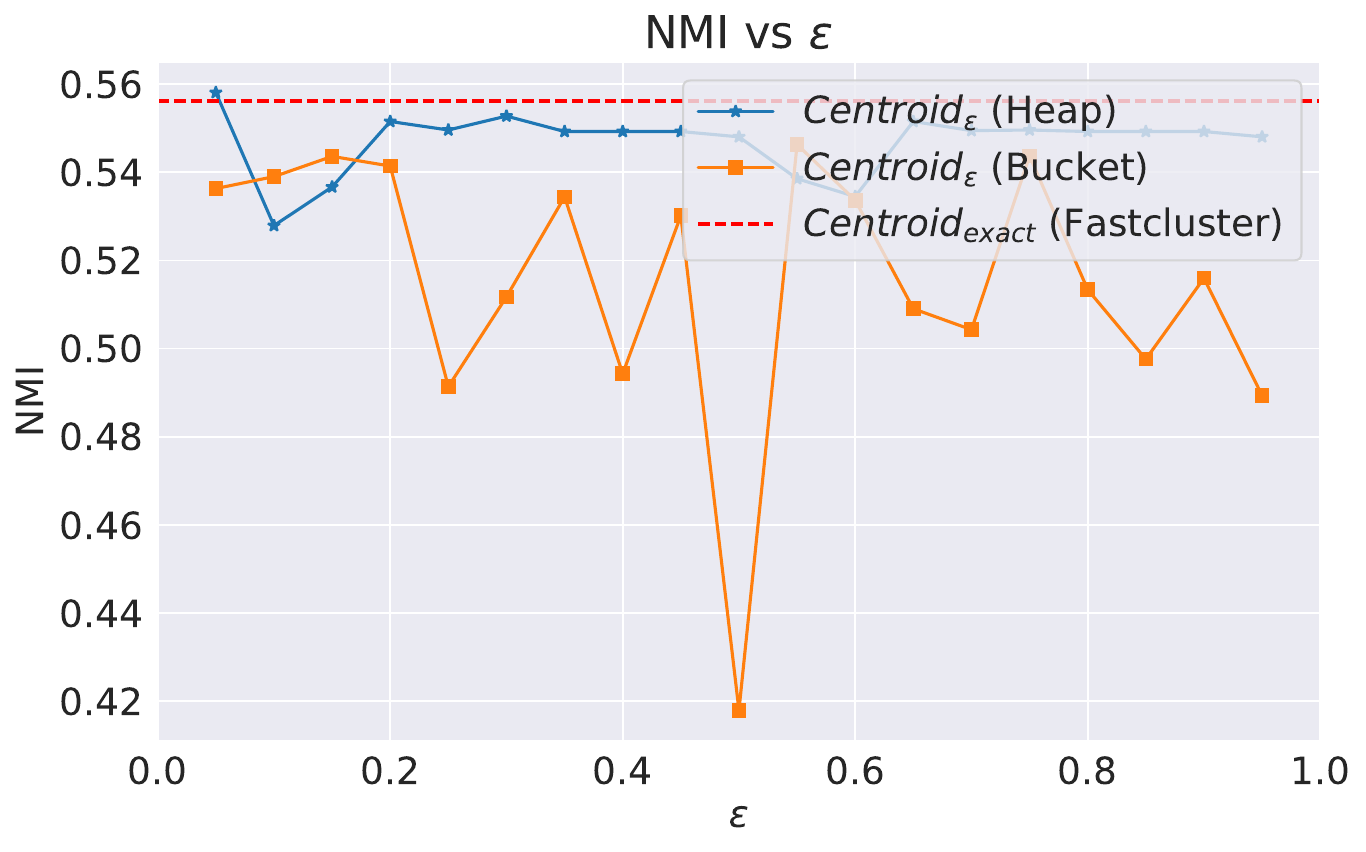}
    \end{subfigure}
    \hfill
    \begin{subfigure}[b]{0.49\textwidth}
        \centering
        \includegraphics[width=\textwidth]{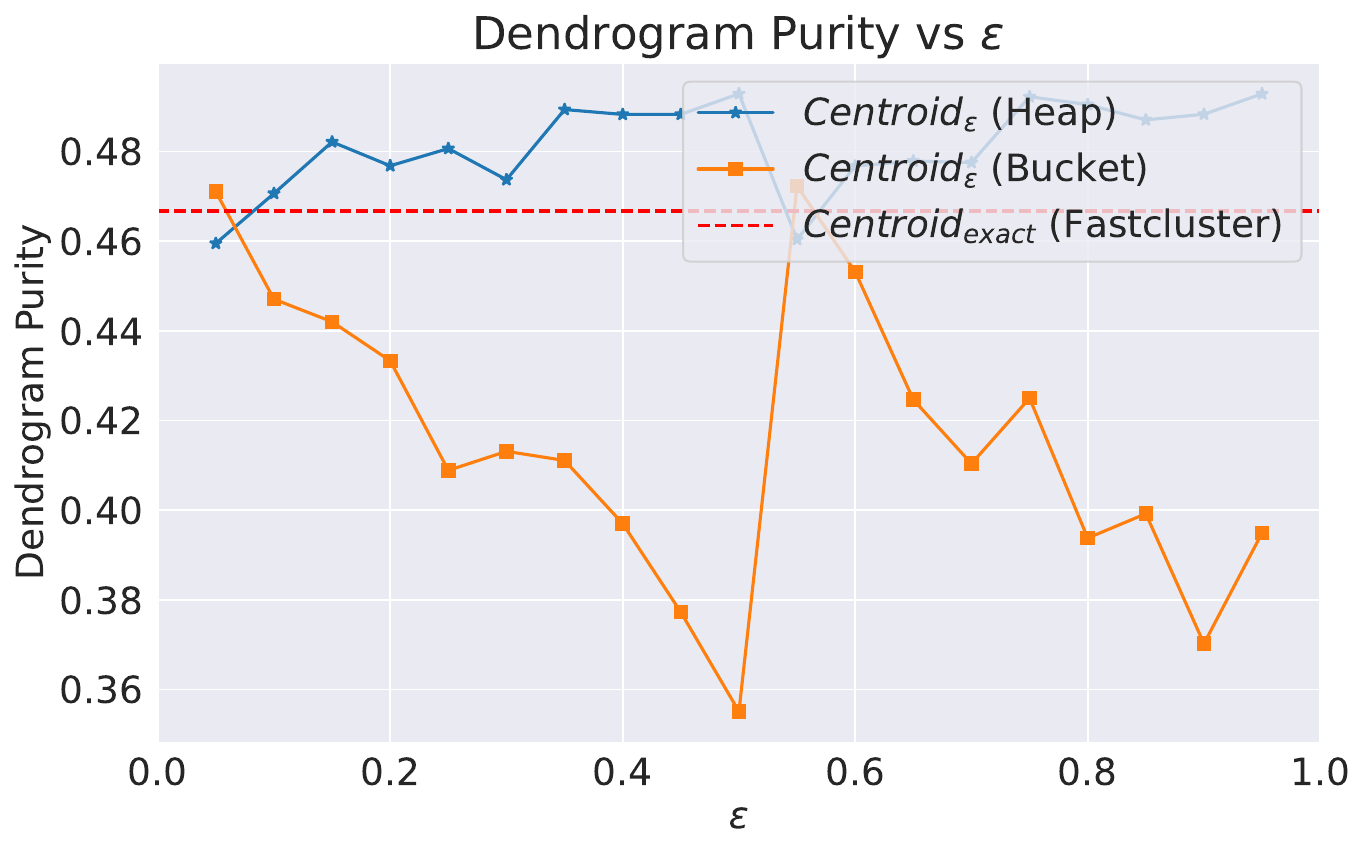}
    \end{subfigure}
    \hfill
    \begin{subfigure}[b]{0.49\textwidth}
        \centering
        \includegraphics[width=\textwidth]{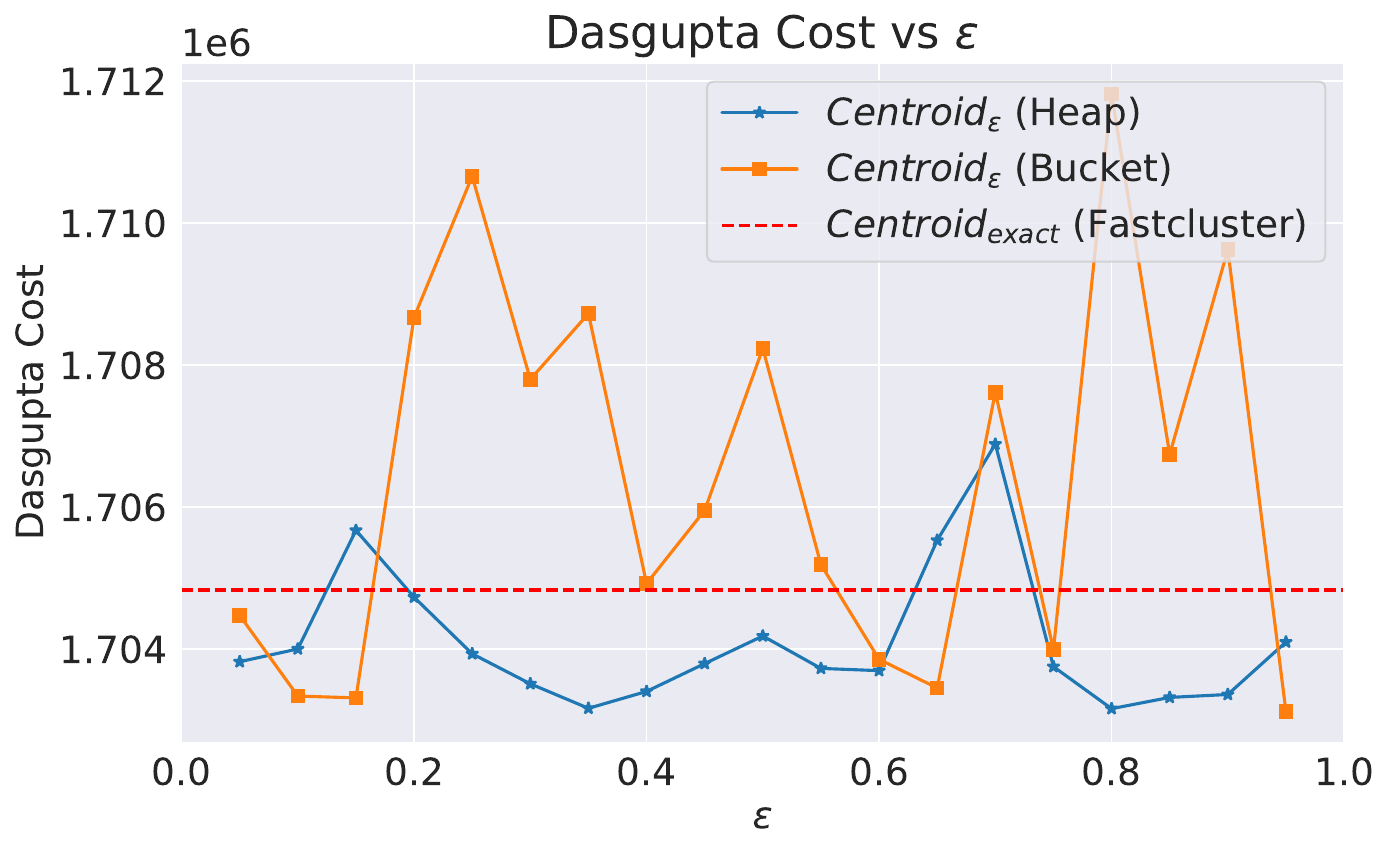}
    \end{subfigure}
    \caption{Quality Evaluations of the \texttt{faces} dataset}
    \label{fig:plot_faces_quals}
\end{figure}

\end{document}